\documentclass[11pt]{article}
\pdfoutput = 1

\usepackage[utf8]{inputenc}        
\usepackage[T1]{fontenc}
\usepackage{graphicx}			
\usepackage{color}				
\usepackage{amsmath, amssymb, amsthm}
\usepackage{thmtools,thm-restate}
\usepackage{tabularx}
\usepackage{multirow}
\usepackage{enumitem}
\usepackage[page, toc]{appendix}
\usepackage{tcolorbox}
\usepackage{authblk}
\usepackage{tikz}
\usepackage{adjustbox}
\usepackage{diagbox}

\usepackage{hyperref}

\usepackage{fullpage}

\newtheorem{theorem}{Theorem}[subsection]
\newtheorem{proposition}[theorem]{Proposition}
\newtheorem{conjecture}[theorem]{Conjecture}
\newtheorem{corollary}[theorem]{Corollary}
\newtheorem{lemma}[theorem]{Lemma}

\theoremstyle{definition}
\newtheorem{definition}[theorem]{Definition}
\newtheorem*{remark}{Remark}
\newtheorem*{example}{Example}

\renewcommand{\P}{\ensuremath{\mathcal P}}
\newcommand{\N}{\ensuremath{\mathcal N}}
\newcommand{\G}{\ensuremath{\mathcal G}}
\newcommand{\R}{\ensuremath{\mathcal R}}
\renewcommand{\iff}{\Leftrightarrow}

\providecommand{\keywords}[1]{\textbf{Keywords:} #1}

\newcommand{\swComp}{\circledcirc}
\newcommand{\defemph}{\textbf}
\newcommand{\Rthl}{\textsc{Zeruclid}}
\newcommand{\Subs}{\textsc{Subtraction}}
\DeclareMathOperator{\mex}{mex}

\definecolor{gameColor}{RGB}{255, 128, 0}
\definecolor{gameBack}{RGB}{255, 217, 179}
\definecolor{gameTitleBack}{RGB}{255, 245, 230}

\title{The switch operators and push-the-button games: a sequential compound over rulesets}
\date\today

\author[1]{Eric Duch\^ene}
\author[1]{Marc Heinrich}
\author[2]{Urban Larsson}
\author[1]{Aline Parreau}

\affil[1]{Université Lyon 1, LIRIS, UMR5205, France\thanks{Supported by the ANR-14-CE25-0006 project of the French National Research Agency and the CNRS PICS-07315 project.}}
\affil[2]{The Faculty of Industrial Engineering and Management,
Technion - Israel Institute of Technology,
Israel}

\newif\ifNB
\NBfalse

\begin{document}
%
%
%
%

\maketitle

\begin{abstract}
	We study operators that combine combinatorial games. This field was initiated by Sprague-Grundy (1930s), Milnor (1950s) and Berlekamp-Conway-Guy (1970-80s) via  the now classical disjunctive sum operator on (abstract) games. The new class consists in operators for \emph{rulesets}, dubbed the \emph{switch-operators}. The  ordered pair of rulesets $(\R_1, \R_2)$ is \emph{compatible} if, given any position in $\R_1$, there is a description of how to move in $\R_2$. Given compatible $(\R_1, \R_2)$, we build the \emph{push-the-button} game $\R_1 \swComp \R_2$, where players start by playing according to the rules $\R_1$, but at some point during play, one of the players must \emph{switch} the rules to $\R_2$, by pushing the button `$\swComp$'. Thus, the game ends according to the terminal condition of ruleset $\R_2$. We study the pairwise combinations of the classical  rulesets {\sc Nim}, {\sc Wythoff} and {\sc Euclid}. In addition, we prove that standard periodicity results for {\sc Subtraction games} transfer to this setting, and we give partial results for a variation of {\sc Domineering}, where $\R_1$ is the game where the players put the domino tiles horizontally and $\R_2$ the game where they play vertically (thus generalizing the octal game 0.07).
	
	\keywords{Combinatorial game; Ruleset compound; Nim; Wythoff Nim; Euclid's game;}
\end{abstract}

\section{Gallimaufry--new combinations of games}


Combinatorial Game Theory (CGT) concerns combinations of individual games. The most famous example is the {\em disjunctive game sum} operator, $G+H$, which, given two games $G$ and $H$, consists in playing in either $G$ or $H$ until both games are exhausted. Several variations of this compound were defined by Conway in his reference book~\cite{onag} and Berlekamp Conway Guy in \cite{winningWays}. One such example is the {\em ordinal sum} operator, which behaves like a disjunctive sum, with the additional constraint that any move in $G$ annihilates the game $H$.

More recently, the so-called {\em sequential compound} operator was introduced~\cite{seqComp}. It consists in playing successively the ordered pair of games $(G, H)$, where $H$ starts only when $G$ is exhausted. These constructions are in this sense static: the starting position of the second game is fixed until the first game is exhausted. In this work, we study dynamic compounds, where the starting position of the second game depends on the moves that were made in the first game. As a consequence, our construction requires the full rulesets of the two games to build the compound.


In order to define such compounds, we will say that an ordered pair of combinatorial game rulesets $(\R_1, \R_2)$ is \emph{compatible} if, given any position of $\R_1$, there is a description of how to move in $\R_2$. Note that we do not give a formal definition of this notion deliberately, as in the current context, we will only consider the case where $\R_1$ and $\R_2$ have the same set of positions. Typically, a pair of rulesets $(\R_1, \R_2)$ is compatible whenever there is a function mapping any position of $\R_1$ into a position of $\R_2$. In our case, this function is just the identity.
Given a compatible pair of rulesets $(\R_1, \R_2)$ and a specially designed \emph{switch procedure} `$\Rightarrow$' (it can be a game in itself or anything else that declares a shift of rules), players start the game $\R_1 \Rightarrow \R_2$ with the rules $\mathcal{R}_1$. At their turn, a player can choose, instead of playing according to $\mathcal{R}_1$, to play in the switch procedure, and when this procedure has terminated, the rules are switched to $\mathcal{R}_2$ (using the current game configuration). Thus, we call this class of operators {\em switch  operators} (or switch procedures).


In the current paper, the switch procedure is called the {\em push-the-button} operator (for short \emph{push operator}). It consists in a button that must be pushed once and only once, by either player. Pushing the button counts as a move, hence after a player pushes the button, his opponent makes the next move. The button can be pushed any time, even before playing any move in $\R_1$, or when there is no move left in either game. In particular, if there is no move left in $\R_1$ and the button has not been pushed, then it has to be pushed before playing in $\R_2$ (the only way to invoke $\R_2$ is via the push-the-button move). In all cases, the game always ends according to the rules of~$\mathcal{R}_2$.

We first recall some definitions and results about combinatorial game theory needed in this paper, and then give in Section~\ref{sec:2} a formal definition of the push operator.
In Section~\ref{sec:miniline}, we motivate this operator by the resolution of a particular game called \Rthl, which we demonstrate is a push compound of the classical games of two-heap {\sc Nim} \cite{nim} and {\sc Euclid} \cite{euclid}. Moreover we show that the second player's winning positions are similar to those of another classical game, {\sc Wythoff Nim} (a.k.a {\sc Wythoff}) \cite{wythoff}. Then, in Section~\ref{sec:heapGames} we continue by studying various push compounds of these three classical games. We also prove that standard periodicity results for
{\sc Subtraction games} transfer to this setting.
We finish off by studying in Section~\ref{sec:cram} a variation of {\sc Domineering}, where $\R_1$ is the game where the players put the domino tiles horizontally and $\R_2$ the game where they play vertically (thus generalizing the octal game 0.07).

\subsection{More overview}
Examples of popular traditional rulesets which can be considered as sequential compounds of several rulesets include {\sc Three Men's Morris}, {\sc Picaria} \cite{picaria} (first place then slide to neighbours) and {\sc Nine men's morris} (which is often played in three phases, first place, then slide and perhaps capture, and at last one of the players can move their pieces freely).

In addition, there has recently been scientific progress in building new games from combinations of the rules themselves. A nice illustration of this process is the game {\sc Clobbineering} \cite{Kyle}, whose rules are defined as a kind of union of the rules of the games {\sc Clobber} and {\sc Domineering}~\cite{Lessons}. Indeed, in {\sc Clobbineering}, the players choose either a {\sc Clobber} or a {\sc Domineering} move. Both types of moves are made on the same board.

One can also mention the more recent {\sc Building Nim}~\cite{building}, which behaves like {\sc Nine men's morris} (i.e., the starting position of $\R_2$ is a final position of $\R_1$). However, if all these games are fun to play with, and may sometimes be solved thanks to ad hoc techniques, there is no known work relating to a formal definition of a compound of compatible rules.

A similar construction called {\em conjoined ruleset} has been recently introduced in~\cite{RiMe} and used to create the games {\sc Go-Cut} and {\sc Sno-Go} which are compounds of the rulesets {\sc NoGo, Cut-Throat} and {\sc Snort}. In a conjoined ruleset, the players start playing according to a first ruleset until they reach a terminal position. Then, the game continues from the current position using the second ruleset. This case, as well as {\sc Building Nim}, can be viewed as examples of a switch operator where the switch procedure automatically changes the rules to the second ruleset when the players reach a terminal position.

Like in a conjoined ruleset, the push-the-button operator can be considered as a {\em sequential compound of two sets of rules}, in the sense that the two compound sets are considered successively while playing. The main difference between these two switch procedures is that in the push-the-button operator, the rules can be changed at any time during the game. Moreover, one great interest of the push-the-button operator is its correlation with games allowing pass moves. Indeed, such games can be considered as a special instance of this operator, in analogy with how classical sequential compounds generalize the mis\`ere convention in combinatorial game theory~\cite{seqComp}.


\subsection{Impartial games and their outcomes}
The fundament of a combinatorial game is the set of game positions; a `game board' together with some pieces (to place, move or remove etc). The description of how to manipulate the pieces on the game board is the ruleset.\footnote{Note that Definition~\ref{def:ruleset} deals with impartial rulesets, i.e., rulesets where both players always have the same available moves during the play. However, it can be extended to partizan (i.e., non-impartial) games by separating the ruleset into two parts, each corresponding to the moves available to one player.} 
\begin{definition}[Ruleset]\label{def:ruleset}
Given a set $\Omega$ of game positions, an impartial \defemph{ruleset} $\R$ over $\Omega$ is a function ${\R : \Omega \rightarrow 2^\Omega}$.
\end{definition}
The function $\R$ gives the set of move options from each position in $\Omega$. This ruleset can be extended to a function over sets of positions by saying that, for a set of positions $A \subset \Omega$, $\R(A) = \bigcup_{g \in A} \R(g)$.

Specifically, given a position $g\in \Omega$, the elements of $\R(g)\subset \Omega$ are called the {\em options} of $g$. Our study is carried out in the usual context of {\em short} games, i.e., games with a finite number of positions, and where no position is repeated during the play. The \emph{followers} of $g  \in\Omega $ are the positions in sets of the form $\R^i(g)$, where $\R^i = \R(\R^{i-1})$, for $i > 0$,  and $\R^0$ is the identity function. We say that a ruleset $\R$ is \emph{short}, if any game $g\in \Omega$ is short. That is, for any $g \in \Omega$, there is an $i$ such that $\R^i(g) = \emptyset $.

According to \cite{gameTheory} (Chapter I), a {\em game} is defined as the combination of a ruleset $\R$ over $\Omega$ with a specific starting position $g\in \Omega$ (but without an explicit definition of `ruleset'). The game will be denoted by $(g)_\R$ (or simply $g$ if the context is clear). 

\begin{example}
Consider the classical game of {\sc Nim}, which is played with heaps of finitely many tokens. At their turn, players can remove any number of tokens from one single heap (at least one and at most a whole heap). The game ends when there is no token left. Any position, with say $k$ heaps, can be represented by a $k$-tuple of non negative integers, $(n_1, \ldots, n_k)$, with the set of options
$$ \bigcup_{1 \leq j \leq k} \{(n_1, \ldots n_{j-1}, i, n_{j+1}, \ldots, n_k)\mid \ 0 \leq i < n_j \}.$$
It is easy to check that $\R=$ {\sc Nim} is a short ruleset.
\end{example}

In what follows, rulesets will be written with small capitals, or using the letters $\R, \R_0, \R_1, \ldots$ while $g, g', h$ are game positions. For a given ruleset, there are usually two main conventions to determine the winner:
\begin{itemize}
\item \defemph{normal} play: the first player with no move available loses the game.
\item \defemph{misere} play: the first player with no moves available wins the game (i.e., the last to play loses the game).
\end{itemize}

A short impartial game has two possible outcomes: either the first player has a strategy to ensure a win, or the second player has one. In normal (misere) play, we denote $o_\R(g)$ ($o^-_\R(g)$) the outcome of the game $(g)_\R$, with $\N$ if the first player has a winning strategy, and $\P$ otherwise (for \textbf Next player and \textbf Previous player respectively). In this paper, unless specified otherwise, we will consider games under the normal convention. A well-known property of impartial games yields necessary and sufficient conditions to characterize the outcomes of a game.
\begin{proposition}\label{prop:outcome}
  Given a game $(g)_\R$ we have:
  $$
  o_\R(g) = \left \{ \begin{array}{cl}
  \P& \mbox{if}\;\; \forall g'\in \R(g),\ o_\R(g') = \N  \\
  \N& \mbox{if}\;\; \exists g'\in \R(g),\  o_\R(g') = \P
  \end{array}\right.
  $$
  $$
  o^-_\R(g) = \left \{ \begin{array}{cl}
  \P& \mbox{if}\;\; \R(g) \neq \emptyset \ \mbox{ and } \ \forall g'\in \R(g),\ o^-_\R(g') = \N  \\
  \N& \mbox{if}\;\; \R(g) = \emptyset \  \mbox{ or } \ \exists g'\in \R(g),\ o^-_\R(g') = \P.
  \end{array}\right. 
  $$
\end{proposition}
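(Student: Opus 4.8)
The plan is to prove both equivalences by induction on the rank of the game $(g)_\R$, where the rank of $g$ is the least $i$ with $\R^i(g) = \emptyset$ (this exists and is finite because $\R$ is short). The two cases listed in each display are mutually exclusive and exhaustive, so it suffices to verify, for each side, that the stated condition is both necessary and sufficient for the corresponding outcome; equivalently, one shows the ``$\Leftarrow$'' direction for both lines, since together they cover all positions.

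First I would treat the normal-play statement. For the base case, $g$ has rank $1$, i.e.\ $\R(g) = \emptyset$: then the first player has no move and loses, so $o_\R(g) = \P$, which agrees with the first line (the universal quantifier over the empty set is vacuously true) and is incompatible with the second line (no $g'$ exists). For the inductive step, assume the characterization holds for every follower of $g$. If every option $g' \in \R(g)$ satisfies $o_\R(g') = \N$, then whatever move the first player makes, the resulting position is a first-player (i.e.\ opponent) win, so the second player wins $(g)_\R$ and $o_\R(g) = \P$. If instead some option $g' \in \R(g)$ has $o_\R(g') = \P$, the first player moves there and hands the opponent a previous-player-win position, so $o_\R(g) = \N$. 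Since a nonempty $\R(g)$ either has all options $\N$ or has at least one $\P$ option, these two cases are exhaustive, completing the induction.

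The misere-play statement is handled the same way, the only change being the terminal condition: when $\R(g) = \emptyset$, the player to move wins, so $o^-_\R(g) = \N$, matching the second line. The inductive step is verbatim the normal-play argument applied to $o^-$, with the extra disjunct $\R(g) = \emptyset$ in the second line absorbing the new base case, and the clause $\R(g)\neq\emptyset$ in the first line ruling it out there.

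Honestly there is no real obstacle here: this is the standard backward-induction argument for impartial games, and the only point requiring a little care is making the induction well-founded, which is exactly what shortness of $\R$ provides (every strictly decreasing chain of followers terminates, so $\max\{i : \R^i(g)\neq\emptyset\}$ is a legitimate induction parameter). One should also note explicitly that the two right-hand conditions on each line are negations of each other, so the function is well defined.
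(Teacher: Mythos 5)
Your proof is correct: the backward induction on the rank of $g$ (well-founded because $\R$ is short), together with the observation that the two right-hand conditions on each line are mutually exclusive and exhaustive so that proving both ``if'' directions suffices, is exactly the standard argument. The paper itself states Proposition~\ref{prop:outcome} without proof as a well-known property of short impartial games, so your write-up simply supplies the canonical justification the authors omitted.
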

We will use the following notation to define the \P-positions of a given ruleset:
\begin{definition}[\P-positions]\label{def:ppos}
Let $\R$ be a ruleset over $\Omega$. The set of \P-positions for $\R$, in normal and misere play, are the sets
$$ P = \{g \in \Omega \mid o_{\R}(g) = \P\}$$
and
$$ P^- = \{g \in \Omega \mid o^-_{\R}(g) = \P\}$$
respectively.
\end{definition}

\subsection{Disjunctive sums and Sprague-Grundy theory}
The disjunctive sum of two games $(g_1)_{\R_1}$ and $(g_2)_{\R_2}$ consists in putting the two games side by side, and at each turn a player makes a move either in $g_1$ or in $g_2$. In our approach, since we focus on rulesets rather than games, we slightly reformulate the standard definition (content remains the same).

\begin{definition}[Disjunctive sum]
	\label{def:sum}
	Consider two rulesets $\R_1$ over $\Omega_1$ and $\R_2$ over $\Omega_2$. Their \emph{disjunctive sum}, denoted by $\R_1 + \R_2$ is the ruleset defined over $\Omega_1 \times \Omega_2$ such that, for all $(g_1, g_2)\in \Omega_1\times \Omega_2$,
	$$ (\R_1 + \R_2)(g_1,g_2) = \{(g'_1,g_2), \ g'_1 \in \R_1(g_1)\} \cup \{(g_1,g'_2), \ g'_2 \in \R_2(g_2) \}.$$
	Analogously, the disjunctive sum of a pair of given games, over given rulesets, is denoted $(g_1)_{\R_1} + (g_2)_{\R_2}$.
\end{definition}
This notion of disjunctive sum arises quite naturally for several rulesets. For example, the game of {\sc Nim} played on a position with $k$ heaps can be seen as a disjunctive sum of $k$ different $1$-heap {\sc Nim} rulesets. From this construction, a very natural question arises: what can we say on the outcome of $(g_1)_{\R_1} + (g_2)_{\R_2}$ by studying the two games separately? An answer to this question is obtained by refining the notion of outcome by assigning to a game a numeric value called the {\em Sprague-Grundy value}~\cite{grundy, sprague} (for short {\em value}):

\begin{definition}[Value]
Given a game $(g)_\R$, its value $\G_\R(g)$ is a non negative integer (coded in binary) defined by: $$\G_\R(g) = \mex\{\G_\R(g'), g' \in \R(g)\}$$ where $\mex(S)$ (for Minimum EXcluded value) is the smallest non negative integer not in the set $S$.
\end{definition}

\begin{proposition}[\cite{grundy, sprague}]
	\label{prop:grundy}
	Consider games $(g)_\R$, $(g_1)_{\R_1}$ and $(g_2)_{\R_2}$. Then
	\begin{itemize}
		\item the game $(g)_\R$ has outcome $\P$ if and only if its value is $0$.
		\item the value of $(g_1)_{\R_1} + (g_2)_{\R_2}$ is $\G_{\R_1}(g_1) \oplus \G_{\R_2}(g_2)$ where $\oplus$ is the bitwise XOR operator.
	\end{itemize}
\end{proposition}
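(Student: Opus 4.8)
The plan is to prove both bullets together by structural induction along the followers of the game(s), which is a legitimate induction precisely because the rulesets in question are \emph{short} (well-founded).

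For the first bullet I would induct on $g$, with the terminal position (value $0$, outcome $\P$ in normal play) as base case. If $\G_\R(g) = 0$, the definition of $\mex$ forces $\G_\R(g') \neq 0$ for every option $g' \in \R(g)$; by the induction hypothesis each such $g'$ has outcome $\N$, so Proposition~\ref{prop:outcome} gives $o_\R(g) = \P$. Conversely, if $\G_\R(g) \neq 0$ then $0$ must belong to $\{\G_\R(g') : g' \in \R(g)\}$ (otherwise the $\mex$ would be $0$), so some option has value $0$ and hence, by induction, outcome $\P$; again Proposition~\ref{prop:outcome} yields $o_\R(g) = \N$.

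For the second bullet, write $a = \G_{\R_1}(g_1)$ and $b = \G_{\R_2}(g_2)$, and induct on the pair $(g_1, g_2)$; note that $\R_1 + \R_2$ is short because any move strictly decreases the rank of one component, so the induction is well-founded. By Definition~\ref{def:sum} the options of $(g_1,g_2)$ are the $(g_1', g_2)$ with $g_1' \in \R_1(g_1)$ and the $(g_1, g_2')$ with $g_2' \in \R_2(g_2)$; by the induction hypothesis their values are $\G_{\R_1}(g_1') \oplus b$ and $a \oplus \G_{\R_2}(g_2')$ respectively. It then suffices to verify the two $\mex$ conditions for the set of option values. First, no option attains $a \oplus b$: if $\G_{\R_1}(g_1') \oplus b = a \oplus b$ then $\G_{\R_1}(g_1') = a = \G_{\R_1}(g_1)$, impossible since $\mex$ excludes the value of the parent position; the argument for options of the second type is symmetric. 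Second, every $c < a \oplus b$ is realised: put $d = a \oplus b \oplus c \neq 0$ and let $k$ be the index of the leading bit of $d$. Since $c < a \oplus b$, bit $k$ of $a \oplus b$ equals $1$, so it equals $1$ in exactly one of $a$, $b$, say in $a$; then $a \oplus d < a$, so there is $g_1' \in \R_1(g_1)$ with $\G_{\R_1}(g_1') = a \oplus d$, and the option $(g_1', g_2)$ has value $(a \oplus d) \oplus b = c$. Hence the $\mex$ of the option values is exactly $a \oplus b = \G_{\R_1}(g_1) \oplus \G_{\R_2}(g_2)$, as claimed.

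I expect the only genuinely delicate point to be this last step — the ``leading bit of $d$'' argument showing that every value strictly below $a \oplus b$ is attainable by a single move in one of the two components — together with the (routine but necessary) observation that the induction on $\R_1 + \R_2$ is well-founded. Everything else is a straightforward unwinding of the definitions of $\mex$, of outcome via Proposition~\ref{prop:outcome}, and of disjunctive sum via Definition~\ref{def:sum}.
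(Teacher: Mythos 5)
The paper does not prove this proposition at all --- it is stated as the classical Sprague--Grundy theorem with a citation to \cite{grundy, sprague} --- so there is no internal proof to compare against. Your argument is the standard textbook proof and it is correct: the induction is well-founded because the rulesets are short, the first bullet follows cleanly from the $\mex$ definition together with Proposition~\ref{prop:outcome}, and both halves of the $\mex$ verification for the sum (that $a \oplus b$ is excluded because $\mex$ forbids an option of $g_1$ from having value $a$, and that every $c < a \oplus b$ is realised via the leading-bit-of-$d$ argument) are exactly right.
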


Values are a refinement of the notion of outcome, since the \P-positions of a ruleset are exactly the games with value $0$, and the outcome of a sum of games can be easily computed from the values of each game separately.

When the ruleset is clear from the context, we might omit the subscript in our notations, and write simply $o(g)$ and $\G(g)$ to denote respectively the outcome and the value of the game $(g)_\R$.

\section{The switch compound: push-the-button}\label{sec:2}
Given two rulesets $\R_1$ and $\R_2$ over the same set of positions $\Omega$, clearly both orders of the pair of rulesets are compatible. Unless otherwise specified, players start the game according to $\R_1$. At their turn, a player can decide, instead of playing according to $\R_1$, to push a single allocated button, which switches the rules to the second ruleset $\R_2$. The button must be pushed exactly once, at some point during play, by either player. It can be pushed before any move has been made, at some intermediate stage, or even if there is no other move available. We call this operator, the \emph{push operator}, denoted by `$\swComp$'. By pushing the button, the current position, which is a position of the first ruleset, becomes the starting position for the second ruleset.

\begin{definition}[The push operator]\label{def:PB}
Let $\R_1$ and $\R_2$ be two rulesets over the same set of positions~$\Omega$. The push-the-button ruleset (for short \emph{push ruleset}) $\R_1 \swComp \R_2$ is defined over $\{ 1,2\} \times \Omega$ by:

$$ (\R_1 \swComp \R_2) :
\begin{array}{lcc}
(1,g) & \longmapsto & \{(1,g'), g' \in \R_1(g) \} \cup (2,g) \\
(2,g) & \longmapsto & \{ (2,g'), g' \in \R_2(g)\}.
\end{array} $$
\end{definition}

Once the button is pressed, the game is played according to the rules $\R_2$. In other words, we have $(2,g)_{\R_1 \swComp \R_2} = (g)_{\R_2}$. Consequently, when the ruleset is of the form $\R_1 \swComp \R_2$, the interesting positions are those of the form $(1,g)$ for $g \in \Omega$. Thus, to simplify notation, we write $(g)_{\R_1 \swComp \R_2}$ instead of $(1,g)_{\R_1 \swComp \R_2}$.

\subsection{General properties of the push compound}
\label{subsec:pbprops}

We now state a few general properties of the push operator. The first one concerns the equivalence between the push-compound of a ruleset with itself and games including a (compulsory) pass move, that is a {\sc Nim} heap of size 1.

\begin{lemma} Let $\R$ be a ruleset over a set of positions $\Omega$. We have:
$$ \forall g \in \Omega, \ (g)_{\R \swComp \R} = (g)_{\R} + \ast,$$
where $\ast$ is a single {\sc Nim} heap of size $1$.
\end{lemma}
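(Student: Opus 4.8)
The plan is to prove this equality of games in the strong sense already used for the identity $(2,g)_{\R_1\swComp\R_2} = (g)_{\R_2}$: by exhibiting an isomorphism between the game trees of $\R \swComp \R$ and of $\R + \ast$, i.e.\ a bijection between the two sets of followers (including the starting positions) that commutes with the option maps. Here I read $\ast$ as the game $(1)_{\mathsf N}$, where $\mathsf N$ is the one-heap {\sc Nim} ruleset over $\{0,1\}$ given by $\mathsf N(1) = \{0\}$ and $\mathsf N(0) = \emptyset$; thus $(g)_\R + \ast = (g,1)_{\R + \mathsf N}$, a game over $\Omega \times \{0,1\}$.

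First I would define $\phi : \{1,2\}\times\Omega \to \Omega\times\{0,1\}$ by $\phi(1,h) = (h,1)$ and $\phi(2,h) = (h,0)$. This is visibly a bijection, and it sends the starting position $(1,g)$ to $(g,1)$, as required. The core of the argument is then to check that $\phi$ carries options to options. For the ``button not yet pushed'' positions, Definition~\ref{def:PB} gives $(\R\swComp\R)(1,h) = \{(1,h') : h'\in\R(h)\}\cup\{(2,h)\}$, whose $\phi$-image is $\{(h',1) : h'\in\R(h)\}\cup\{(h,0)\}$; and Definition~\ref{def:sum} together with $\mathsf N(1) = \{0\}$ gives $(\R+\mathsf N)(h,1) = \{(h',1):h'\in\R(h)\}\cup\{(h,0)\}$, the same set. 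For the ``button pushed'' positions, $(\R\swComp\R)(2,h) = \{(2,h') : h'\in\R(h)\}$ has $\phi$-image $\{(h',0):h'\in\R(h)\}$, which is exactly $(\R+\mathsf N)(h,0)$ since $\mathsf N(0)=\emptyset$. Hence $\phi$ is a ruleset isomorphism respecting starting positions, and the claimed equality follows.

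I do not expect a genuine obstacle: the statement reduces to a direct unravelling of the definitions of $\swComp$, of disjunctive sum, and of $\ast$. The only point worth stating with care is the conceptual reason it works --- the unique, mandatory button-push in $\R\swComp\R$ corresponds precisely to the unique available move in the $\ast$-component of $\R + \ast$, and terminal positions match up on both sides --- after which the two cases of the option-set computation are routine. (An alternative, slightly weaker route would be an induction on the length of $g$ comparing outcomes via Proposition~\ref{prop:outcome}, or comparing Grundy values, but the isomorphism is just as easy to establish and yields more.)
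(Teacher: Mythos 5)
Your proof is correct and follows essentially the same route as the paper, which simply notes that combining Definitions~\ref{def:PB} and~\ref{def:sum} shows the button-push mimics the single move in $\ast$; your explicit bijection $\phi$ is just a careful unpacking of that one-line argument. No issues.
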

\begin{proof}
We combine Definitions~\ref{def:PB} and \ref{def:sum}. Since pushing the button does not change the rules, it mimics $\ast$ in disjunctive sum with any game $g\in \Omega$.
\end{proof}
The following characterization of the \P-positions of push rulesets is the natural extension of the normal play part of Proposition~\ref{prop:outcome}:
\begin{proposition}
\label{prop:PposPush}
Given a push ruleset $\R_1 \swComp \R_2$ over a set $\Omega$, and a position $g \in \Omega$, we have:
$$ o_{\R_1 \swComp \R_2}(g) = \P \iff o_{\R_2}(g) = \N \text{ and }\ \forall g' \in \R_1(g), \ o_{\R_1 \swComp \R_2}(g')= \N .$$
\end{proposition}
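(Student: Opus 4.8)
The plan is to derive this directly from the normal-play half of Proposition~\ref{prop:outcome}, once the options of a position in $\R_1 \swComp \R_2$ are written out explicitly. First I would recall the two notational conventions established right after Definition~\ref{def:PB}: the shorthand $(g)_{\R_1 \swComp \R_2}$ stands for $(1,g)_{\R_1 \swComp \R_2}$, and $(2,g)_{\R_1 \swComp \R_2} = (g)_{\R_2}$, so in particular $o_{\R_1 \swComp \R_2}(2,g) = o_{\R_2}(g)$. By Definition~\ref{def:PB}, the option set of $(1,g)$ is
$$ (\R_1 \swComp \R_2)(1,g) = \{(1,g') : g' \in \R_1(g)\} \cup \{(2,g)\}. $$

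Next I would apply the normal-play clause of Proposition~\ref{prop:outcome} to the position $(1,g)$: it has outcome $\P$ if and only if every one of its options has outcome $\N$. Splitting the option set above into the two pieces, ``every option is $\N$'' is equivalent to the conjunction of (i) $o_{\R_1 \swComp \R_2}(1,g') = \N$ for every $g' \in \R_1(g)$, and (ii) $o_{\R_1 \swComp \R_2}(2,g) = \N$. Rewriting (i) and (ii) with the conventions above turns this conjunction into exactly ``$o_{\R_2}(g) = \N$ and $\forall g' \in \R_1(g),\ o_{\R_1 \swComp \R_2}(g') = \N$'', which is the claimed characterization. Since Proposition~\ref{prop:outcome} is an ``if and only if'', both directions of the equivalence come for free.

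The only genuine point to check — and the closest thing to an obstacle — is that outcomes are well defined for $\R_1 \swComp \R_2$, i.e.\ that this ruleset is short whenever $\R_1$ and $\R_2$ are: any play consists of a (finite, since $\R_1$ is short) initial segment of $\R_1$-moves, a single button push, and a (finite, since $\R_2$ is short) segment of $\R_2$-moves, so no position is repeated and play terminates; hence Proposition~\ref{prop:outcome} legitimately applies. Beyond that, no induction is needed — the statement is a one-step unfolding of the definition. If one prefers, one can equally phrase the whole argument as an induction on the rank of $g$ (the length of the longest $\R_1$-followed-by-$\R_2$ play from $g$), but it is not required here.
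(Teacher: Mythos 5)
Your proof is correct and follows essentially the same route as the paper's: both unfold the option set of $(1,g)$ from Definition~\ref{def:PB} into the button-push option $(2,g)$ and the $\R_1$-options, then apply the normal-play clause of Proposition~\ref{prop:outcome}. Your additional remark that the compound ruleset is short is a reasonable (if tacit in the paper) sanity check, but otherwise the argument is the same.
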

\begin{proof}
The moves from $(g)_{\R_1 \swComp \R_2}$ are either to $(g)_{\R_2}$ if the button is pushed, or to $(g')_{\R_1 \swComp \R_2}$ if the move is played according to $\R_1$.
\end{proof}
We have the following reformulation of Proposition \ref{prop:PposPush}. It will be used further to characterize the \P-positions of several push rulesets.

\begin{corollary}
\label{cor:PposPush}
Let $\R_1 \swComp \R_2$ be a push ruleset over $\Omega$. A subset $P\subset\Omega$ is the set of \P-positions of $\R_1 \swComp \R_2$ if and only if:
\begin{enumerate}[label=(\roman{enumi})]
\item $\forall g \in P, \ o_{\R_2}(g) = \N,$
\item $\forall g \in P$, $\R_1(g)\subset \Omega\setminus P$,
\item $\forall g \in \Omega \setminus P$, either $o_{\R_2}(g) = \P$ or $\R_1(g)\cap P\neq \emptyset$.
\end{enumerate}
\end{corollary}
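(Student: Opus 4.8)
The plan is to derive the corollary directly from Proposition~\ref{prop:PposPush} by first rephrasing conditions (i)--(iii) as a single biconditional, and then invoking an induction on the $\R_1$-rank of positions to pin down the unique set satisfying it.

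First I would observe that, by Proposition~\ref{prop:PposPush}, the true set of \P-positions, call it $P^\star = \{g \in \Omega : o_{\R_1 \swComp \R_2}(g) = \P\}$, satisfies
$$ g \in P^\star \iff \bigl(o_{\R_2}(g) = \N \ \text{ and } \ \R_1(g) \cap P^\star = \emptyset\bigr) \qquad (\star)$$
for every $g \in \Omega$ (note that ``$\forall g' \in \R_1(g),\ o_{\R_1\swComp\R_2}(g') = \N$'' is the same as $\R_1(g) \cap P^\star = \emptyset$). Then I would check that, for an arbitrary subset $P \subseteq \Omega$, the conjunction of (i), (ii) and (iii) is exactly equivalent to $P$ satisfying $(\star)$ in place of $P^\star$: conditions (i) and (ii) together give the implication ``$g \in P \Rightarrow o_{\R_2}(g) = \N$ and $\R_1(g) \cap P = \emptyset$'', while (iii) is the contrapositive of the converse implication ``$o_{\R_2}(g) = \N$ and $\R_1(g)\cap P = \emptyset \Rightarrow g \in P$''. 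Hence the corollary reduces to the claim that $P^\star$ is the \emph{unique} subset of $\Omega$ satisfying $(\star)$.

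For uniqueness I would exploit that $\R_1$ is short: define the rank $\operatorname{rk}(g)$ of $g \in \Omega$ as the length of a longest $\R_1$-play starting at $g$ (finite and well defined since $\R_1$ is short and no position repeats), and prove by induction on $\operatorname{rk}(g)$ that $g \in P \iff g \in P^\star$ for any $P$ satisfying $(\star)$. In the base case $\R_1(g) = \emptyset$, both sides of $(\star)$ collapse to $o_{\R_2}(g) = \N$, so $g \in P \iff g \in P^\star$. In the inductive step, every $g' \in \R_1(g)$ has strictly smaller rank, so the induction hypothesis gives $\R_1(g) \cap P = \R_1(g) \cap P^\star$; substituting this into $(\star)$ for $P$ and comparing with $(\star)$ for $P^\star$ yields $g \in P \iff g \in P^\star$, hence $P = P^\star$.

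I do not expect a genuine obstacle here; the only points requiring care are the bookkeeping that (i)--(iii) really amount to the two implications in $(\star)$ (in particular that it is the $\R_1$-options, not the $\R_1 \swComp \R_2$-options, that matter, since $o_{\R_2}(g)$ is an externally fixed quantity), and the observation that the correct induction parameter is the $\R_1$-rank of $g$ rather than the rank of $(1,g)$ in the compound --- this is precisely what lets the induction close despite the self-reference in $(\star)$.
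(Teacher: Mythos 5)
Your proof is correct and follows the only natural route: it recasts (i)--(iii) as the fixed-point condition from Proposition~\ref{prop:PposPush} and establishes uniqueness by induction on the $\R_1$-rank, which is exactly the reasoning the paper leaves implicit when it dismisses the corollary as obvious from Definition~\ref{def:ppos} and Proposition~\ref{prop:outcome}. Your remark that the induction must run on the $\R_1$-rank of $g$ (the button move leading to a subgame whose outcome is independent of $P$) is the one point of care, and you handle it correctly.
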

\begin{proof}
Obvious, according to Definition~\ref{def:ppos} and Proposition~\ref{prop:outcome}.
\end{proof}
In some cases, the \P-positions of $\R_1 \swComp \R_2$ can be determined directly from the \P-positions of $\R_1$. Indeed, adding the possibility to change the rules can be seen as a way to disrupt the game $\R_1$. The following results give sufficient conditions for which such a disruption preserves the \P-positions of $\R_1$.

\begin{proposition}
\label{prop:push}
Let $\R_1$ and $\R_2$ be  two rulesets over $\Omega$ and let $g$ in $\Omega$ be a game position. We denote by $P_1$, $P_2$ and $P_1^-$ the set of $\P$-positions for respectively $\R_1$ and $\R_2$ under normal convention, and $\R_1$ under misere convention.

The ruleset $\R_1 \swComp \R_2$ satisfies the following properties:

\begin{enumerate}[label=(\roman{enumi})]
\item If $o_{\R_2}(g) = \P$, then $o_{\R_1 \swComp \R_2}(g) = \N$.
\item If $P_1 \cap P_2 = \emptyset$, then $o_{\R_1 \swComp \R_2}(g) = \P \iff g \in P_1$.
\item If $P_1^- \cap P_2 = \emptyset$ and  $\R_1(g) = \emptyset$ implies $g \in P_2$, then $o_{\R_1 \swComp \R_2}(g) = \P \iff g \in P_1^-$.
\end{enumerate}
\end{proposition}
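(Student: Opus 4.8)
The plan is to verify each of the three statements using the characterization of $\P$-positions given in Corollary~\ref{cor:PposPush}, checking its three conditions in turn. For item (i), suppose $o_{\R_2}(g) = \P$. Then pushing the button immediately is a move from $(g)_{\R_1 \swComp \R_2}$ to the position $(g)_{\R_2}$, which has outcome $\P$; hence by Proposition~\ref{prop:outcome} (equivalently Proposition~\ref{prop:PposPush}), the first player wins by pushing the button, so $o_{\R_1 \swComp \R_2}(g) = \N$. This is the easy warm-up case and it will also be reused as the ``$\Leftarrow$'' direction of condition (iii) of the corollary in the later parts.

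For item (ii), assume $P_1 \cap P_2 = \emptyset$; I claim $P_1$ itself satisfies the three conditions of Corollary~\ref{cor:PposPush}, and then uniqueness of the $\P$-set gives the equivalence. Condition (i) of the corollary: if $g \in P_1$ then $g \notin P_2$, i.e.\ $o_{\R_2}(g) = \N$ by the disjointness hypothesis. Condition (ii): if $g \in P_1$ then every $\R_1$-option $g'$ of $g$ satisfies $o_{\R_1}(g') = \N$, so $g' \notin P_1$; hence $\R_1(g) \subseteq \Omega \setminus P_1$. Condition (iii): if $g \in \Omega \setminus P_1$, then $o_{\R_1}(g) = \N$, so by Proposition~\ref{prop:outcome} there is some $g' \in \R_1(g)$ with $o_{\R_1}(g') = \P$, i.e.\ $g' \in P_1$, so $\R_1(g) \cap P_1 \neq \emptyset$. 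All three conditions hold, so $P_1$ is the $\P$-set of $\R_1 \swComp \R_2$, which is exactly the claimed equivalence.

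For item (iii), assume $P_1^- \cap P_2 = \emptyset$ and that $\R_1(g) = \emptyset \implies g \in P_2$; I claim $P_1^-$ satisfies the three conditions of Corollary~\ref{cor:PposPush}. Condition (i): $g \in P_1^-$ gives $g \notin P_2$, hence $o_{\R_2}(g) = \N$. Condition (ii): if $g \in P_1^-$ then (using $o^-_{\R_1}(g) = \P$ and the misere part of Proposition~\ref{prop:outcome}) we have $\R_1(g) \neq \emptyset$ and every $g' \in \R_1(g)$ has $o^-_{\R_1}(g') = \N$, so $g' \notin P_1^-$; thus $\R_1(g) \subseteq \Omega \setminus P_1^-$. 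Condition (iii): take $g \in \Omega \setminus P_1^-$. If $\R_1(g) = \emptyset$, then by hypothesis $g \in P_2$, so $o_{\R_2}(g) = \P$ and we are done. Otherwise $\R_1(g) \neq \emptyset$ and $o^-_{\R_1}(g) = \N$, so by the misere part of Proposition~\ref{prop:outcome} there is $g' \in \R_1(g)$ with $o^-_{\R_1}(g') = \P$, i.e.\ $g' \in P_1^- \cap \R_1(g)$. Hence $P_1^-$ is the $\P$-set of $\R_1 \swComp \R_2$. The main subtlety here — and the only place where the extra hypothesis $\R_1(g)=\emptyset \implies g\in P_2$ is needed — is exactly the $\R_1$-terminal case in condition (iii): a position that is $\R_1$-terminal is a win in misere $\R_1$ (so not in $P_1^-$) yet has no $\R_1$-move into $P_1^-$, so it must instead be rescued by being a $\P$-position of $\R_2$; the hypothesis guarantees precisely this. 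Everything else is a routine unwinding of Proposition~\ref{prop:outcome}, so I expect no real obstacle beyond bookkeeping.
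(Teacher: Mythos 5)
Your proposal is correct and follows essentially the same route as the paper: part (i) is the observation that pushing the button is a winning move, and parts (ii) and (iii) verify that $P_1$ (respectively $P_1^-$) satisfies the three conditions of Corollary~\ref{cor:PposPush}, with the extra hypothesis in (iii) handling exactly the $\R_1$-terminal positions as you note. Your write-up is just a more detailed unwinding of the same argument.
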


\begin{proof}
\begin{enumerate}[label=(\roman{enumi})]
\item Pushing the button is a winning move.
\item We check that $P_1$ satisfies the three conditions $(i), (ii)$ and $(iii)$ of Corollary~\ref{cor:PposPush}. The condition $(i)$: for all $ g \in P_1, \ o_{\R_2}(g) = \N $, corresponds to the assumption we made that $P_1 \cap P_2 = \emptyset$. Since $P_1$ is the set of \P-positions for the ruleset $\R_1$, there is no move, according to $\R_1$, from a position in $P_1$ to another one, hence $(ii)$ holds. Moreover for any position not in $P_1$, there is a move, according to $\R_1$, to a position in $P_1$ and $(iii)$ holds.
 
\item The argument is essentially the same as above. If a position $g$ has no move for $\R_1$, then by hypothesis, pushing the button is a winning move. Otherwise, $g$ has at least one option. If $g \in P_1^-$, then all moves, according to $\R_1$, are out of $P_1^-$, and by hypothesis, pushing the button is a losing move. If $g \not \in P_1^-$, then there exists at least one move, according to $\R_1$, landing in $g' \in P_1^-$.
\end{enumerate}
\end{proof}

This result gives sufficient conditions for which the \P-positions of $\R_1$ remain unchanged by adding the possibility to switch the rules to $\R_2$. 
In the following sections, we study some games which use this property.

\section{\Rthl}
\label{sec:miniline}

We now present the ruleset that is the source of the switch compound. It is called \Rthl\ (a.k.a. \textsc{Ruthlein}), and is is a variant of the ruleset {\sc Susen}~\cite{selfref}. Both are part of a larger family of rulesets called in the literature {\em non-invariant}~\cite{invariant} (or simply `variant') with the interesting subclass of {\em self-referential}~\cite{selfref} games. Previous naming conventions for classes of `variant' games are for example {\em pilesize dynamic} in~\cite{holshouser03, holshouser04}, and \emph{time and size dependent take-away game} in~\cite{flanigan82}.


\begin{definition}[\Rthl]
The game \Rthl\ is played on several heaps of tokens. At each turn, a player can remove from any heap a number of tokens which is a positive multiple of the smallest non-zero heap. The heap sizes must remain non-negative.
\end{definition}

\Rthl\ on two heaps is very similar to the well-known game \textsc{Euclid}, with the only difference that the latter game requires positive heap sizes.

\begin{definition}[\textsc{Euclid}]
\label{euclidDef}
The game \textsc{Euclid} is played on two heaps of tokens. A player can remove a positive multiple of the smallest heap from the largest heap. The heap sizes must remain positive.
\end{definition}

Played on two heaps, \Rthl\ has only two differences with \textsc{Euclid}:
\begin{itemize}
\item In \Rthl, the game ends when both heaps are zero, whereas in \textsc{Euclid} it ends when they are equal.
\item In \Rthl, it is possible to remove the smallest heap in a single move.
\end{itemize}

One can easily check that these two differences do not modify the $\P$-positions. Indeed, in the $2$-heap case, removing the smallest heap is always a losing move in \Rthl. Hence the $\P$-positions of $2$-heap \Rthl\ with non-zero coordinates, and the $\P$-positions of \textsc{Euclid} are the same. They are given by the following result:

\begin{proposition}[\cite{euclid}]
\label{euclidPos}
A position $(a,b)$, with $1\leq a \leq b$ is a \P-position of \textsc{Euclid} if and only if $\frac{b}{a}<\Phi$, where $\Phi$ is the golden ratio.
\end{proposition}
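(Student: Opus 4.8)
The plan is to prove both directions by induction on $a+b$, using the fact that a single step of Euclid's game corresponds to a step of the (slow) Euclidean algorithm, so that continued fractions and the golden ratio enter naturally. Recall the key number-theoretic fact underlying the argument: if $1 \le a \le b$, then from $(a,b)$ the moves are exactly to positions $(a, b-ka)$ for $1 \le k$ with $b-ka \ge 1$; equivalently, writing $b = qa + r$ with $0 \le r < a$, the reachable positions with second coordinate sorted are $(r,a), (a+r,a), \dots$ — more precisely the options are $\{(a, b-ka) : 1 \le k \le \lfloor (b-1)/a \rfloor\}$, and among these the ``extreme'' ones are $(a, a+r)$ (taking $k$ as small as possible, when $b > a$) and, if $r \ge 1$, the position reached by taking $k$ maximal, which after reordering is $(r,a)$. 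The terminal positions are exactly those with $a = b$ (where no move removes a positive multiple of $a$ from $b=a$ keeping it positive — wait, one can remove $a$ from $b$ to reach $0$, which is forbidden), so $o(a,a) = \P$ for all $a \ge 1$, and indeed $b/a = 1 < \Phi$.

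First I would establish the forward-ish structural claim: if $b/a \ge \Phi$ then there is a move to a position $(a', b')$ with $1 \le a' \le b'$ and $b'/a' < \Phi$. Since $\Phi^2 = \Phi + 1$, if $b \ge \Phi a$ then either $b \ge 2a$, in which case the option $(a, b-a \cdot \lfloor (b-a)/a \rfloor)$ ... cleaner: if $b/a \ge \Phi$ then $b > a$, so $(a, a+r)$ with $r = b \bmod a$ (when $r \ge 1$) or $(a,a)$ (when $r = 0$) is reachable by taking $k$ maximal-but-one / maximal; one checks $(a+r)/a = 1 + r/a < 2 < \Phi^2$, but we need $< \Phi$, so instead argue: the two consecutive options $(a, a+r)$ and — if it exists — $(a, 2a+r)$ satisfy that their ratios to $a$ are $1 + r/a$ and $2 + r/a$; since $1 + r/a < 2 \le \Phi + 1/\Phi$... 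The clean statement is the classical one: among the options, if $b/a \ge \Phi$ then the option $(a, b-a)$ has ratio $b/a - 1$; iterating, some option has ratio in $[1,\Phi)$ — but we must land in one move. The right move is: if $a < b < \Phi a$ is impossible to reach in general; instead, when $b/a \ge \Phi$, take the move to $(r, a)$ where $b = qa+r$, $1 \le r < a$ if $r \ge 1$ (ratio $a/r$), and to $(a,a)$ if $r = 0$ — and show $a/r < \Phi$ using that $b/a \ge \Phi$ forces $q \ge 2$ hence ... this still needs care. The cleanest route, and the one I would actually write: use the equivalence $b/a \ge \Phi \iff a/r \le \Phi$ where $r = b \bmod a$ (for $r \ge 1$), which follows from $b/a = q + r/a$ and properties of continued fraction convergents / the identity $\Phi = 1 + 1/\Phi$; then the move from $(a,b)$ to $(r,a)$ (or to $(a,a)$ when $r=0$, or to $(a, a+r)$ when one wants $q$ reduced by more than the obvious amount) lands in $\{b/a < \Phi\}$.

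Second, the converse direction: if $1 \le a \le b$ and $b/a < \Phi$, show every option lies in $\{b'/a' \ge \Phi\}$ or is $\P$ by terminality. If $a = b$, it is terminal, done. If $a < b < \Phi a$, then $b = a + r$ with $1 \le r = b - a < (\Phi-1)a = a/\Phi$, so the options are $(a, b - ka) = (a, r)$... wait $k$ can only be $1$ here since $b - 2a < 0$; the single option is $(r, a)$ with ratio $a/r > \Phi$. So the induction closes: the unique option from a position with ratio in $[1,\Phi)$ and $a<b$ has ratio $> \Phi$, hence by induction is $\N$; and a position with ratio $\ge \Phi$ has, by the first part, an option with ratio in $[1,\Phi)$, hence (by induction, smaller $a+b$) a $\P$-option, hence is $\N$...

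I realize the logic is: $\P$-positions are exactly $\{b/a < \Phi\}$. Base case $(1,1)$ terminal $\in \P$, ratio $1 < \Phi$. Inductive step on $a+b$: (a) if $b/a < \Phi$ and $a = b$, terminal, $\P$, consistent. (b) if $b/a < \Phi$ and $a < b$: unique option has larger sum? No — smaller sum, it's an option. Its ratio is $a/(b-a) \ge \Phi$ (shown above since $b - a < a/\Phi$), so by induction it is $\N$; thus all options $\N$, so $(a,b) \in \P$. (c) if $b/a \ge \Phi$: exhibit one option of ratio $< \Phi$; by induction (smaller sum) it is $\P$; so $(a,b)$ is $\N$. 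The main obstacle is step (c): proving that when $b/a \ge \Phi$ some option has ratio $< \Phi$. I would handle it by: let $b = qa + r$, $0 \le r < a$; if $r = 0$ then $q \ge 2$ (since $q \ge \Phi$), the option $(a, 2a) \mapsto$ ratio $2$ — but $2 > \Phi$! So rather the option $(a,a)$, ratio $1 < \Phi$ — reachable since $b - (q-1)a = a \ge 1$. If $r \ge 1$: the option $(a, a+r)$ has ratio $1 + r/a \in (1, 2)$; this is $< \Phi$ iff $r/a < \Phi - 1 = 1/\Phi$ iff $a/r > \Phi$; but if instead $a/r \le \Phi$, then we claim the option $(r,a)$ has ratio $a/r \le \Phi$, and we need strict: if $a/r = \Phi$ impossible (irrational), so $a/r < \Phi$, done; and the case $a/r > \Phi$ is exactly when $(a,a+r)$ works. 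Either way an option with ratio $< \Phi$ exists. I would present exactly this case split; the irrationality of $\Phi$ (so no ratio equals it) is what makes the dichotomy clean. This completes the induction and hence the proof.
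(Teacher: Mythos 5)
Note first that the paper does not prove Proposition~\ref{euclidPos} at all: it is quoted from the literature (the citation \cite{euclid}) and used as a black box, so there is no in-paper argument to compare against. Your induction on $a+b$ is the standard proof of this classical fact, and after discarding the exploratory false starts, the final case analysis you settle on is correct: $(a,a)$ is terminal with ratio $1<\Phi$; if $a<b<\Phi a$ the unique option is $(b-a,a)$ with $a/(b-a)>\Phi$, hence an $\N$-position by induction; and if $b/a>\Phi$ (equality being excluded by irrationality of $\Phi$), writing $b=qa+r$ you always find an option with ratio below $\Phi$, namely $(a,a)$ if $r=0$, $(r,a)$ if $a/r<\Phi$, and $(a,a+r)$ if $a/r>\Phi$. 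One step deserves an explicit line: in the last branch you must check that $(a,a+r)$ is actually reachable, i.e.\ that $q\geq 2$ so that $k=q-1\geq 1$ is a legal multiple. This does follow from your hypotheses --- if $q=1$ then $b/a=1+r/a>\Phi$ forces $r/a>\Phi-1=1/\Phi$, i.e.\ $a/r<\Phi$, contradicting the assumption $a/r>\Phi$ of that branch --- but as written you assert the option without verifying it exists. With that sentence added, and the deliberative prose removed in favour of the clean three-case summary at the end, the proof is complete.
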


Played on three heaps, our game is similar to the game \textsc{$3$-Euclid} introduced in \cite{3euclid}, and to some of its variations studied in \cite{variations3Euclid}. The main difference with the game \Rthl\  is that \textsc{$3$-Euclid} requires the heap sizes to remain positive. In particular, the number of heaps does not vary during the game. In our version, it is allowed to completely remove the smallest heap, and thus to decrease the total number of heaps. An example is given in Figure~\ref{fig:rthlex}. 

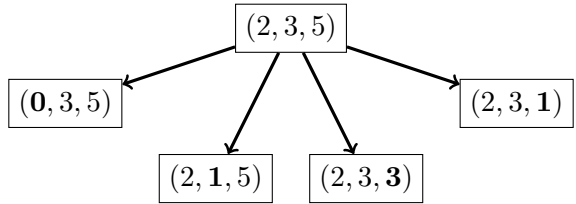
\begin{figure}[!ht]
	\centering
	\begin{tikzpicture}
		\node[draw] (g) at (0,0) {$(2,3,5)$} ;
		\node[draw] (g1) at (-3,-1) {$(\mathbf{0},3,5)$} ;
		\node[draw] (g2) at (-1,-2) {$(2,\mathbf{1},5)$} ;
		\node[draw] (g3) at (1,-2) {$(2,3,\mathbf{3})$} ;
		\node[draw] (g4) at (3,-1) {$(2,3,\mathbf{1})$} ;
		\foreach \i in {1,2,3,4} {
			\draw[->, very thick] (g) -- (g\i) ;
		}
	\end{tikzpicture}
	\caption{\label{fig:rthlex} Example of the possible moves for \Rthl\ on position $(2,3,5).$}
\end{figure}

\subsection{$\Rthl(1,a,b)$}
We call $\Rthl(1,a,b)$ the game \Rthl\ played on three heaps of tokens where the smallest heap has size $1$. The link between this game and the push-the-button operator is given in the following proposition:

\begin{proposition}
The game \Rthl\ played on position $(1,a,b)$ has the same outcome as the push ruleset $\textsc{Nim}  \swComp \textsc{Euclid}$ on position $(a,b)$.
\end{proposition}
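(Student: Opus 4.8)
The plan is to exhibit an explicit correspondence between positions and moves of $\Rthl(1,a,b)$ and positions and moves of $\textsc{Nim} \swComp \textsc{Euclid}$ on $(a,b)$, and to verify that this correspondence preserves outcomes by induction on the total number of tokens. The key observation is that in $\Rthl(1,a,b)$ the smallest heap has size $1$, so a legal move removes a positive multiple of $1$ from one of the heaps; that is, it either (a) reduces $a$ or $b$ by an arbitrary positive amount while keeping the heap of size $1$ intact — which is exactly a $\textsc{Nim}$ move on the pair $(a,b)$ — or (b) removes the heap of size $1$ entirely, leaving the two heaps $(a,b)$ with no heap of size $1$. Under this second type of move, subsequent play is $\Rthl$ on the two heaps $(a,b)$, which by the remarks preceding Proposition~\ref{euclidPos} has the same $\P$-positions as $\textsc{Euclid}(a,b)$; moreover one checks that, once the singleton heap is gone, the full game tree of $2$-heap $\Rthl$ and that of $\textsc{Euclid}$ on positions with positive coordinates coincide up to the already-noted differences (removing the smaller heap is a losing move, and the terminal conventions differ only at positions with equal coordinates, which are $\N$-positions on both sides).

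I would then match this against Definition~\ref{def:PB}: from $(1,g)$ in $\textsc{Nim} \swComp \textsc{Euclid}$ with $g = (a,b)$, the options are $(1,g')$ for $g' \in \textsc{Nim}(a,b)$ — corresponding exactly to the moves of type (a) in $\Rthl(1,a,b)$ — together with the single option $(2,g) = (g)_{\textsc{Euclid}}$, corresponding to the move of type (b) that deletes the singleton heap. So the map sending $\Rthl(1,a,b) \mapsto (a,b)_{\textsc{Nim}\swComp\textsc{Euclid}}$ and $\Rthl(a,b) \mapsto (a,b)_{\textsc{Euclid}}$ (for the $2$-heap positions reached after the button is pushed) is a bijection of game trees, and hence the two games have the same outcome by Proposition~\ref{prop:outcome} (or directly, they are the same game tree).

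The one point requiring care — and the main obstacle — is the claim that after removing the singleton heap, $2$-heap $\Rthl$ on $(a,b)$ and $\textsc{Euclid}$ on $(a,b)$ really do have the same outcome, not merely the same $\P$-positions among positions with nonzero coordinates. The subtlety is that $\Rthl$ continues down to $(0,0)$ whereas $\textsc{Euclid}$ stops when the heaps are equal, and $\Rthl$ permits removing the smaller heap outright. The excerpt already asserts that these differences do not change the $\P$-positions, with the justification that in the $2$-heap case removing the smaller heap is always losing; I would spell this out by noting that from a position $(a,a)$ the only $\Rthl$-moves lead to $(0,a)$ or $(a,0)$ (both immediately losing for the mover, since the opponent empties the board) or nothing, so $(a,a)$ is an $\N$-position in $\Rthl$, matching the $\N$-outcome assigned to the $\textsc{Euclid}$-terminal position $(a,a)$ under normal play. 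With that settled, the induction goes through and the proposition follows.
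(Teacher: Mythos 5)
Your overall approach is the same as the paper's: moves that keep the singleton heap intact are exactly \textsc{Nim} moves on $(a,b)$, deleting the singleton heap plays the role of pushing the button, and after that point $2$-heap \Rthl{} agrees with \textsc{Euclid} up to differences that do not affect the $\P$-positions (using the paper's convention that $(0,i)$ has a single move to $(0,0)$). The paper's proof is precisely this observation, stated in three sentences, so structurally you are aligned and your extra detail on the post-button phase is welcome.

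There is, however, a genuine error in the one step you yourself flag as requiring care. You assert that $(a,a)$ is an $\N$-position in $2$-heap \Rthl{}, ``matching the $\N$-outcome assigned to the \textsc{Euclid}-terminal position $(a,a)$ under normal play.'' Both halves of this are wrong: under the normal convention a terminal position is a $\P$-position (the player to move loses), and in $2$-heap \Rthl{} every option of $(a,a)$ is $(0,a)$ or $(a,0)$, each an $\N$-position (single move to $(0,0)$), so $(a,a)$ is again a $\P$-position --- which is exactly what your own parenthetical ``both immediately losing for the mover'' establishes before you draw the opposite conclusion. The fact you actually need, namely that the two games assign the \emph{same} outcome to $(a,a)$, still holds (both assign $\P$, consistent with Proposition~\ref{euclidPos} since $a/a=1<\Phi$), so the argument is repairable by correcting the label; but as written the sentence contradicts itself and misstates the normal-play convention, and a careless reader could propagate the wrong outcome for $(a,a)$ into the correspondence.
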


	First, remark that \textsc{Euclid} and \textsc{Nim} are not played exactly on the same set of positions: the position $(0,i)$ is not a valid position for \textsc{Euclid}, while it is a valid position for \textsc{Nim}. Consequently, what we call \textsc{Euclid} here, and in all the rest of Section~\ref{sec:miniline}, is the variant where there is a single move from $(0,i)$ to $(0,0)$ if $i > 0$. In particular, the position $(0,i)$ is an \N-position.

\begin{proof}
	Before the heap with $1$ token is taken, it is possible to remove any number of tokens from any of the two other heaps, hence the rules are those of \textsc{Nim} on two heaps. Once the heap with size one is removed, the rules are essentially those of \textsc{Euclid}, up to some minor differences, which, as mentioned before, do not modify the set of $\P$-positions. Removing the heap of size one corresponds to pushing the button.
\end{proof}

The rest of this section will be dedicated to finding characterizations of the \P-positions of $\textsc{Nim}  \swComp \textsc{Euclid}$. By the previous proposition, this also gives a characterization of the \P-positions for $\Rthl(1,a,b)$.

Surprisingly, the \P-positions of $\textsc{Nim}  \swComp \textsc{Euclid}$ are very similar to the \P-positions of another well-known game:

\begin{definition}[\textsc{Wythoff}'s game]
\label{wythoffDef}
The game \textsc{Wythoff} is played on two heaps of tokens. A move consists in either removing a positive number of tokens from one heap, or removing the same number of tokens from both heaps.
\end{definition}

In the following, let $a_n = \lfloor\Phi n \rfloor$ and $b_n = \lfloor\Phi n \rfloor+n$.

\begin{proposition}[\cite{wythoff}]
\label{wythoffPpos}
The \P-positions of \textsc{Wythoff} are: $$\{(a_n, b_n) ,(b_n,a_n),\ n \geq 0\}.$$
\end{proposition}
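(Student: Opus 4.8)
The plan is to verify directly that
$$P := \{(a_n,b_n),(b_n,a_n)\ :\ n\ge 0\}$$
is the set of \P-positions of \textsc{Wythoff}, using the standard fact (an easy induction from Proposition~\ref{prop:outcome} over the short game tree) that a set $S$ of positions is exactly the \P-position set of an impartial ruleset if and only if (a)~no option of a position in $S$ lies in $S$, and (b)~every position not in $S$ has an option in $S$. Everything then rests on three elementary properties of $(a_n)$ and $(b_n)$: both sequences are strictly increasing with $a_0=b_0=0$; one has $b_n-a_n=n$ for every $n$; and, by Beatty's theorem (applicable since $\tfrac1\Phi+\tfrac1{\Phi^2}=1$ and $\Phi$ is irrational, with $\lfloor\Phi^2 n\rfloor=\lfloor\Phi n\rfloor+n=b_n$), the sets $\{a_n:n\ge1\}$ and $\{b_n:n\ge1\}$ partition the positive integers. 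A convenient consequence is that for each integer $d\ge1$ there is a unique index $n$ (namely $n=d$) with $b_n-a_n=d$, and that $(0,0)$ is the only element of $P$ with equal coordinates.

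For condition (a), by the symmetry of $P$ it suffices to rule out moves out of $(a_n,b_n)$ with $n\ge1$. A \textsc{Wythoff} move either (i)~decreases the first coordinate, yielding $(x,b_n)$ with $x<a_n$; (ii)~decreases the second coordinate, yielding $(a_n,y)$ with $y<b_n$; or (iii)~decreases both coordinates by some $k>0$, yielding $(a_n-k,b_n-k)$. In case (i), if $(x,b_n)\in P$ then either $b_n=b_m$, forcing $m=n$ and $x=a_n$, a contradiction, or $b_n=a_m$, contradicting disjointness of the two sequences; case (ii) is symmetric. In case (iii) the coordinate difference is preserved, $(b_n-k)-(a_n-k)=n$, so a resulting position in $P$ of the form $(a_m,b_m)$ must have $m=n$ and hence $k=0$, while one of the form $(b_m,a_m)$ would require $a_n-k\ge b_n-k$, impossible for $n\ge1$. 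Hence no move stays in $P$.

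For condition (b), let $(x,y)\notin P$; by symmetry assume $x\le y$. If $x=0$ then $y>0$ and one move reaches $(0,0)\in P$. Otherwise Beatty's theorem places $x$ in exactly one of the two sequences. If $x=a_n$ with $n\ge1$: when $y>b_n$, decrease the second coordinate to $b_n$, reaching $(a_n,b_n)\in P$; when $y<b_n$, set $d:=y-a_n$ so that $0\le d<n$ — if $d=0$ decrease both coordinates to $(0,0)$, and if $d\ge1$ decrease both coordinates by $a_n-a_d>0$ (positive since $d<n$ and $(a_i)$ is strictly increasing), landing on $(a_d,\,a_d+d)=(a_d,b_d)\in P$. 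If instead $x=b_n$ with $n\ge1$: then $y\ge b_n>a_n$, so when $y=b_n$ decrease both coordinates to $(0,0)$, and when $y>b_n$ decrease the second coordinate to $a_n$, reaching $(b_n,a_n)\in P$. In every case a \P-position is reached, so $P$ satisfies (a) and (b) and is therefore the \P-position set.

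The main obstacle is not the game-theoretic bookkeeping — which is the routine case check above — but isolating the right arithmetic input: the Beatty partition property together with the identity $b_n=a_n+n$ is exactly what makes the case analysis close, and the one slightly delicate point is the reachability step when the smaller coordinate lies in $B$, where the correct target is the reversed pair $(b_n,a_n)$ rather than $(a_n,b_n)$.
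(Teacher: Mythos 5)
Your proof is correct and complete. Note that the paper does not actually prove this proposition---it is stated with a citation to Wythoff's original article---so there is no in-paper argument to compare against; what you have written is the standard verification (the two defining conditions for a \P-position set, closed using the Beatty partition of the positive integers by $\{a_n\}_{n\ge1}$ and $\{b_n\}_{n\ge1}$ together with the identity $b_n-a_n=n$), and all the case checks, including the slightly delicate move to the reversed pair $(b_n,a_n)$ when the smaller coordinate lies in $B$, are handled correctly.
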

According to the proposition above, the pairs $(a_n, b_n)$ for $n \geq 0$ are generally called \textsc{Wythoff} pairs.

\begin{definition}
Denote by $(F_n)_{n \geq 0}$ the Fibonacci sequence starting with ${F_0=0}$ and $F_1 = 1$. The sequence $(u_n)_{n \geq 0}$ is defined by $u_n = F_{n+1} -1$.

\end{definition}

\begin{proposition}
	\label{prop:wythoffPairs}
	For all $n \geq 0$, the pair $(u_{2n}, u_{2n+1})$ is a \textsc{Wythoff} pair.
\end{proposition}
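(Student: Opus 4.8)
The plan is to show that $(u_{2n},u_{2n+1}) = (a_m, b_m)$ for a suitable index $m$, using the characterization of Wythoff pairs from Proposition~\ref{wythoffPpos}. Recall $u_k = F_{k+1}-1$, so $u_{2n} = F_{2n+1}-1$ and $u_{2n+1} = F_{2n+2}-1$. A Wythoff pair has the form $(a_m, b_m) = (\lfloor \Phi m\rfloor, \lfloor \Phi m\rfloor + m)$, and a convenient equivalent test is that $(x,y)$ with $x\le y$ is a Wythoff pair iff $y = x + m$ where $m = \lfloor \Phi x\rfloor - x = \lfloor (\Phi-1)x\rfloor$, i.e. iff $y - x = \lfloor \Phi^{-1} x \rfloor$ (using $\Phi - 1 = \Phi^{-1}$); equivalently $x = \lfloor \Phi (y-x)\rfloor$. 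So the first step is to reduce the claim to the single identity
$$ F_{2n+1} - 1 = \big\lfloor \Phi\,(F_{2n+2} - F_{2n+1})\big\rfloor = \big\lfloor \Phi\, F_{2n}\big\rfloor, $$
using $F_{2n+2} - F_{2n+1} = F_{2n}$.

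First I would recall the standard closed form $F_k = (\Phi^k - \psi^k)/\sqrt 5$ with $\psi = -\Phi^{-1}$, or more usefully the identity $\Phi F_k = F_{k+1} - \psi^k$ (which follows from $\Phi F_k = F_{k-1} + \Phi^{-1}F_k$ and induction, or directly from the closed forms since $\Phi(\Phi^k - \psi^k)/\sqrt5 = (\Phi^{k+1} - \Phi\psi^k)/\sqrt5$ and $\Phi\psi = -1$, giving $\Phi F_k = F_{k+1}/1 \cdot$ ... the clean statement is $\Phi F_k + \psi^k\cdot(\text{something})$; the point is $\Phi F_k = F_{k+1} + \psi^{k+1}$, since $\psi^{k+1} = \psi\cdot\psi^k$ and $\psi = -\Phi^{-1}$). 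Applying this with $k = 2n$ gives $\Phi F_{2n} = F_{2n+1} + \psi^{2n+1}$. Since $\psi \in (-1,0)$ and the exponent $2n+1$ is odd, $\psi^{2n+1} \in (-1,0)$, so $\Phi F_{2n}$ lies in the open interval $(F_{2n+1}-1,\ F_{2n+1})$. Hence $\lfloor \Phi F_{2n}\rfloor = F_{2n+1} - 1 = u_{2n}$, which is exactly the required identity (the case $n=0$ should be checked separately since $F_0 = 0$ gives the pair $(0,0)$, which is the Wythoff pair with $m=0$).

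The remaining step is bookkeeping: from $\lfloor \Phi F_{2n}\rfloor = F_{2n+1}-1$ I get, with $m := F_{2n}$, that $a_m = \lfloor \Phi m\rfloor = F_{2n+1} - 1 = u_{2n}$ and $b_m = a_m + m = (F_{2n+1}-1) + F_{2n} = F_{2n+2} - 1 = u_{2n+1}$, using $F_{2n+1} + F_{2n} = F_{2n+2}$. Thus $(u_{2n}, u_{2n+1}) = (a_m, b_m)$ is a Wythoff pair by Proposition~\ref{wythoffPpos}.

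I expect the only real subtlety to be establishing the auxiliary identity $\Phi F_k = F_{k+1} + \psi^{k+1}$ cleanly and making sure the sign and magnitude of $\psi^{2n+1}$ place $\Phi F_{2n}$ strictly inside $(F_{2n+1}-1, F_{2n+1})$ so that the floor is computed without boundary issues — in particular that $\Phi F_{2n}$ is never an integer, which holds because $\Phi$ is irrational and $F_{2n}\ne 0$ for $n\ge 1$. Everything else is a routine Fibonacci/golden-ratio manipulation.
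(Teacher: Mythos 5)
Your approach is essentially the paper's: both arguments use the Binet closed form to show $\lfloor \Phi F_{2n}\rfloor = F_{2n+1}-1 = u_{2n}$ and then identify $(u_{2n},u_{2n+1})$ as the Wythoff pair indexed by $m=F_{2n}$ (the paper phrases the second coordinate as $\lfloor\Phi^2 F_{2n}\rfloor$, you as $a_m+m$; these are the same). One slip to fix: your auxiliary identity is misstated. With $\psi=-\Phi^{-1}$ the correct relation is $\Phi F_k = F_{k+1}-\psi^k$, not $F_{k+1}+\psi^{k+1}$ (check $k=2$: $\Phi\cdot 1\approx 1.618 = 2-\psi^2$, whereas $2+\psi^3\approx 1.764$). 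Your conclusion survives because for $k=2n\ge 2$ the true error term $-\psi^{2n}$ also lies strictly in $(-1,0)$, so $\Phi F_{2n}\in(F_{2n+1}-1,\,F_{2n+1})$ and the floor computation, the treatment of $n=0$, and the final bookkeeping $b_m=a_m+m=F_{2n+2}-1=u_{2n+1}$ all go through unchanged; just replace the identity by the correct one.
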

This is a well known result. Since we will reuse the proof afterwards, we put the details of it below.
\begin{proof}
	Let $\Phi$ denote the golden ratio. Using the closed form formula for the Fibonacci numbers, we can write that:
	\begin{align*}
		F_{n+1} - \Phi F_n & = \frac{\Phi^{n+1} - (- \Phi)^{-n-1}}{2 \Phi -1} - \Phi \frac{\Phi^{n} - (- \Phi)^{-n}}{2 \Phi -1} \\
		& = \frac{1}{(-\Phi)^n}\frac{(\Phi - \frac 1 \Phi)}{2 \Phi - 1} = \frac{1}{(-\Phi)^n(2 \Phi -1)}.
	\end{align*}
	This gives $F_{2n +1} - \frac{1}{\Phi^{2n} (2\Phi -1)} = \Phi F_{2n}$, and by taking the integer part on both sides of the equality, we obtain $u_{2n} = \lfloor \Phi F_{2n}\rfloor$. Additionally we have $$u_{2n +1} = F_{2n +2} - 1 = F_{2n+1} - 1 + F_{2n} = \lfloor \Phi F_{2n}\rfloor + F_{2n} = \lfloor \Phi^2 F_{2n}\rfloor. $$
	Consequently, the equality $(u_{2n},u_{2n+1}) = (\lfloor \Phi F_{2n}\rfloor,\lfloor \Phi^2 F_{2n}\rfloor)$ proves that it is a \textsc{Wythoff} pair.
\end{proof}

\begin{theorem}
\label{rthl1}
The set of \P-positions of $\textsc{Nim} \swComp \textsc{Euclid}$ is given by:
$$P =  \{(a_n,b_n), n \geq 0\} \setminus \{(u_{2n}, u_{2n+1}), \ n \geq 0\} \bigcup \{ (u_{2n+1}, u_{2n+2}), \ n\geq 0 \}.$$
\end{theorem}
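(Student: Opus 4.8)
The plan is to take the displayed set $P$ as a candidate for the set of \P-positions and verify the three conditions of Corollary~\ref{cor:PposPush} with $\R_1=\textsc{Nim}$ and $\R_2=\textsc{Euclid}$ (the extended variant in which $(0,i)$ is an \N-position for $i>0$), viewing positions as unordered pairs of non-negative integers. Note that Proposition~\ref{prop:push}(ii) does \emph{not} apply: the \P-positions $(n,n)$ of two-heap \textsc{Nim} are exactly the terminal, hence \P, positions of \textsc{Euclid}, so the \P-set genuinely changes and there is real work to do. I would first collect the standard facts on \textsc{Wythoff}: writing $W=\{(a_n,b_n):n\ge0\}$ for the set of \textsc{Wythoff} pairs, the sequences $(a_n),(b_n)$ are strictly increasing for $n\ge1$, $b_n=a_n+n$, the sets $\{a_n:n\ge1\}$ and $\{b_n:n\ge1\}$ partition $\mathbb Z_{\ge1}$, no \textsc{Nim} move (decrease of a single coordinate) sends one \textsc{Wythoff} pair to another, and $b_n=\lfloor\Phi a_n\rfloor+1$ for $n\ge1$. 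Mimicking the Fibonacci computation in the proof of Proposition~\ref{prop:wythoffPairs}, I would also record that, for all $n\ge0$,
$$(u_{2n},u_{2n+1})=(a_{F_{2n}},b_{F_{2n}}),\qquad (u_{2n+1},u_{2n+2})=(a_{F_{2n+1}}-1,\,b_{F_{2n+1}}-1),$$
and in particular that $u_{2n+1}=b_{F_{2n}}$ is a $b$-value while $u_{2n+2}=F_{2n+3}-1=a_{F_{2n+2}}$ is an $a$-value. Thus $P$ is the set of \textsc{Wythoff} pairs whose index is not of the form $F_{2k}$, together with the $(-1,-1)$-translates of the \textsc{Wythoff} pairs of index $F_{2k+1}$.

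Condition~(i) — every $g\in P$ is an \N-position of \textsc{Euclid} — is the easy part. For $g=(a_n,b_n)\in P$ with $n\ge1$ we have $b_n=\lfloor\Phi a_n\rfloor+1>\Phi a_n$, so $b_n/a_n>\Phi$ and $g$ is an \N-position by Proposition~\ref{euclidPos}; the only \textsc{Wythoff} pair that is a \textsc{Euclid} \P-position, namely $(a_0,b_0)=(0,0)=(u_0,u_1)$, has been removed from $P$. For $g=(u_{2n+1},u_{2n+2})\in P$ with $n\ge1$ one has $u_{2n+1}\ge1$ and $(F_{2n+3}-1)/(F_{2n+2}-1)>F_{2n+3}/F_{2n+2}>\Phi$, and for $n=0$, $g=(0,1)$ is an \N-position of the extended \textsc{Euclid} by definition. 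For condition~(ii) — no \textsc{Nim} move joins two elements of $P$ — moves within $W\cap P$ are excluded by \textsc{Wythoff} independence; using the identities above and the partition of $\mathbb Z_{\ge1}$ into $a$- and $b$-values, one checks that the only \textsc{Wythoff} pair admitting a \textsc{Nim} move onto some $(u_{2m+1},u_{2m+2})$ is the pair $(u_{2m+2},u_{2m+3})$ itself, which is not in $P$, and that every \textsc{Nim} option of a pair $(u_{2m+1},u_{2m+2})$ leaves $P$ (its unique option lying in $W$ is the deleted pair $(u_{2m},u_{2m+1})$).

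For condition~(iii), let $g=(a,b)\notin P$ with $a\le b$. If $b/a<\Phi$, or $a=b$, or $a=b=0$, then $o_{\textsc{Euclid}}(g)=\P$ and we are done, so assume $o_{\textsc{Euclid}}(g)=\N$, i.e.\ $a=0<b$ or $1\le a\le b$ with $b/a>\Phi$. If $a=0<b$ then $b\ge2$ (since $(0,1)\in P$) and reducing $b$ to $1$ lands on $(0,1)\in P$. If $g=(u_{2n},u_{2n+1})$ is a deleted \textsc{Wythoff} pair (necessarily $n\ge1$), reducing $u_{2n+1}$ to $u_{2n-1}$ reaches $(u_{2n-1},u_{2n})\in P$. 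Otherwise $g\notin W$ and $b/a>\Phi$; letting $(a_m,b_m)$ be the unique \textsc{Wythoff} pair with $a\in\{a_m,b_m\}$, the identity $b_k=\lfloor\Phi a_k\rfloor+1$ forces $b>b_m$ when $a=a_m$ and $a_m<b$ when $a=b_m$, so a single \textsc{Nim} move reaches $(a_m,b_m)$; if $m\ne F_{2k}$ this pair lies in $P$, while if $m=F_{2k}$ then $a\in\{u_{2k},u_{2k+1}\}$ and a short sub-case analysis (again via $b_k=\lfloor\Phi a_k\rfloor+1$) produces instead a \textsc{Nim} move onto $(u_{2k-1},u_{2k})$ or $(u_{2k+1},u_{2k+2})$, both in $P$.

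I expect the main obstacle to be exactly the bookkeeping in conditions~(ii) and~(iii): one must show that adjoining the translated pairs $\{(u_{2m+1},u_{2m+2})\}$ to (most of) the \textsc{Wythoff} \P-set creates no internal \textsc{Nim} edge, while the \textsc{Wythoff} pairs that get destroyed — those of index $F_{2k}$ — are \emph{precisely} the ones that acquire a \textsc{Nim} move onto a new pair. Pinning down this dichotomy is what forces the Fibonacci/Beatty identities listed above (and is perhaps cleanest via Zeckendorf representations); everything else reduces to the \textsc{Euclid} \P-position criterion of Proposition~\ref{euclidPos} and elementary golden-ratio estimates.
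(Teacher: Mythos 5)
Your proposal is correct and follows essentially the same route as the paper: both verify the conditions of Corollary~\ref{cor:PposPush} using the complementarity of the two coordinate sets of $P$, the identity $b=\lceil\Phi a\rceil$ for the pairs of $P$ (obtained from the Fibonacci computation behind Proposition~\ref{prop:wythoffPairs}), and the \textsc{Euclid} criterion of Proposition~\ref{euclidPos}. The only difference is organizational: the paper reindexes $P$ as a single increasing sequence $(a'_n,b'_n)$ satisfying $b'_n=\lceil\Phi a'_n\rceil$ uniformly, which absorbs your separate sub-cases at the Fibonacci indices $F_{2k}$ into one clean three-way split for condition~(iii).
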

This set coincides with the \P-positions of \textsc{Wythoff}, except for a very small fraction of the positions. The first \P-positions of the two games are given in Table~\ref{tab:ppos}.

\begin{proof}
	Denote by $(a'_n, b'_n)$ the positions of $P$, with $b'_n \geq a'_n$, reordered by increasing $a'_n$. Denote by $A$ and $B$ the two sets defined by $A = \{ a'_n,\  n \geq 0 \}$ and $B = \{ b'_n, \ n \geq 0\}$.
	We know by \cite{wythoff} that the sets $\{ a_n,\  n \geq 0 \}$ and $\{ b_n, \ n \geq 0\}$ are complementary. Additionally, using  Proposition~\ref{prop:wythoffPairs} and the fact that the $u_i$ for $i \geq 1$ are all distinct, we can deduce that $A$ and $B$ are complementary. This implies that there is no move, according to \textsc{Nim} from a position of $P$ to an other position of $P$.
	
	In order to apply Corollary~\ref{cor:PposPush}, we only need to show that: $(i)$ pushing the button on a position $(a'_n, b'_n)$ is a losing move, and $(ii)$ from any position $(a,b) \not \in P$, there is either a move, according to \textsc{Nim}, to a position in $P$, or pushing the button is a winning move.
	
	We start by showing that, for $n \geq 1$, the position $(a'_n, b'_n)$ satisfies the equality $b'_n = \lceil \Phi a'_n \rceil$. If $(a'_n,b'_n)$ is a \textsc{Wythoff} pair, this relation is a well known result (this is proved for example in Lemma~5 from \cite{raleigh} or in \cite{silber2}). Consequently, we only need to prove it in the case $(a'_n, b'_n) = (u_{2n+1}, u_{2n+2})$. Reusing the computations from the proof of Proposition~\ref{prop:wythoffPairs}, we know that:
	
	\begin{align*}
		u_{2n+2} - \Phi u_{2n +1} & = F_{2n + 3} - 1 - \Phi F_{2n + 2} + \Phi \\
		 & = \frac{1}{\Phi ^{2n +2}(2 \Phi -1)} - 1 + \Phi.
	\end{align*}
	This shows that  ${u_{2n+2} - \Phi u_{2n +1} \leq \Phi - 1 + \frac{1}{\Phi^2(2\Phi -1)} < 1}$ and additionally, ${u_{2n+2} - \Phi u_{2n +1} > 0}$, which proves that $u_{2n +2} = \lceil \Phi u_{2n+1} \rceil$.
	
	We have $(a'_0, b'_0) = (0,1)$, and pushing the button from this position is a losing move. If $n \geq 1$, then $\frac{b'_n}{a'_n} > \Phi$, consequently, by Proposition~\ref{euclidPos}, pushing the button is also a losing move.

	Now suppose that we have $(a,b) \not \in P$. We want to show that either there is a move, according to \textsc{Nim}, to a position in $P$, or pushing the button is a winning move. If $a =0$, then either $b=0$ and pushing the button is a winning move, or $b > 0$, and there is a move to $(0,1) \in P$. Consequently, we can assume $a > 0$.
	Suppose $a = b'_n$ for some $n$. Thus we have $b \geq a = b'_n > a'_n$, and there is a move, according to \textsc{Nim}, to the position $(b'_n, a'_n)$. Otherwise, since $A$ and $B$ are complementary, we have $a= a'_n$ for some $n$. If $b > b'_n$, then again, there exists a move, according to \textsc{Nim}, to $(a'_n, b'_n)$. Otherwise, we have  $b < b'_n$. Since $b'_n = \lceil \Phi a'_n \rceil$, we must have $\frac b a < \Phi$, which implies by Proposition~\ref{euclidPos} that pushing the button is a winning move.
\end{proof}

\begin{table}[!ht]
	\begin{adjustbox}{width=\textwidth,center}
		\setlength\tabcolsep{1.5pt}
		\def\arraystretch{1.2}
		\begin{tabular}{|l|ccccccccccccccc|}
			\hline
			\textsc{Wythoff} 						& (0,0) &       & (1,2) &       & (3,5) & (4,7) & (6,10) &        & (8,13)  & (9,15) & (11,18) & (12,20) & (14,23) & (16,26) & (17,28)  \\
			$\textsc{Nim} \swComp \textsc{Euclid}$ 	&       & (0,1) &       & (2,4) & (3,5) &       & (6,10) & (7,12) & (8,13)  & (9,15) & (11,18) &         & (14,23) & (16,26) & (17,28)   \\
			\hline
			
		\end{tabular}
	\end{adjustbox}
	\caption{\label{tab:ppos} Sequence of the first \P-positions for the games \textsc{Wythoff} and $\textsc{Nim} \swComp \textsc{Euclid}$. Some blanks were inserted to highlight the similarities between the two sequences. }
\end{table}

There exists several ways to characterize the \P-positions for the game \textsc{Wythoff}. Quite naturally, similar characterization also exist for $\textsc{Nim} \swComp \textsc{Euclid}$. They are given in the following theorems. The proofs are very similar to those for \textsc{Wythoff} and are left in the Appendix.

\begin{restatable}{theorem}{recCaract}
\label{th:rthlrec}
The \P-positions of $\textsc{Nim} \swComp \textsc{Euclid}$ are the $(A_n,B_n)$ where $A_n$ and $B_n$ are given by the following recurrence equations:
$$ \left\{ \begin{array}{l}
A_0 = 0, \quad B_0 = 1, \\
A_n = \mex \{ A_i, B_i, i < n\},  \\
B_n = \lceil \Phi A_n \rceil.
\end{array}  \right.$$
\end{restatable}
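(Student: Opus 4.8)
The plan is to show that the recursively defined sequence $(A_n, B_n)$ coincides with the explicit set $P$ from Theorem~\ref{rthl1}, which we have already proven to be the set of \P-positions. So the work is purely a statement about two sequences of integers, with no game-theoretic content left.

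First I would recall that in the proof of Theorem~\ref{rthl1} we reindexed the positions of $P$ as $(a'_n, b'_n)$ with $a'_n$ increasing, and established two facts: the sets $A = \{a'_n\}$ and $B = \{b'_n\}$ partition the nonnegative integers, and for every $n \geq 1$ we have $b'_n = \lceil \Phi a'_n \rceil$ (and $(a'_0,b'_0) = (0,1)$ with $1 = \lceil \Phi \cdot 0\rceil$ under the convention that forces $B_0 = 1$, which is exactly the base case of the recurrence). I would then argue by strong induction on $n$ that $(A_n, B_n) = (a'_n, b'_n)$. The base case $n=0$ is immediate. For the inductive step, assume $(A_i, B_i) = (a'_i, b'_i)$ for all $i < n$. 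Since $A$ and $B$ are complementary and $a'_n$ is the $n$-th smallest element of $A$, and since $A \cup B$ lists every integer exactly once while $a'_0 < a'_1 < \cdots$, one checks that $a'_n$ is precisely the smallest nonnegative integer not among $\{a'_0, b'_0, \ldots, a'_{n-1}, b'_{n-1}\}$; by the induction hypothesis this equals $\mex\{A_i, B_i : i < n\} = A_n$. Then $B_n = \lceil \Phi A_n \rceil = \lceil \Phi a'_n\rceil = b'_n$ by the relation recalled above. Hence the two sequences agree, so the $(A_n, B_n)$ enumerate exactly $P$.

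The one point that needs a little care is the claim that $a'_n = \mex\{a'_i, b'_i : i < n\}$, i.e. that completeness ($A \sqcup B = \mathbb{Z}_{\geq 0}$) together with the monotonic indexing of $A$ forces this mex identity. The argument: every integer $k < a'_n$ lies in $A \cup B$; if $k \in A$ then $k = a'_i$ for some $i < n$ since $a'$ is increasing; if $k \in B$ then $k = b'_j$ for some $j$, and because $b'_j \geq a'_j$ (indeed $b'_j > a'_j$ for $j \geq 1$, and $b'_0 = 1 > 0 = a'_0$, while $k < a'_n$ forces, via the partition and monotonicity, that the corresponding index $j$ satisfies $j < n$). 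Meanwhile $a'_n$ itself is not in the list $\{a'_i, b'_i : i < n\}$: it is not an $a'_i$ with $i<n$ by strict monotonicity, and it is not a $b'_i$ with $i<n$ because $A$ and $B$ are disjoint. This establishes the mex identity. I expect this bookkeeping about how the complementarity of $A$ and $B$ interacts with the ordering to be the only mildly delicate step; everything else is a direct transcription of the facts already assembled in the proof of Theorem~\ref{rthl1} and Proposition~\ref{prop:wythoffPairs}. Since the proof is left to the Appendix in any case, I would present it at roughly this level of detail there.
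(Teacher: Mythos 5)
Your proof is correct, but it takes a genuinely different route from the paper's. You reduce the statement to Theorem~\ref{rthl1}: having already identified the \P-positions explicitly there, you verify by induction that the $\mex$ recurrence enumerates the same set, using the complementarity of $A$ and $B$ and the relation $b'_n = \lceil \Phi a'_n\rceil$ established in that earlier proof. The paper instead gives a self-contained game-theoretic verification that never invokes Theorem~\ref{rthl1}: it checks directly that the recursively defined pairs $(A_n,B_n)$ satisfy the conditions of Corollary~\ref{cor:PposPush} --- complementarity of $\{A_n\}$ and $\{B_n\}$ (read off from the $\mex$) rules out \textsc{Nim} moves within the set, $B_n=\lceil\Phi A_n\rceil$ gives $B_n/A_n>\Phi$ so pushing the button loses by Proposition~\ref{euclidPos}, and from any $(a,b)$ outside the set either $b/a<\Phi$ (push wins) or complementarity supplies a \textsc{Nim} move into the set. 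Your approach costs nothing new game-theoretically and makes explicit the bookkeeping (the $\mex$ identity for a pair of complementary sequences with increasing first coordinates) that would otherwise be implicit; the paper's approach is shorter, independent of the Fibonacci surgery in Theorem~\ref{rthl1}, and parallels the classical $\mex$-recurrence proof for \textsc{Wythoff}. One small slip in your write-up: $\lceil\Phi\cdot 0\rceil = 0$, not $1$, so the base case $(A_0,B_0)=(0,1)$ matches $(a'_0,b'_0)$ only because the recurrence declares $B_0=1$ by fiat, not because it follows from the ceiling formula; this does not affect your induction, which starts from the declared base case anyway.
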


As is the case for {\sc Wythoff}, we finally give a third characterization of the \P-positions using the following Fibonacci numeration system:
\begin{definition}
Any natural number $n$ can be written uniquely as a sum of Fibonacci numbers:

$$ n = \sum_{i \geq 2} x_i F_i  $$
with $x_i \in \{0,1\}$, without two consecutive $1$s (i.e. $x_i = 1 \Rightarrow x_{i+1} = 0$).
\end{definition}
For a given integer $n$, its Fibonacci representation is written as the bit string $x_k x_{k-1} \ldots x_2$. We write $(10)^*$ for bit strings of the form $101010\ldots10$.

\begin{restatable}{theorem}{fiboCaract}
As in Theorem~\ref{th:rthlrec}, denote by $A = \{ A_n \}_{n \geq 0}$ and $B = \{ B_n\}_{n \geq 0}$ the sets of coordinates of the \P-positions of $\textsc{Nim} \swComp \textsc{Euclid}$. Given a non-negative integer $x$, we have $x \in A$ if and only if its Fibonacci representation~$s$:
\begin{itemize}
\item either ends with an even number of $0$s, and $s \neq (10)^*1$,
\item or $s = (10)^* $.
\end{itemize}

Moreover, if $s_A$ is the Fibonacci representation of some $A_n$, then the Fibonacci representation $s_B$ of the corresponding $B_n$ satisfies:
\begin{itemize}
\item $s_B = s_A0$ (i.e. a shift with $0$ of $s_A$) if $s_A \neq (10)^*$,
\item $s_B = s_A1$ (i.e. a shift with $1$ of $s_A$) if $s_A = (10)^*$.
\end{itemize}
\end{restatable}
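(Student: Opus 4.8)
The plan is to leverage the recurrence characterization of Theorem~\ref{th:rthlrec} together with the known Fibonacci-numeration description of \textsc{Wythoff}'s \P-positions, adapting it to the modified set $P$. Recall from the proof of Theorem~\ref{rthl1} that $P$ differs from the set of \textsc{Wythoff} pairs by deleting the pairs $(u_{2n},u_{2n+1})$ and inserting the pairs $(u_{2n+1},u_{2n+2})$, and that in all cases $B_n = \lceil \Phi A_n \rceil$. The classical Fibonacci-numeration theorem for \textsc{Wythoff} states that $x$ is a (smaller) \textsc{Wythoff} coordinate $a_n$ iff the Fibonacci representation of $x$ ends in an even number of zeros, and that the corresponding $b_n$ is obtained by the left shift $s_A 0$ (this is the Zeckendorf/Fibonacci analogue of the Beatty relation $b_n = a_n + n$). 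So the task reduces to tracking exactly how the deletions and insertions alter the membership condition on Fibonacci strings.

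First I would record the Fibonacci representations of the special numbers $u_k = F_{k+1}-1$. Since $F_{k+1}-1 = F_k + F_{k-2} + F_{k-4} + \cdots$ (the Zeckendorf expansion of one less than a Fibonacci number), the representation of $u_k$ is precisely the alternating string $(10)^*$ or $(10)^*1$ depending on the parity of $k$: namely $u_{2n}$ has representation ending with a trailing $0$ of the form $(10)^*$ (even number of trailing zeros, in fact exactly one trailing zero but alternating all the way down), while $u_{2n+1}$ has representation $(10)^*1$ (ending in a $1$). I would verify this by induction or directly from the Zeckendorf expansion, and note that these are exactly the two exceptional string-shapes singled out in the statement. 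Then: removing $(u_{2n},u_{2n+1})$ from the \textsc{Wythoff} $A$-set removes precisely the strings of shape $(10)^*$ with an even number of trailing zeros that happen to equal some $u_{2n}$ — but wait, $(10)^*$ itself \emph{is} kept by the statement; what is removed is the pair, and what is added is $(u_{2n+1},u_{2n+2})$ whose smaller coordinate $u_{2n+1}$ has representation $(10)^*1$. So the net effect on the $A$-set is: discard $u_{2n}$, i.e. discard the string $(10)^*$ \emph{in its role as a \textsc{Wythoff} $a$-coordinate and in that specific pairing}, but $u_{2n}$ may re-enter — here I must be careful: the correct reading is that the new $A$-set consists of all old \textsc{Wythoff} $a$-coordinates whose string is not $(10)^*1$... let me restructure. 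The cleanest route is: (1) show $A = (A_{\text{Wythoff}} \setminus \{u_{2n}\}) \cup \{u_{2n+1}\}$ as sets, using complementarity of $A,B$ proved in Theorem~\ref{rthl1}; (2) translate $\{u_{2n}\}$ to strings $(10)^*$ and $\{u_{2n+1}\}$ to strings $(10)^*1$; (3) conclude $x \in A$ iff ($x$'s string ends in an even number of zeros and is not $(10)^*1$ — excluding the removed-but-reinterpreted case) or $x$'s string is $(10)^*$, matching the claim, where the subtlety is that $(10)^*$ ends in an \emph{odd} number (one) of zeros in the naive count but is explicitly re-admitted, while $(10)^*1$ ends in zero zeros (even) but is explicitly excluded.

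For the shift part I would argue as follows: for $A_n \neq (10)^*$, $(A_n,B_n)$ is an honest \textsc{Wythoff} pair (the deletions only touched the $u_{2n}$ family), so the classical rule $s_B = s_A 0$ applies verbatim. For the exceptional case $A_n = u_{2n+1}$ with string $(10)^*$, wait — $u_{2n+1}$ has string $(10)^*1$, not $(10)^*$; so the statement's ``$s_A = (10)^*$'' case must correspond to the \emph{retained Wythoff pairs whose smaller coordinate is $(10)^*$}, i.e. to $u_{2n}$... I see that I need to reconcile the indexing against Table~\ref{tab:ppos} to pin down which pairs carry which string, and then in the one anomalous family verify $B_n = \lceil \Phi A_n\rceil$ translates to the left shift $s_A 1$ rather than $s_A 0$ by a direct computation with the explicit formulas $u_{2n+2} = \lceil \Phi u_{2n+1}\rceil$ (already established in the proof of Theorem~\ref{rthl1}) and the known Fibonacci representation of $u_{2n+2}$, which is $(10)^*$ of the next length — precisely $s_A$ with a $1$ appended when $s_A = u_{2n+1}$'s representation is $(10)^*1$... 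The upshot is that the combinatorics of appending a bit to an alternating string is elementary once the dictionary $u_k \leftrightarrow (10)^*$-type strings is nailed down.

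The main obstacle I anticipate is bookkeeping the off-by-one shifts in the pairing: the set-level description $A = (A_{\text{W}}\setminus\{u_{2n}\})\cup\{u_{2n+1}\}$ is easy, but the statement also asserts a precise pairing $A_n \leftrightarrow B_n$, and because inserting and deleting elements reshuffles the increasing enumeration, I must check that the reshuffled pairing still satisfies $B_n = \lceil\Phi A_n\rceil$ (done in Theorem~\ref{rthl1}) \emph{and} that this coincides with the claimed string operation in every case, including the boundary where an inserted $u_{2n+1}$ sits adjacent to a deleted $u_{2n}$. Concretely, the risk is that ``$s_A$ ends in an even number of $0$s'' needs the exception list to be exactly $\{(10)^*1\}$ on the excluded side and $\{(10)^*\}$ on the re-included side, and verifying there is no further leakage — that no other \textsc{Wythoff} $a$-coordinate got displaced — requires the complementarity argument of Theorem~\ref{rthl1} in full. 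Once that is in hand, the shift statements follow by the standard lemma that for a Zeckendorf string $s$ not ending in $1$, appending $0$ gives the Zeckendorf string of the ``Wythoff-$b$'' partner, and a short separate check handles the single alternating-string family where appending $1$ is forced and legal (no two consecutive $1$s arise because $(10)^*$ ends in $0$). I would place all these verifications in the Appendix as promised, modeled line-by-line on the corresponding \textsc{Wythoff} proof.
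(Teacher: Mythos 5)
Your overall strategy is exactly the paper's: combine the classical Fibonacci-numeration characterization of \textsc{Wythoff} pairs with the explicit Zeckendorf expansions of the numbers $u_k=F_{k+1}-1$, then read off the effect of deleting the pairs $(u_{2n},u_{2n+1})$ and inserting $(u_{2n+1},u_{2n+2})$. The gap is in the one computation this reduction hinges on: you assign the wrong shapes to $u_{2n}$ and $u_{2n+1}$. Since $F_{2n+1}-1=\sum_{k=1}^{n}F_{2k}=F_2+F_4+\cdots+F_{2n}$, the representation of $u_{2n}=F_{2n+1}-1$ is $(10)^{n-1}1$, i.e.\ of shape $(10)^*1$, ending in a $1$ and hence with an even number (zero) of trailing zeros --- consistent with $u_{2n}$ being a \textsc{Wythoff} $a$-coordinate, as it must be by Proposition~\ref{prop:wythoffPairs}. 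Dually, $u_{2n+1}=F_{2n+2}-1=F_3+F_5+\cdots+F_{2n+1}$ has representation $(10)^n$, of shape $(10)^*$ with a single (odd) trailing zero. You state the opposite assignment, and your parenthetical ``even number of trailing zeros, in fact exactly one trailing zero'' is already self-contradictory. With the correct dictionary everything falls into place: the deleted smaller coordinates $u_{2n}$ are exactly the strings $(10)^*1$ (the exclusion clause), the inserted smaller coordinates $u_{2n+1}$ are exactly the strings $(10)^*$ (the re-inclusion clause), and the partner $u_{2n+2}$ of $u_{2n+1}$ has representation $(10)^{n}1=s_A1$, giving the anomalous shift rule.

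You do sense the inconsistency in your third paragraph, but your attempted resolution --- that the $s_A=(10)^*$ case ``must correspond to the retained Wythoff pairs whose smaller coordinate is $(10)^*$, i.e.\ to $u_{2n}$'' --- is also wrong, since the $(u_{2n},u_{2n+1})$ pairs are precisely the ones removed from $P$, not retained. As written, the proposal never pins down which alternating shape is excluded and which is re-admitted, and with your stated dictionary it would establish the theorem with the two exceptional clauses swapped, which is false. A sanity check against Table~\ref{tab:ppos} would have caught this: $u_2=1$ has representation $1$ and the pair $(1,2)$ is deleted, while $u_3=2$ has representation $10$, the pair $(2,4)$ appears in $P$, and $4$ has representation $101=s_A1$. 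Everything else in your plan (the complementarity argument, the classical shift lemma $s_B=s_A0$ for genuine \textsc{Wythoff} pairs, the single anomalous family handled separately) is sound and matches the paper's proof.
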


\begin{remark}
  Although the \P-positions of \Rthl\ with the smallest heap of size $1$ are well characterized, the values seem to have a much more complicated structure (see Figure~\ref{fig:grundyRthl}). Note that the patterns we obtained seem very close to those obtained in \cite{colorfullFamilies} about variations of the game \textsc{Nim}.
\end{remark}

\begin{figure}[!h]
	\centering
	\ifNB
	\includegraphics[width=.8\textwidth]{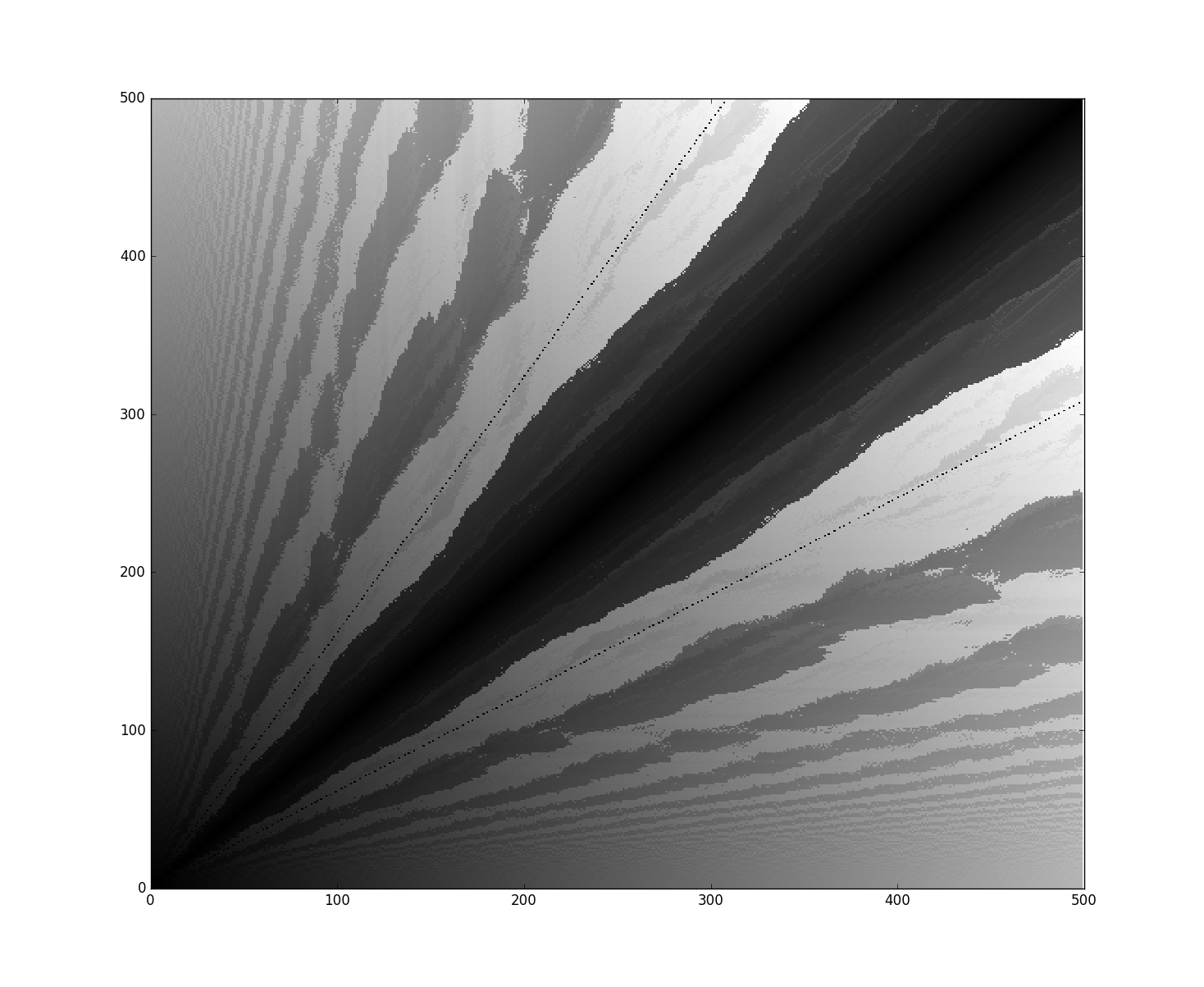}
	\caption{\label{fig:grundyRthl} Values for positions $(1,a,b)$ of $3$-heaps \Rthl. Black corresponds to small values and white to high values. The \P-positions can be seen along the line $y = \Phi x$ and its symmetric. Other small positive $\G$-values seem to be close to the diagonal $y = x$.}
	\else
	\includegraphics[width=.8\textwidth]{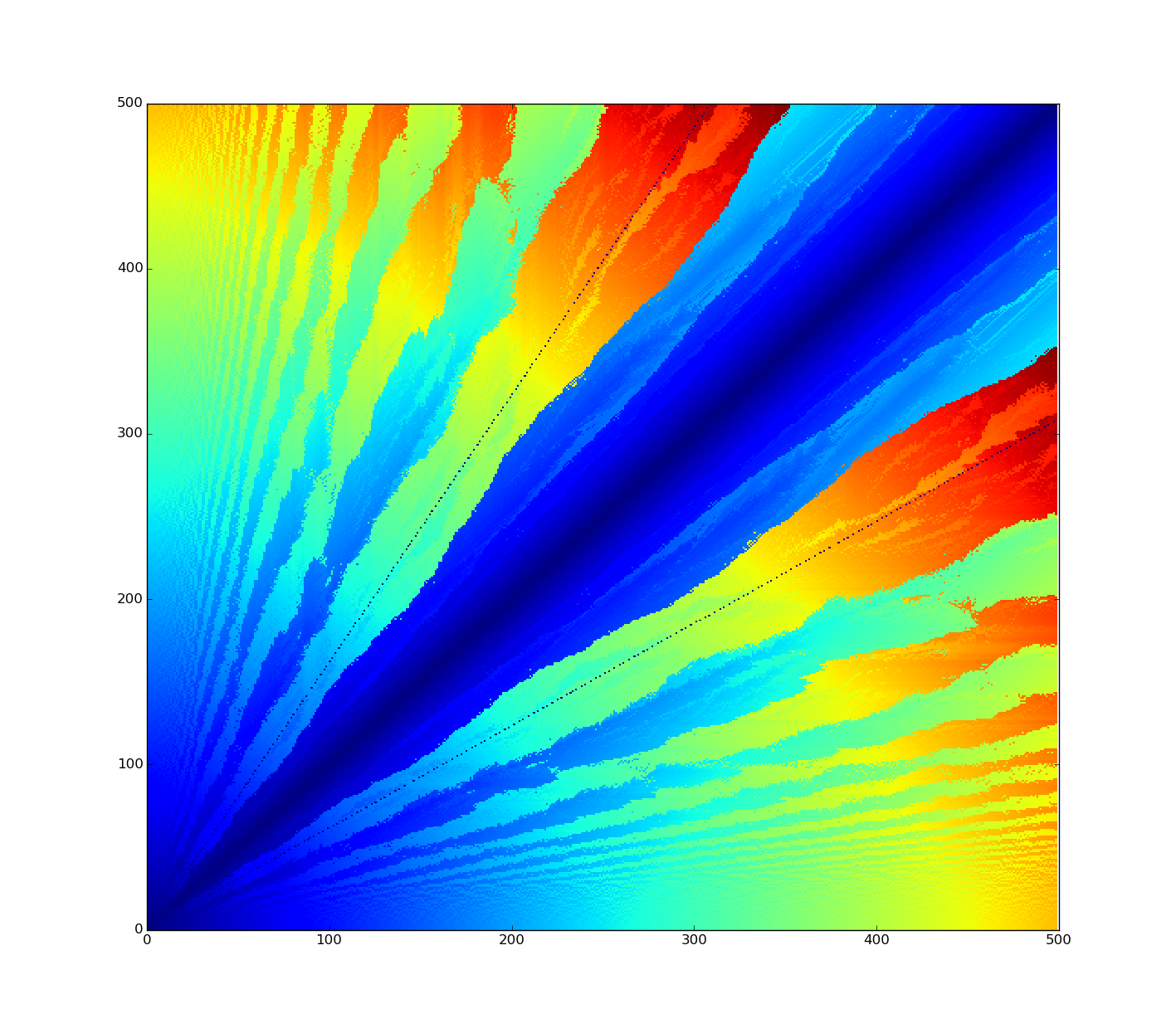}
	\caption{\label{fig:grundyRthl} Values for positions $(1,a,b)$ of $3$-heaps \Rthl. Color blue corresponds to small values while yellow and red correspond to high values. The \P-positions can be seen along the line $y = \Phi x$ and its symmetric. Other small positive $\G$-values seem to be close to the diagonal $y = x$.}
	\fi
\end{figure}

\subsection{3-heaps \Rthl: general case}

We have seen that $\Rthl(1,a,b)$ has the same outcome as the push ruleset $\textsc{Nim} \swComp \textsc{Euclid}$. In the more general case where the smallest heap contains $a>1$ tokens, the game cannot be viewed as a push ruleset. Indeed, there are three 'special moves' changing the rules of the game: removing the heap of size $a$, or decreasing one of the two other heaps to a size less than $a$. For this reason, the \P-positions seem more complicated to compute in the general $3$-heaps case. However we can still prove some properties on the \P-positions. These results are given in the following theorems. Since this part is less relevant to our study of push rulesets, only the statements of the theorems are given; the proofs are left in the Appendix.

\begin{restatable}{theorem}{ineqPpos}
	\label{th:ineqPpos}
Let $a, b$ and $c$ be three integers with $c\geq b \geq a > 0$ such that $(a,b,c)$ is a \P-position of \Rthl. The triple $(a,b,c)$ satisfies:
$$\lceil \Phi b \rceil \leq  c \leq \lceil \Phi b \rceil + a-1.$$
\end{restatable}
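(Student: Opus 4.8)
The plan is to prove the two inequalities separately: the lower bound $c\ge\lceil\Phi b\rceil$ by a short direct argument, and the upper bound $c\le\lceil\Phi b\rceil+a-1$ by strong induction on the total $a+b+c$, reasoning by contradiction. The one auxiliary fact used repeatedly is that a \Rthl\ position with exactly one empty heap plays exactly like two-heap \Rthl, hence on positive coordinates like \textsc{Euclid}; combined with Proposition~\ref{euclidPos} this says that $(0,x,y)$ with $0<x\le y$ is a \P-position if and only if $y<\Phi x$, equivalently (since $\Phi x\notin\mathbb Z$) if and only if $y<\lceil\Phi x\rceil$.

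For the lower bound, I would take a \P-position $(a,b,c)$ with $c<\lceil\Phi b\rceil$, so $c<\Phi b$ while $b\le c$, and observe that removing the whole smallest heap (a legal move, since $a\le b\le c$) reaches $(0,b,c)$, which is a \P-position by the fact above. Then $(a,b,c)$ has a winning move, contradicting that it is a \P-position. Hence every \P-position satisfies $c\ge\lceil\Phi b\rceil$; since this holds for every admissible size of the smallest heap, it will be available as a known fact inside the induction below.

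For the upper bound I would assume $(a,b,c)$ is a \P-position with $c\ge\lceil\Phi b\rceil+a$ and that the statement holds for all \P-positions of strictly smaller total; note this already forces $c\ge 2a+1$. Since $c-a\ge\lceil\Phi b\rceil>b$, the position $(a,b,c-a)$ is a legal option of $(a,b,c)$, hence an \N-position, so it has a winning move to some \P-position $Q$, and I would split on which heap of $(a,b,c-a)$ this move touches. If it acts on the heap of size $c-a$ (the largest one), then the resulting multiset of heap sizes is also reached directly from $(a,b,c)$ by removing the corresponding multiple of $a$ from the heap of size $c$, so $Q$ would be an option of the \P-position $(a,b,c)$ — impossible by Proposition~\ref{prop:outcome}. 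If it empties a heap of size $a$, then $Q=(0,b,c-a)$, which by the auxiliary fact is a \P-position only if $c-a<\lceil\Phi b\rceil$, contradicting $c-a\ge\lceil\Phi b\rceil$. Hence $a<b$ and the winning move reduces the heap of size $b$ to $b-ja$ for some $j\ge1$, giving $Q=(a,b-ja,c-a)$ up to reordering. If $b-ja=0$ this is the two-heap position $(a,c-a)$ with $c-a\ge\lceil\Phi b\rceil\ge\lceil\Phi a\rceil$, again not a \P-position. Otherwise $Q$ is a genuine three-heap \P-position whose largest heap is $c-a$, whose other two heaps are $a$ and $b-ja$, and whose total is strictly smaller; writing $s\le m$ for those two heaps, so that $m+s=a+(b-ja)$ and $s\ge1$, the inductive hypothesis applied to $Q$ gives $c-a\le\lceil\Phi m\rceil+s-1$. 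Comparing this with $c\ge\lceil\Phi b\rceil+a$ and using the elementary estimate $\lceil\Phi b\rceil-\lceil\Phi m\rceil>\Phi(b-m)-1\ge(b-m)-1$ (legitimate since $b>m$), hence $\lceil\Phi b\rceil-\lceil\Phi m\rceil\ge b-m$, I would obtain $b-m\le s-1$, i.e. $ja\le a-1$, which is absurd. This contradiction closes the induction.

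The routine ingredients are the golden-ratio and ceiling inequalities. The main obstacle, and the point requiring care, is the last case of the upper-bound argument: correctly identifying which heap becomes the smallest after a move of size $ja$ in the heap of size $b$ (the subcases $b-ja\ge a$, $0<b-ja<a$, and $b-ja=0$) so that the three-inequality comparison and the strict drop of the induction measure $a+b+c$ are valid in every branch. As a consistency check, for $a=1$ the two bounds together force $c=\lceil\Phi b\rceil$, in agreement with Theorem~\ref{th:rthlrec}.
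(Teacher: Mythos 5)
Your proof is correct and follows essentially the same strategy as the paper's: the lower bound via the move to $(0,b,c)$ and Proposition~\ref{euclidPos}, and the upper bound by contradiction on a minimal counterexample, analyzing which heap the winning response to a reduction of the largest heap must touch and ruling out each case with golden-ratio inequalities. The only differences are cosmetic — you induct on $a+b+c$ and step down to $(a,b,c-a)$, whereas the paper uses lexicographic minimality and steps down to the representative $c'\equiv c \bmod a$ lying in the window $[\lceil \Phi b\rceil, \lceil \Phi b\rceil + a - 1]$ — and you treat the $b-ja=0$ edge case more explicitly than the paper does.
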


\begin{restatable}{theorem}{moduloPpos}
Given two positive integers $a$ and $b$ with $b \geq a$, there are exactly $a$ values $c$ such that $(a,b,c)$ is a \P-position of \Rthl, one for each congruence class modulo $a$.
\end{restatable}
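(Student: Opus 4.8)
The plan is to combine Theorem~\ref{th:ineqPpos} with a strong induction on $b$ for a fixed smallest heap $a$, using the outcome recursion of Proposition~\ref{prop:outcome} together with the known $2$-heap \P-positions (Proposition~\ref{euclidPos}, and the remark that $2$-heap \Rthl\ and \textsc{Euclid} have the same \P-positions among positive pairs). Throughout, $c$ ranges over all of $\{0,1,2,\dots\}$ (note $(a,b,0)$ is the $2$-heap position $\{a,b\}$).

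\emph{At most one \P-position per residue class.} Suppose $c_1<c_2$ with $c_1\equiv c_2\pmod a$ and both $(a,b,c_1)$ and $(a,b,c_2)$ are \P-positions. Then $c_2\ge c_1+a\ge a$, so $a$ is the smallest nonzero heap of $(a,b,c_2)$, and removing the $c_2-c_1$ tokens (a positive multiple of $a$, at most $c_2$) from the last heap is a legal move from $(a,b,c_2)$ to the \P-position $(a,b,c_1)$, a contradiction. Hence there are at most $a$ values of $c$, at most one in each class modulo $a$.

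\emph{Exactly one per class.} Fix $a$ and induct on $b\ge a$, with inductive statement ``for every $b'$ with $a\le b'\le b$, the position $(a,b',\cdot)$ has exactly one \P-position in each class modulo $a$''. Put $R_b=\{\lceil\Phi b\rceil,\dots,\lceil\Phi b\rceil+a-1\}$; this block contains exactly one representative of each class modulo $a$, and by Theorem~\ref{th:ineqPpos} every \P-position $(a,b,c)$ with $c\ge b$ has $c\in R_b$. Fix a class $r$ and let $c^\star\in R_b$ be its representative; since $c^\star\ge\lceil\Phi b\rceil>b\ge a$, the position $(a,b,c^\star)$ has smallest heap $a$. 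If some $c<b$ with $c\equiv r$ gives a \P-position, then by the previous paragraph it is the unique \P-position in class $r$, and $c^\star$ is not one, since $(a,b,c^\star)\to(a,b,c)$ (removing the multiple $c^\star-c$ of $a$ from the last heap) is a legal move to a \P-position. Otherwise I claim $(a,b,c^\star)$ is itself a \P-position, hence again the unique one in class $r$; by Proposition~\ref{prop:outcome} this means every option of $(a,b,c^\star)$ is an \N-position, which I verify:
\begin{itemize}
\item the $2$-heap option $(b,c^\star)$, and the option $(a,c^\star)$ arising when $a\mid b$: these are \N by Proposition~\ref{euclidPos}, since $c^\star\ge\lceil\Phi b\rceil$, $b\ge a$, and $\Phi$ irrational force $c^\star/b>\Phi$ and $c^\star/a>\Phi$;
\item options $(a,b',c^\star)$ with $a\le b'<b$: here $b'\equiv b\pmod a$ forces $b\ge b'+a$, so $c^\star\ge\lceil\Phi b\rceil>\lceil\Phi b'\rceil+a-1=\max R_{b'}$ and $c^\star>b'$, hence by the induction hypothesis for the pair $(a,b')$ and Theorem~\ref{th:ineqPpos} this position is \N;
\item options $(a,b',c^\star)$ with $1\le b'<a$ (which occur only when $a\nmid b$): reordering the triple as $b'\le a\le c^\star$, a \P-position here would require $c^\star\le\lceil\Phi a\rceil+b'-1$ by Theorem~\ref{th:ineqPpos}, contradicting $c^\star\ge\lceil\Phi b\rceil\ge\lceil\Phi(b'+a)\rceil>\lceil\Phi a\rceil+b'-1$;
\item options $(a,b,c')$ with $c'<c^\star$ and $c'\equiv r$: if $c'\ge b$ then Theorem~\ref{th:ineqPpos} would put $c'$ in $R_b$, impossible since $c^\star$ is the only element of $R_b$ in class $r$; and if $c'<b$ it is \N by the case assumption.
\end{itemize}
Thus each of the $a$ classes contains exactly one \P-position, which is the theorem. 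The base case $b=a$ is the same computation: for $1\le c<a$ the position $(a,a,c)$ moves to the \P-position $(a,a)$, so $c=0$ is the \P-position in class $0$, and for $r\ne0$ the options of $(a,a,c^\star)$ are exactly those treated in the first and last bullets above (no middle-heap option with $0<b'<a$ arises).

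\emph{Where the work is.} The crux is the package of golden-ratio inequalities in the middle two bullets: one must establish that $\lceil\Phi b\rceil$ lies strictly above $\max R_{b'}=\lceil\Phi b'\rceil+a-1$ and above $\lceil\Phi a\rceil+b'-1$, which uses $b\ge b'+a$, the irrationality of $\Phi$ (so $\Phi x\notin\mathbb{Z}$ for integer $x\ge1$), and the elementary bound $\Phi a>a$. The remaining care points are bookkeeping: ensuring the induction hypothesis is stated uniformly over all middle-heap values $b'$ that can appear as options, and invoking the \Rthl-version rather than the \textsc{Euclid}-version of Proposition~\ref{euclidPos} for the $2$-heap options.
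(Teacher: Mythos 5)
Your proof is correct and follows essentially the same route as the paper's: for each residue class you take the unique candidate $c^\star$ in the window $[\lceil\Phi b\rceil,\lceil\Phi b\rceil+a-1]$ supplied by Theorem~\ref{th:ineqPpos} and show it is a \P-position unless a smaller member of the class already is, ruling out each type of option exactly as the paper does (the paper's version is a terser proof by contradiction that leaves the middle-heap inequalities and the ``at most one per class'' direction implicit). Your added bookkeeping --- the explicit uniqueness argument, the $c=0$ convention, and the golden-ratio estimates --- fills in details the paper only gestures at; the strong induction on $b$ is actually superfluous, since Theorem~\ref{th:ineqPpos} alone disposes of the options $(a,b',c^\star)$, but this does no harm.
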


Note that in the last theorem, we do not necessarily have $c \geq b$. In particular, not all positions satisfying the properties of Theorem~\ref{th:ineqPpos} are \P-positions.

\section{Push compounds of heap games}
\label{sec:heapGames}

In this section we investigate the push operator on several other well-known impartial rulesets played on heaps of tokens.

\subsection{Push compounds of {\sc Nim}, {\sc Wythoff} and {\sc Euclid}}

We start with examples of push rulesets played on two heaps of tokens. The set of positions of these games can be seen as the integer lattice $\mathbb N^ 2$. More precisely we will exhibit a full characterization of the \P-positions when we combine the rulesets of \textsc{Nim}, \textsc{Wythoff}, and \textsc{Euclid} (see Definitions \ref{euclidDef} and \ref{wythoffDef}).
Recall that the case $\textsc{Nim} \swComp \textsc{Euclid}$ was already considered in the previous section; it corresponds to $3$-heap \Rthl\ with the smallest heap of size $1$. The three results below cover the other compounds of \textsc{Nim} with \textsc{Euclid} and \textsc{Wythoff}.

\begin{proposition}
The \P-positions of $\textsc{Nim} \swComp \textsc{Wythoff}$ is the set $P_{N\swComp W}$ defined as:
$$ P_{N\swComp W} = \{(0,1); (1,0) ; (k,k), \ k \geq 2\}.$$
\end{proposition}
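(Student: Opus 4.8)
The strategy is to apply Corollary~\ref{cor:PposPush} directly to the candidate set $P = P_{N\swComp W} = \{(0,1);(1,0);(k,k),\ k\geq 2\}$, checking the three conditions $(i)$, $(ii)$, $(iii)$ for the push ruleset $\R_1 \swComp \R_2 = \textsc{Nim} \swComp \textsc{Wythoff}$. First I would recall the data we need about $\R_2 = \textsc{Wythoff}$: by Proposition~\ref{wythoffPpos}, the \P-positions of \textsc{Wythoff} are the pairs $(a_n,b_n)$ and $(b_n,a_n)$ with $a_n = \lfloor \Phi n\rfloor$, $b_n = a_n + n$; in particular $(0,0)$ is the only \P-position of \textsc{Wythoff} on the diagonal, since for $n\geq 1$ we have $a_n \neq b_n$. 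Also $(0,1)$ and $(1,0)$ are \N-positions of \textsc{Wythoff} (indeed $(0,1)=(a_0,\cdot)$ is not a \textsc{Wythoff} pair; there is a move $(0,1)\to(0,0)$). This immediately gives condition $(i)$: every element of $P$ is either $(0,1)$, $(1,0)$, or a diagonal point $(k,k)$ with $k\geq 2$, and none of these is a \textsc{Wythoff} \P-position, so $o_{\R_2}(g) = \N$ for all $g\in P$.

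For condition $(ii)$, I must show no \textsc{Nim} move connects two positions of $P$. A \textsc{Nim} move decreases exactly one coordinate. From $(k,k)$ with $k\geq 2$ a \textsc{Nim} move reaches $(k,j)$ or $(j,k)$ with $j<k$; such a point lies in $P$ only if it equals $(1,0)$ or $(0,1)$ (impossible since one coordinate is $k\geq 2$) or is diagonal (impossible since $j\neq k$). From $(0,1)$ the only \textsc{Nim} move is to $(0,0)\notin P$; likewise from $(1,0)$ to $(0,0)\notin P$. So $(ii)$ holds. For condition $(iii)$, I take an arbitrary $(a,b)\notin P$ and must show either $o_{\R_2}(a,b)=\P$ (i.e. $(a,b)$ is a \textsc{Wythoff} pair) or there is a \textsc{Nim} move from $(a,b)$ into $P$. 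Split into cases: if $a=b$ then $a=b\leq 1$ (else $(a,b)\in P$), so $(a,b)=(0,0)$, which is a \textsc{Wythoff} \P-position; done. If $a\neq b$, say WLOG $a<b$ (the case $a>b$ being symmetric, noting $P$ is symmetric under swapping coordinates): if $a\geq 2$ we make the \textsc{Nim} move $(a,b)\to(a,a)\in P$; if $a=1$ and $b\geq 2$ we move $(1,b)\to(1,1)$ — wait, $(1,1)\notin P$ — instead we move $(1,b)\to(0,1)$ by emptying the first heap? That is not a \textsc{Nim} move since it would need $b$ to become $1$; the legal \textsc{Nim} moves from $(1,b)$ reach $(0,b)$, $(1,j)$ for $j<b$. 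Among these, $(1,0)\in P$ (take $j=0$), so we move $(1,b)\to(1,0)\in P$. If $a=0$ and $b\geq 2$, the \textsc{Nim} moves from $(0,b)$ reach $(0,j)$ with $j<b$; taking $j=1$ gives $(0,1)\in P$. The remaining small cases $(a,b)\notin P$ with $\min(a,b)\leq 1$ not yet covered are just $(0,0)$ (handled) and there are none others. This exhausts all cases, so $(iii)$ holds.

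Having verified $(i)$, $(ii)$, $(iii)$, Corollary~\ref{cor:PposPush} yields that $P_{N\swComp W}$ is exactly the set of \P-positions of $\textsc{Nim}\swComp\textsc{Wythoff}$. The only genuinely delicate point — the potential main obstacle — is bookkeeping the small positions near the origin correctly: one must be careful that $(0,1)$, $(1,0)$ are \N-positions of \textsc{Wythoff} (so they legitimately belong in $P$), that $(0,0)$ is handled by the "$o_{\R_2}=\P$" clause rather than by a \textsc{Nim} move, and that the diagonal points $(1,1)$ is correctly excluded from $P$ (it is an \N-position: push the button, since $(1,1)$ is not a \textsc{Wythoff} pair — indeed in \textsc{Wythoff} one removes one from both heaps reaching $(0,0)$, so $(1,1)$ is an \N-position of \textsc{Wythoff}, consistent with condition $(iii)$ applied to $(1,1)$ via the \textsc{Nim} move $(1,1)\to(1,0)\in P$). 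Everything else is a routine case check on \textsc{Nim} moves (which decrease a single coordinate) against the very simple shape of $P$.
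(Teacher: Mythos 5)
Your verification is correct: the three conditions of Corollary~\ref{cor:PposPush} all hold for $P_{N\swComp W}$, and your case analysis (after the self-correction in the last paragraph, which patches the momentary omission of $(1,1)$ from the diagonal case) is exhaustive. However, you take a genuinely different route from the paper. The paper's proof is a one-liner by comparison: it observes that $P_{N\swComp W}$ is precisely the set of \P-positions of two-heap \textsc{Nim} under \emph{mis\`ere} convention, and then invokes the general sufficient condition of Proposition~\ref{prop:push}$(iii)$, which only requires checking that $(a)$ the mis\`ere \P-positions of \textsc{Nim} are all \N-positions of \textsc{Wythoff} (true, since the \textsc{Wythoff} pairs $(\lfloor\Phi n\rfloor,\lfloor\Phi n\rfloor+n)$ never lie in $P_{N\swComp W}$) and $(b)$ the unique terminal position $(0,0)$ of \textsc{Nim} is a \P-position of \textsc{Wythoff}. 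What the paper's approach buys is an explanation of \emph{why} this particular set appears --- the button acts as a forced extra move at the end, turning normal play of $\R_1$ into mis\`ere play --- and it reuses machinery that also handles $\textsc{Euclid}\swComp\textsc{Nim}$ and the \textsc{Subtraction} compounds. What your approach buys is self-containedness: you never need to know (or cite) the mis\`ere theory of \textsc{Nim}, at the cost of redoing by hand the case analysis that Proposition~\ref{prop:push}$(iii)$ packages once and for all. Both are valid; yours is essentially an inlined proof of the special case of Proposition~\ref{prop:push}$(iii)$ used here.
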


\begin{proof}
The set $P_{N\swComp W}$ is known as the set of \P-positions of \textsc{Nim} under misere convention \cite{nim}.
Using Proposition \ref{prop:push} (iii), one only needs to show that:
\begin{itemize}
\item If $g$ is a position with no option for \textsc{Nim}, then it is a \P-position for \textsc{Wythoff}. This is immediate because the only position with no option for \textsc{Nim} is $(0,0)$.
\item All the positions of $P_{N\swComp W}$ are winning for \textsc{Wythoff}. This also holds since the \P-positions of \textsc{Wythoff} are of the form $(\lfloor\Phi n\rfloor, \lfloor\Phi n\rfloor+n)$ (see Proposition \ref{wythoffPpos}), and none of these positions is in $ P_{N\swComp W}$.
\end{itemize}
\end{proof}

\begin{proposition}
	A position $(a,b)$ with $a\leq b$ is a \P-positions of $\textsc{Euclid} \swComp \textsc{Nim}$ if and only if one of the following assumptions holds:
\begin{itemize}
\item $\frac b a < \Phi$ and $\frac b a \neq \frac{F_{2i+2}}{F_{2i+1}}$ for any $i$,
\item $\frac b a = \frac{F_{2i+1}}{F_{2i}}$ for some $i$,
\end{itemize}
where the $F_i$ are the Fibonacci numbers.
\end{proposition}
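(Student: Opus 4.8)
Here is how I would approach the proof.

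The plan is to verify that the announced set $P$ satisfies the three conditions of Corollary~\ref{cor:PposPush} for the push compound $\R_1\swComp\R_2$ with $\R_1=\textsc{Euclid}$ and $\R_2=\textsc{Nim}$; throughout I restrict to positions $(a,b)$ with $1\le a\le b$, which is what the ratio conditions presuppose. Since a two-heap \textsc{Nim} position has outcome $\N$ exactly when its two heaps are unequal, condition~(i) merely asks that $a\neq b$ for every $(a,b)\in P$, and condition~(iii) is automatic at positions with $a=b$; so the real content is the behaviour of \textsc{Euclid}-moves in and out of $P$. (Conceptually: the \textsc{Euclid}-terminal positions $(a,a)$ are exactly the \textsc{Nim}-\P-positions, so Proposition~\ref{prop:push}~(iii) forces $P$ to be the set of \P-positions of \textsc{Euclid} under the \emph{misère} convention; this explains the shape of the answer, but the verification below is self-contained.)

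For the arithmetic, put $f(\theta)=1/(\theta-1)$ for $\theta>1$. As $\Phi=1+1/\Phi$ is the unique fixed point of $\theta\mapsto 1+1/\theta$, we have $f(\Phi)=\Phi$, hence $\theta<\Phi\iff f(\theta)>\Phi$, and $f$ restricts to a decreasing bijection $(1,2)\to(1,\infty)$ with inverse $\sigma\mapsto 1+1/\sigma$. The recurrence $F_{k+1}=F_k+F_{k-1}$ gives $f(F_{k+1}/F_k)=F_k/F_{k-1}$. Writing $\phi_i^{-}=F_{2i+2}/F_{2i+1}$ and $\phi_i^{+}=F_{2i+1}/F_{2i}$ (so $\phi_0^{-}=1$, and for $i\ge1$ we have $\phi_i^{-}\in(1,\Phi)$ and $\phi_i^{+}\in(\Phi,2]$ with $\phi_1^{+}=2$), the above identities read $f(\phi_i^{-})=\phi_i^{+}$ and $f(\phi_i^{+})=\phi_{i-1}^{-}$, whence also $1+1/\phi_i^{+}=\phi_i^{-}$ and $1+1/\phi_i^{-}=\phi_{i+1}^{+}$. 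In this notation, $(a,b)\in P$ iff $b/a\in(1,\Phi)\setminus\{\phi_i^{-}:i\ge1\}$ or $b/a\in\{\phi_i^{+}:i\ge1\}$; all these ratios lie in $(1,2]$, so no element of $P$ has $a=b$, giving condition~(i).

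It remains to handle \textsc{Euclid}-moves: condition (ii) says every move from a position of $P$ leaves $P$, and the outstanding part of (iii) says every $(a,b)\notin P$ with $a<b$ has a move into $P$. When $1<b/a<2$ the only move is to $(b-a,a)$, of ratio $f(b/a)$. This establishes (ii) when $b/a<\Phi$ is not a forbidden $\phi_i^{-}$ (then $f(b/a)>\Phi$ and equals no $\phi_i^{+}$, otherwise $b/a=1+1/\phi_i^{+}=\phi_i^{-}$) and when $b/a=\phi_i^{+}$ (then $f(b/a)=\phi_{i-1}^{-}$, the subcase $b/a=2$ giving the forced move to $(a,a)$ of ratio $1=\phi_0^{-}$); and it establishes (iii) when $b/a=\phi_i^{-}$ (then $f(b/a)=\phi_i^{+}\in P$) and when $\Phi<b/a<2$ with $b/a\neq\phi_i^{+}$ (then $f(b/a)\in(1,\Phi)$ and equals no $\phi_i^{-}$, otherwise $b/a=1+1/\phi_i^{-}=\phi_{i+1}^{+}$). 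The remaining case of (iii) is $b/a>2$: write $b=qa+r$ with $0\le r<a$ and $q\ge2$. If $r=0$ then $q\ge3$ and the move to $(a,2a)$ reaches ratio $2=\phi_1^{+}\in P$. If $r\ge1$ then the moves to $(a,a+r)$ and to $(r,a)$ (both legal since $q\ge2$) have ratios $\rho:=1+r/a\in(1,2)$ and $f(\rho)$ respectively; exactly one of $\rho$ and $f(\rho)$ is below $\Phi$, and if that one is a forbidden $\phi_i^{-}$ then the other ratio equals $\phi_i^{+}$ (when $\rho<\Phi$) or $\phi_{i+1}^{+}$ (when $f(\rho)<\Phi$), so in every case one of the two targets lies in $P$.

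I expect the main obstacle to be this last case. When $b/a>2$ the player to move has several options, and the natural target $(a,a+r)$ need not lie in $P$: its ratio can be one of the forbidden $\phi_i^{-}$, or can exceed $\Phi$ without being a Fibonacci ratio $\phi_j^{+}$. What makes the argument work is that the two usable targets $(a,a+r)$ and $(r,a)$ have ratios linked by $f$ and its inverse, and both of these maps carry the forbidden ratios $\{\phi_i^{-}\}$ into the admissible Fibonacci ratios $\{\phi_i^{+}\}$; hence whenever one of the two targets fails to be in $P$, the other one succeeds.
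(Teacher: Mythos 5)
Your argument is correct, and it takes a genuinely more self-contained route than the paper. The paper's proof is essentially two lines: it observes that the terminal positions of \textsc{Euclid} are exactly the \textsc{Nim} \P-positions $(a,a)$, that $(a,a)\notin P$ because $a/a=F_2/F_1$, and then invokes Proposition~\ref{prop:push}~(iii) --- outsourcing the only substantive fact, namely that $P$ is the set of mis\`ere \P-positions of \textsc{Euclid}, to the reference cited there. You instead verify the three conditions of Corollary~\ref{cor:PposPush} directly, which amounts to reproving that cited characterization from scratch via the map $f(\theta)=1/(\theta-1)$ on ratios and the identities $f(F_{k+1}/F_k)=F_k/F_{k-1}$; in particular your analysis of the case $b/a>2$, where the two usable targets $(a,a+r)$ and $(r,a)$ have $f$-linked ratios so that at least one lands in $P$, is precisely the work the paper never has to write down. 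The checks themselves are sound: conditions (i)--(iii) reduce correctly (Nim outcome $\N$ iff $a\neq b$, and $\phi_0^-=1$ excludes the diagonal from $P$), the forbidden/admissible ratio sets $\{\phi_i^-\}$ and $\{\phi_i^+\}$ are swapped by $f$ and its inverse exactly as you claim, and the forced move at $b/a=2$ is handled. What the paper's route buys is brevity and reuse of a packaged lemma; what yours buys is independence from the external reference together with an explicit winning strategy --- and, as a by-product, a self-contained proof of the mis\`ere \textsc{Euclid} characterization itself, which you could state explicitly since your (i)--(iii) verification is exactly the defining property of that set.
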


\begin{proof}
If we note $P$ the set of positions defined by the union of the two above conditions, then $P$ is the set of \P-positions of \textsc{Euclid} in misere convention \cite{euclidMisere}.
To show that this also corresponds to the set of \P-positions of  $\textsc{Euclid} \swComp \textsc{Nim}$, using Property \ref{prop:push}, we only need to show that for any position in $P$, this position is winning for \textsc{Nim}, and that all the terminal positions of \textsc{Euclid} are \P-positions of \textsc{Nim}. This is straightforward since the \P-positions of \textsc{Nim} are of the form $(a,a)$, and $a/a = 1 = \frac{F_2}{F_1}$, thus $(a,a) \not \in P$. In addition, the terminal positions of \textsc{Euclid} are exactly these positions.
\end{proof}

\begin{proposition}
The \P-positions of $\textsc{Wythoff} \swComp \textsc{Nim}$ is the set $P_{W \swComp N}$ defined as follows:
$$P_{W \swComp N} = \{(a_n -1, b_n -1) \quad n \geq 1\}.$$ where $(a_n,b_n)$ are the \P-positions of \textsc{Wythoff}.
\end{proposition}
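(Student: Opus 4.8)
The plan is to apply Corollary~\ref{cor:PposPush} directly to the candidate set $P_{W \swComp N} = \{(a_n - 1, b_n - 1) : n \geq 1\}$, where $(a_n, b_n)$ are the \textsc{Wythoff} pairs. (Note that $n \geq 1$ is used so that $(a_1 - 1, b_1 - 1) = (0, 1)$ is included but the degenerate shift of $(0,0)$ is not.) The three conditions to verify are: (i) every $g \in P_{W\swComp N}$ is winning for \textsc{Nim}, i.e.\ has unequal coordinates; (ii) no \textsc{Wythoff} move connects two elements of $P_{W\swComp N}$; and (iii) from every $g \notin P_{W\swComp N}$, either $g$ is a \P-position of \textsc{Nim} (equal coordinates) or there is a \textsc{Wythoff} move into $P_{W\swComp N}$.

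For (i): since $a_n \neq b_n$ for all $n \geq 1$ (indeed $b_n = a_n + n > a_n$), we get $a_n - 1 \neq b_n - 1$, so $(a_n - 1, b_n - 1)$ is an \N-position of \textsc{Nim}; this handles the condition $o_{\textsc{Nim}}(g) = \N$. For (ii) and (iii), the key observation is that a \textsc{Wythoff} move shifts a position by subtracting $(k, 0)$, $(0, k)$, or $(k, k)$; equivalently, the "shift-by-$(-1,-1)$" map is a bijection between $\mathbb{N}_{\geq 1}^2$-type positions and a translate, and \textsc{Wythoff} moves are invariant under translation by $(1,1)$ in the sense that $(x,y) \to (x',y')$ is a legal \textsc{Wythoff} move iff $(x+1,y+1) \to (x'+1, y'+1)$ is, \emph{provided} both target coordinates stay nonnegative. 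So property (ii) — no \textsc{Wythoff} move between two $P_{W\swComp N}$ positions — follows because a move $(a_n - 1, b_n - 1) \to (a_m - 1, b_m - 1)$ would, after translating by $(1,1)$, give a \textsc{Wythoff} move $(a_n, b_n) \to (a_m, b_m)$ between two \textsc{Wythoff} \P-positions, contradicting Proposition~\ref{wythoffPpos}. One must only check separately the boundary positions where a coordinate could hit $-1$ after the move but not after translation, e.g.\ moves out of $(0,1)$; these are few and can be checked by hand.

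For (iii), take $(a,b) \notin P_{W\swComp N}$ with, say, $a \leq b$. If $a = b$ it is a \P-position of \textsc{Nim}, done. If $a < b$, consider $(a+1, b+1)$: this is not a \textsc{Wythoff} \P-position (else $(a,b) \in P_{W\swComp N}$), so by the winning strategy for \textsc{Wythoff} there is a \textsc{Wythoff} move $(a+1,b+1) \to (a_m, b_m)$ (up to ordering) landing on a \textsc{Wythoff} pair. Translating back by $(-1,-1)$ gives a \textsc{Wythoff} move $(a,b) \to (a_m - 1, b_m - 1) \in P_{W\swComp N}$, as long as the target coordinates are nonnegative — which holds unless the \textsc{Wythoff} move from $(a+1,b+1)$ goes to $(0,0)$, i.e.\ $(a+1,b+1) = (k,k)$, but that case was already excluded since $a < b$. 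The \emph{main obstacle} is the careful bookkeeping of these boundary cases: the translation-invariance of \textsc{Wythoff} moves fails exactly at the lattice boundary (when subtracting would overshoot $0$), and $\textsc{Nim}$'s terminal position $(0,0)$ and the \textsc{Wythoff} origin interact with the shift; so one needs to enumerate the small positions ($a \leq 1$ or $b \leq 1$, and positions near the \textsc{Wythoff} pair $(0,0)$, $(1,2)$) explicitly and confirm the outcome matches. Everything else is a routine translation argument riding on Proposition~\ref{wythoffPpos} and Proposition~\ref{prop:push}(iii)'s hypothesis that \textsc{Nim}'s only terminal position $(0,0)$ lies outside the candidate set and is... in fact here we cannot use Proposition~\ref{prop:push}(iii) verbatim since $P_{W\swComp N}$ need not be the misère \P-set of \textsc{Wythoff}, so the direct verification via Corollary~\ref{cor:PposPush} is the cleaner route.
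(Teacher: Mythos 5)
Your proposal is correct, and it reaches the result by a genuinely different mechanism than the paper, even though both arguments are organized around the same three conditions of Corollary~\ref{cor:PposPush}. The paper re-derives the classical \textsc{Wythoff} argument on the shifted sequences: it notes that $\{a_n-1\}_{n\geq 1}$ and $\{b_n-1\}_{n\geq 1}$ are complementary with pairwise distinct differences $n$, and then runs the explicit case analysis ($a=b_n-1$ versus $a=a_n-1$, $b$ above or below the paired value) to exhibit the winning move, including the diagonal move to $(a_m-1,b_m-1)$ with $m=b-a$. You instead exploit the translation-invariance of the \textsc{Wythoff} move set under the shift by $(1,1)$ and pull back the known winning strategy from $(a+1,b+1)$, using Proposition~\ref{wythoffPpos} as a black box. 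Your route is shorter and explains \emph{why} the answer is a unit shift of the \textsc{Wythoff} \P-positions, at the price of boundary bookkeeping; the paper's route is self-contained and names the winning move explicitly, at the price of repeating the \textsc{Wythoff} proof. Two small remarks on your bookkeeping: for condition (ii) the translation goes from the shifted world to the unshifted one, where coordinates can only increase, so no boundary check is actually needed there (your caution about moves out of $(0,1)$ is harmless but superfluous); and for condition (iii) you correctly isolate the only genuine obstruction, namely a winning move of $(a+1,b+1)$ landing on $(0,0)$, which is ruled out because only diagonal positions can reach $(0,0)$ and you have $a<b$. Your closing observation that Proposition~\ref{prop:push} cannot be invoked here matches the paper's own remark.
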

This is an example where the \P-positions of the compound ruleset differs from the \P-positions of the original ruleset (both in normal or misere convention). Therefore, Proposition~\ref{prop:push} cannot be applied.

\begin{proof}
According to Proposition~\ref{wythoffPpos}, the sets $A = \{ a_n - 1\}_{n\geq 1}$ and $B = \{ b_n - 1\}_{n\geq 1}$ are complementary sets and $(a_n - 1) - (b_n -1) = a_n - b_n = n$. As a consequence, there is no move, according to \textsc{Wythoff}, from one position in $\P_{W \swComp N}$ to another. Moreover, since $a_n - b_n = n > 0$ for $n \geq 1$, none of these elements are \P-positions of {\sc Nim}. Hence, to prove that this set is the set of \P-positions, it only remains to show that for any position not in $\P_{W \swComp N}$, either pushing the button is a winning move, or there is a move according to \textsc{Wythoff} ruleset to a position in $\P_{W \swComp N}$.
The proof is essentially the same as for the classical \textsc{Wythoff}.

Take a position $(a,b) \not \in \P_{W \swComp N}$ with $0\leq a \leq b$. If $a = b$, changing the rules to \textsc{Nim} (by pushing the button) is a winning move. Suppose now that $a < b$. Since $A$ and $B$ are complementary, there are two possibilities:
\begin{itemize}
	\item $a = b_n-1$ for some $n \geq 1$, then $b > a = b_n-1 \geq a_n-1$. In this case there is a move to $(b_n -1, a_n -1)$ by playing on the second heap.
	\item $a = a_n -1$ for some $n \geq 1$. If $b > b_n -1$, then there is a move to $(a_n -1, b_n -1)$ by playing on the second heap. Otherwise $b < b_n -1$, and let $m = b -a$. We have $ m < b_n -1 - a_n -1 = n$. Consequently, by choosing $k = a_m - a_n$, the move to $(a-k,b-k) = (a_m -1, b_m -1)$ is a winning move.
\end{itemize}
\end{proof}

\subsection{\textsc{Subtraction} games}

We now consider the push operator applied to \Subs\ rulesets.
\begin{definition}
	Let $S$ be a finite set of positive integers. The ruleset $\Subs(S)$ is played with one heap of tokens. At his turn a player can remove $v \in S$ tokens from the heap, provided there are at least $v$ tokens in the heap.
\end{definition}

For a ruleset $\R = \Subs(S)$ for some fixed set $S$, it is common to look at the sequence of values $(\G_\R(n))_{n \geq 0}$. It is already known that the sequence of values of \Subs\ games are periodic~\cite{winningWays}. A special case of this result is given in the proposition below:

\begin{proposition}[\cite{winningWays}]
	\label{prop:subsk}
	Let $k$ be a positive integer, and $\R = \Subs(\{1,\ldots k\})$ be a \Subs\ ruleset. The sequence of outcomes in normal and misere convention are periodic with period $k+1$. More precisely:
	\begin{itemize}
		\item In normal convention, $o_\R(n) = \P \iff n \equiv 0 \bmod (k+1)$.
		\item In misere convention, $o^-_\R(n) = \P \iff n \equiv 1 \bmod (k+1)$.
	\end{itemize}
\end{proposition}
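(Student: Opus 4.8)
The plan is to prove both equivalences directly by strong induction on the heap size $n$, invoking the outcome recursion of Proposition~\ref{prop:outcome}; the stated periodicity with period $k+1$ is then immediate from the explicit description of the $\P$-positions. (This could also be quoted as a special case of the general periodicity theorem for \Subs\ games, but for the very structured set $S=\{1,\dots,k\}$ a self-contained induction is shorter.) The one combinatorial observation underlying everything is that the options of the position $n$ are exactly $n-1,\,n-2,\,\dots,\,\max(0,n-k)$, i.e.\ at most $k$ consecutive integers immediately below $n$.

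For the normal convention I would argue as follows. The base case $n=0$ holds since $\R(0)=\emptyset$ gives $o_\R(0)=\P$, and $0\equiv 0\bmod(k+1)$. For $n\geq 1$: if $(k+1)\mid n$ then $n\geq k+1$, so the options are $n-1,\dots,n-k$, and none of these is divisible by $k+1$; by the induction hypothesis every option is an $\N$-position, hence $o_\R(n)=\P$. If $(k+1)\nmid n$, write $n=q(k+1)+r$ with $1\leq r\leq k$; then $r\leq\min(k,n)$, so $n-r=q(k+1)$ is a legal option and is a $\P$-position by induction, hence $o_\R(n)=\N$. This closes the induction.

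For the misère convention the scheme is the same but uses the misère clause of Proposition~\ref{prop:outcome}, and one must separate off the small positions. The base cases are $n=0$ (here $\R(0)=\emptyset$ forces $o^-_\R(0)=\N$, and $0\not\equiv 1$) and $n=1$ (the unique option $0$ is an $\N$-position, so $o^-_\R(1)=\P$, and $1\equiv 1$). For $n\geq 2$: if $n\equiv 1\bmod(k+1)$ then in fact $n\geq k+2$, the $k$ options $n-1,\dots,n-k$ have residues $0,k,k-1,\dots,2$ modulo $k+1$, so none is $\equiv 1$; by induction all are $\N$-positions while $\R(n)\neq\emptyset$, giving $o^-_\R(n)=\P$. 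If $n\not\equiv 1\bmod(k+1)$, let $r$ be the residue of $n$ and put $v=k$ when $r=0$ and $v=r-1$ when $2\leq r\leq k$; in both cases $v\in\{1,\dots,k\}$, $n-v\equiv 1\bmod(k+1)$, and $n-v\geq 1$ (using $n\geq k+1$ when $r=0$ and $n\geq r$ when $r\geq 2$), so $n-v$ is a legal option and is a $\P$-position by induction, giving $o^-_\R(n)=\N$. The main (and rather mild) obstacle is exactly this misère boundary bookkeeping — verifying that the ``corrective'' move $n-v$ exists and stays non-negative for the small residue classes, and peeling off $n=1$ as its own base case; once the two $\P$-sets are pinned down, periodicity with period $k+1$ follows by definition.
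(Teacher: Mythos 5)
Your proof is correct: both inductions are sound, the residue computations check out, and the misère boundary cases ($n=0$, $n=1$, and the legality of the corrective move $n-v$ for residues $r=0$ and $2\leq r\leq k$) are all handled properly. Note, however, that the paper gives no proof of this proposition at all — it is simply quoted from the cited reference as a known special case of the periodicity of \Subs\ games — so there is nothing to compare against; your self-contained strong induction via Proposition~\ref{prop:outcome} is the standard textbook argument and is a perfectly good way to establish it.
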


We show that the periodicity of outcomes and values still holds if \Subs\ rulesets are composed using the push-the-button construction.

\begin{theorem}
	\label{th:subtraction2}
	Suppose that $\R_2$ is a ruleset over $\mathbb N$ with periodic \P-positions of period $k$, and $\R_1= \Subs(S)$ for a finite set $S$. The ruleset $\R_1 \swComp \R_2$ has ultimately periodic \P-positions. The lengths of the period and pre-period are at most $k\cdot2^{\max(S)}$.
	
	In addition, if the values of $\R_2$ are also periodic of period $k$, then the values of $\R_1 \swComp \R_2$ are periodic with period and pre-period at most $k(|S|+2)^{\max(S)}$.
\end{theorem}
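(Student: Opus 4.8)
### Proof strategy for Theorem~\ref{th:subtraction2}

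The plan is to use the standard ``finite-state automaton'' argument for periodicity of subtraction games, adapted to the push compound. The key observation is that, by Proposition~\ref{prop:PposPush}, the outcome $o_{\R_1 \swComp \R_2}(n)$ is determined by two pieces of data: first, whether $o_{\R_2}(n) = \N$ (which, since $\R_2$ has periodic $\P$-positions of period $k$, depends only on $n \bmod k$); and second, the outcomes $o_{\R_1 \swComp \R_2}(n - s)$ for $s \in S$ (which matter only through the Boolean vector $\big(o_{\R_1 \swComp \R_2}(n-s)\big)_{s \in S}$, since $\R_1 = \Subs(S)$ can only subtract an element of $S$). Actually it is cleaner to record the window $W(n) = \big(o_{\R_1 \swComp \R_2}(n-1), \ldots, o_{\R_1 \swComp \R_2}(n-\max(S))\big) \in \{\P,\N\}^{\max(S)}$, since $W(n+1)$ is determined by $W(n)$ together with the new value $o_{\R_1 \swComp \R_2}(n)$, which in turn is determined by $W(n)$ (reading off the relevant $s$-entries) and by $n \bmod k$.

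First I would make this precise: define the state $\sigma(n) = \big(n \bmod k,\, W(n)\big)$, living in a set of size $k \cdot 2^{\max(S)}$. Then I would show that $\sigma(n+1)$ is a function of $\sigma(n)$ alone, i.e. there is a transition map $T$ with $\sigma(n+1) = T(\sigma(n))$. Indeed $n+1 \bmod k$ is determined by $n \bmod k$; and $W(n+1)$ is obtained from $W(n)$ by shifting and prepending $o_{\R_1\swComp\R_2}(n)$, which by Proposition~\ref{prop:PposPush} equals $\P$ iff $o_{\R_2}(n) = \N$ (a function of $n \bmod k$) and every $s \in S$ with $s \le n$ has $o_{\R_1\swComp\R_2}(n-s) = \N$ (read from $W(n)$ --- and for $n$ large enough that $n \ge \max(S)$, the condition $s \le n$ is automatic, so one must only be slightly careful about the finitely many initial positions $n < \max(S)$, which is what produces a pre-period rather than pure periodicity). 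Since $\sigma$ takes values in a finite set and evolves deterministically, the sequence $\big(\sigma(n)\big)_{n}$ is eventually periodic, with period and pre-period together bounded by the number of states $k \cdot 2^{\max(S)}$; and $o_{\R_1 \swComp \R_2}(n)$ is a function of $\sigma(n)$, so it inherits the same bound. This gives the first assertion.

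For the second assertion, about values, I would run the identical argument with $\P/\N$ replaced by Sprague--Grundy values. The only extra input needed is an a priori bound on how large $\G_{\R_1 \swComp \R_2}(n)$ can be: since from $(1,n)$ there are at most $|S|$ moves according to $\R_1$ plus the single push move, the $\mex$ is taken over a set of size at most $|S|+1$, so $\G_{\R_1\swComp\R_2}(n) \le |S|+1$, i.e. the value lies in a set of size $|S|+2$. Thus the window $W(n) = \big(\G_{\R_1\swComp\R_2}(n-1),\ldots,\G_{\R_1\swComp\R_2}(n-\max(S))\big)$ lives in a set of size $(|S|+2)^{\max(S)}$, the state $\sigma(n) = (n \bmod k, W(n))$ ranges over $k(|S|+2)^{\max(S)}$ values, and $\G_{\R_1\swComp\R_2}(n)$ is determined by $\sigma(n)$ (it is the $\mex$ of $\G_{\R_2}(n)$ --- a function of $n\bmod k$ by the assumed periodicity of $\R_2$'s values --- together with the relevant entries of $W(n)$). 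The same eventual-periodicity conclusion then yields period and pre-period at most $k(|S|+2)^{\max(S)}$.

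The main obstacle is not conceptual but bookkeeping: one has to be careful that the transition map is genuinely well-defined, i.e. that $o_{\R_1\swComp\R_2}(n)$ (resp.\ $\G_{\R_1\swComp\R_2}(n)$) really depends on $n$ only through $n \bmod k$ and the finite window $W(n)$, and in particular that the boundary effects for $n < \max(S)$ are absorbed into the pre-period and do not spoil the state-count bound. It is also worth remarking that the theorem as stated claims \emph{periodicity} (not merely eventual periodicity) for the values; this follows because $o_{\R_2}$ periodic already forces, via part~(i) of Proposition~\ref{prop:push}, that $\G_{\R_1\swComp\R_2}(n) = 0$ whenever $o_{\R_2}(n)=\P$, but in the general bound one should be content with ``ultimately periodic / periodic with the stated pre-period,'' matching the statement. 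I would double-check the precise form of the claim against the constants $k\cdot 2^{\max(S)}$ and $k(|S|+2)^{\max(S)}$ to make sure the pre-period is indeed being counted the way the state-counting argument gives it.
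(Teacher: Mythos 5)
Your proposal is correct and follows essentially the same finite-state-automaton argument as the paper: encode the state as the residue of $n$ modulo $k$ together with the window of the last $\max(S)$ outcomes (resp.\ Grundy values, which lie in $\{0,\ldots,|S|+1\}$ since there are at most $|S|+1$ moves), observe that the transition is deterministic, and conclude eventual periodicity with period plus pre-period bounded by the number of states, giving exactly the stated bounds. The only quibble is your closing aside invoking Proposition~\ref{prop:push}(i): that proposition gives $o_{\R_1\swComp\R_2}(n)=\N$, hence $\G_{\R_1\swComp\R_2}(n)\neq 0$, when $o_{\R_2}(n)=\P$ --- the opposite of what you wrote --- but this remark plays no role in your argument.
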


\begin{proof}
	We note $M = \max(S)$. Remark that the function $n \mapsto o_\R(n)$ is such that its value on an integer $n$ only depends on two things:
	\begin{itemize}
		\item The value of $o_{\R_2}(n)$, which only depends on the congruence class of $n$ modulo $k$.
		\item The value of $o_\R(n-i)$ for $1 \leq i \leq M$.
	\end{itemize}
	As a consequence, we can write:
	$$ o_\R(n) = f(o_\R(n-1), \ldots, o_\R(n-M), n\bmod k)$$
	for some function $f: \{\P, \N\}^{M} \times \mathbb Z_k \longrightarrow \{\P, \N \} $.
	If we note $X$ the set  $\{\P, \N\}^{M} \times \mathbb Z_k$, we can define the function $g: X \rightarrow X$  as:
	\begin{align*}
	g(a_1, \ldots a_{M}, m) = (f(a_1, \ldots a_{M}, m), a_1, \ldots a_{M-1}, m+1 \bmod k).
	\end{align*}
	We denote by $F_0$ the initial vector $(o_\R(M-1), \ldots o_\R(0), M)$, and $g^{(m)}$ the function $g$ composed $m$ times. We can see that the first element of the tuple $ g^{(m)}(F_0) $ is $o_\R(M+m-1)$. Now, since $|X| = k2^M$, there exists $n_0$ and $p$ smaller than $|X|$ such that $g^{(n_0+p)}(F_0) = g^{(n_0)}(F_0)$, and consequently, for all $m \geq n_0$, $o_\R(m+p) = o_\R(m)$.
	
	For the periodicity of the Grundy function, simply observe that there are at most $|S|+1$ possible moves at any moment. Consequently the Grundy function takes values in the set $\{ 0, \ldots, |S|+1\}$. Using exactly the same argument as above with the set $X= \{ 0, \ldots, |S|+1\}^M \times \mathbb Z_k$, we get the desired bound on the length of the period.
\end{proof}

For some specific subtraction sets, it is possible to describe exactly the period of the \P-positions:

\begin{theorem}
\label{th:subtraction1}
Let $k_1, k_2$ be two positive integers, and consider the ruleset $\R = (\R_1 \swComp \R_2)$, with $\R_1 = \Subs(\{ 1,\ldots,k_1\})$ and $\R_2 = \Subs(\{ 1,\ldots, k_2\})$.

Let $a$ be the smallest integer such that $(k_1+1)a \equiv -1 \bmod (k_2+1) $. The \P-positions of $\R_1 \swComp \R_2$ are periodic with period $(k_1+1)a + 1$. If no such $a$ exist, the period is simply $k_1+1$.
\end{theorem}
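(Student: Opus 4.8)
The plan is to first turn the statement into a purely combinatorial recursion. By Proposition~\ref{prop:subsk} applied to $\R_2=\Subs(\{1,\ldots,k_2\})$, we have $o_{\R_2}(n)=\P$ if and only if $n\equiv 0 \bmod (k_2+1)$. Feeding this into Proposition~\ref{prop:PposPush} (equivalently Corollary~\ref{cor:PposPush}), the set $P$ of \P-positions of $\R_1\swComp\R_2$ is exactly the unique subset of $\mathbb N$ with $0\notin P$ and, for $n\ge 1$,
$$ n\in P \iff n\not\equiv 0 \bmod (k_2+1)\ \text{ and }\ \{n-1,\ldots,\max(n-k_1,0)\}\cap P=\emptyset. $$
Since $k_2\ge 1$ gives $k_2+1\ge 2$, one reads off immediately that the smallest element of $P$ is $p_0=1$.

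\textbf{Step 2: gap structure.} Write $P=\{p_0<p_1<p_2<\cdots\}$. I would prove by induction on $j$ that $p_{j+1}-p_j\in\{k_1+1,k_1+2\}$: the positions $p_j+1,\ldots,p_j+k_1$ are all within distance $k_1$ of $p_j\in P$, hence blocked and not in $P$; therefore $p_j+(k_1+1)$ belongs to $P$ unless $p_j+(k_1+1)\equiv 0\bmod(k_2+1)$, in which case it is skipped and one checks that $p_j+(k_1+2)$ — which is congruent to $p_j+(k_1+1)+1\equiv 1\bmod(k_2+1)$, hence not $\equiv 0$, and not blocked since $p_j+2,\ldots,p_j+k_1+1$ are not in $P$ — is the next \P-position. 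Thus a ``skip'' (gap $k_1+2$) is always followed by an ordinary step, except in the degenerate case $k_1+2\equiv 0\bmod(k_2+1)$ where every gap equals $k_1+2$.

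\textbf{Step 3: track residues modulo $k_2+1$.} Set $r_j:=p_j\bmod(k_2+1)$, so $r_0=1$. By Step~2, $r_{j+1}\equiv r_j+(k_1+1)\bmod(k_2+1)$ unless $r_j\equiv-(k_1+1)$ (the skip case, where $r_{j+1}=1$). If $\gcd(k_1+1,k_2+1)=1$, then $(k_1+1)a\equiv-1\bmod(k_2+1)$ has a unique minimal positive solution $a\le k_2$; the residues $1,1+(k_1+1),\ldots,1+(a-1)(k_1+1)$ are pairwise distinct modulo $k_2+1$ and $1+(a-1)(k_1+1)\equiv-(k_1+1)$, so exactly one skip occurs per block of $a$ consecutive \P-positions and $(r_j)$ is purely periodic with period $a$. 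Over one such block the position advances by $(a-1)(k_1+1)+(k_1+2)=a(k_1+1)+1$, so $p_{j+a}=p_j+a(k_1+1)+1$ for all $j$; together with $p_0=1$ this shows $P$ is purely periodic with period $(k_1+1)a+1$. If instead $\gcd(k_1+1,k_2+1)>1$, no such $a$ exists and the coset $1+(k_1+1)\mathbb Z_{k_2+1}$ omits $0$, so no skip ever happens, $p_j=j(k_1+1)+1$, and the period is $k_1+1$.

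\textbf{Expected main obstacle.} The only genuinely delicate point is Step~2: making rigorous that skips are ``isolated'' precisely as dictated by the residues — that after skipping a position $\equiv 0\bmod(k_2+1)$ the following candidate ($\equiv 1\bmod(k_2+1)$) is always an honest \P-position and that no additional blocking from the preceding $k_1$ positions can intervene — and correctly handling the boundary regime $k_1+2\equiv 0\bmod(k_2+1)$ (i.e.\ $a=1$) in which every gap is $k_1+2$. Once the gap dichotomy of Step~2 is established, Step~3 is an elementary congruence computation.
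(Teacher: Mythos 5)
Your argument is correct and follows essentially the same route as the paper: both unroll the recursion from Corollary~\ref{cor:PposPush} (using the periodic \P-positions of $\R_2$) to show that the \P-positions are $b(k_1+1)+1$ for $b<a$ followed by a shifted restart at $a(k_1+1)+2$, and conclude periodicity by tracking residues modulo $k_2+1$. The only cosmetic difference is that the paper handles the case where no $a$ exists by invoking Proposition~\ref{prop:push}(iii) together with the misère \P-positions of $\R_1$, whereas you continue the direct gap computation; both are valid.
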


\begin{proof}
	Assume that there exists an $a$ such that $(k_1+1)a \equiv -1 \bmod (k_2+1) $, and take the smallest one with this property. Again, the outcome of $(m)_\R$ is determined by two things:
	\begin{itemize}
		\item The outcome of $(m)_{\R_2}$, the game obtained after pushing the button, which only depends on the value of $m$ modulo $(k_2 +1)$.
		\item The outcome of the $k_1$ previous positions $m-1, m-2, \ldots m - k_{1}$, and more precisely whether these positions are all \N-positions for \R\ or not.
	\end{itemize}
	 As a consequence, if we manage to find two positions $m$ and $m'$ such that ${m \equiv m' \bmod (k_2 + 1)}$ and $o_\R(m)= o_\R(m') = \P$, then the \P-positions will be periodic with period $m' - m$.

	Now, we can check easily that $o_\R(0) = \N$ and $o_{\R}(1) = \P$. The games $(2)_\R, (3)_\R \ldots (k_1 + 1)_\R$ have outcome \N\ since they all have a move to $(1)_\R$. The position $(k_1+2)$, and similarly all positions $(b(k_1+1) + 1)$ for $b < a$ are \P-positions for ruleset $\R$ and the positions in-between are \N-position. Indeed by minimality of $a$, pushing the button on a position $b(k_1+1) +1$ for $b < a$ is a losing move. Furthermore, there is no move from $b(k_1+1) +1$ to some $b'(k_1+1) +1$, and finally every position in-between these position has a move to some $b(k_1 +1) +1$ for some $b \geq 0$. The position $a(k_1+1) +1$ is a winning position since pushing the button is a winning move by definition of $a$. Consequently, $(a(k_1+1) + 2)_\R$ is a losing game: all $k_1$ previous positions are \N-positions, and pushing the button is a losing move. By definition of~$a$, we have $a(k_1+1) + 2 \equiv 1 \bmod (k_2+1)$, and $o_\R(1) = \P$. Consequently the sequence of outcomes is periodic with period  $a(k_1+1) + 1$.

	If such a $a$ does not exists, according to Proposition~\ref{prop:subsk} we have:
	\begin{itemize}
		\item The \P-positions of $\R_1$ in misere convention are the positions $m$ such that $m \equiv 1 \bmod (k_1+1)$.
		\item The \P-positions of $\R_2$ are the positions $m$ for which ${m \equiv 0 \bmod (k_2+1)}$.
	\end{itemize}
	By hypothesis on $k_1, k_2$, these two sets do not intersect. Additionally, $0$ is the only terminal position of $\R_1$ and it is a \P-position for $\R_2$. By Proposition~\ref{prop:push}, the \P-positions of $\R$ are the \P-positions of $\R_1$ in misere convention which are periodic with period $k_1+1$.

\end{proof}

\section{Push compounds of placement games: the \textsc{Cram} ruleset}
\label{sec:cram}

In previous sections, we studied several games played with heaps of tokens. We now move to a different kind of game: placement games. We look in particular at the game \textsc{Domineering} and its impartial variant \textsc{Cram}, and study a push compound of this game.

\begin{definition}
The game of \textsc{Cram} is played on a square grid. A move consists in placing either a vertical or an horizontal $2 \times 1$ domino on the grid, not overlapping other dominoes.
\end{definition}

Using some symmetry arguments, this game is known to be solved on rectangular grids with at least one even dimension. On even by even grids, the second player has a winning strategy by playing his opponent's central symmetric move. On even by odd boards, the first player can place a domino in the center, and then apply the central symmetric strategy. For $(2n+1)\times (2m+1)$ grids however, the computational complexity of finding the outcome is an open problem.\\

We here study a variation of {\sc Cram} where the players first play only vertical dominoes, then, when the button is pushed, switch to horizontal ones (and since the rules are defined on the same game board they are compatible). We call \textsc{Push Cram} this new game, and present a solution for some of its non trivial sub-cases.
In this subsection, the notation $m \times n$ will denote the grid with $m$ rows and $n$ columns.

The game of \textsc{Cram} played on a $1 \times n$ row has the same rules as the octal game $0.07$. Its values are periodic, with period $34$ and pre-period $52$~\cite{winningWays}. In particular, if we have $\G_{0.07}(n) = 0$, then $n$ is odd (i.e.: positions $1 \times (2k)$ have outcome \N). In \textsc{Push Cram} played on an $m \times n$ grid, when the button is pushed, the game decomposes into a disjunctive sum of $m$ distinct $0.07$ positions. Indeed, after the button is pushed the players can only place horizontal dominoes, and each domino placed on a row does not affect the other rows. In this case, the outcome can be computed easily using Proposition~\ref{prop:grundy} and the periodicity of the values of $0.07$.

\begin{proposition}
\label{prop:cram}
Let $k$ and $n$ be two non-negative integers. We have the following outcomes for \textsc{Push Cram}:
\begin{enumerate}[label=(\roman{enumi}), topsep=0.2em, itemsep=-0.2em]
\item $(2k) \times n$ is an \N-position,
\item $(2k+1) \times n$ is an \N-position if $\G_{0.07}(n) = 0$,
\item $3\times (2k)$ is a \P-position,
\item $n \times 3$ is a \P-position if and only if $\mathcal G_{0.07} (n) = 0$,
\item $(2k+1) \times 4$ is a \P-position.
\end{enumerate}
\end{proposition}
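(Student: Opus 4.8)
The plan is to treat the five items by a mix of direct parity/symmetry arguments for the ``N-position'' claims and a careful analysis of the push-compound structure (via Corollary~\ref{cor:PposPush}) for the ``P-position'' claims, heavily exploiting the fact that after the button is pushed the game splits into a disjunctive sum of rows, each being a $0.07$ position whose values are known to be periodic.

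First, for item (i): on a $(2k)\times n$ grid the number of columns is immaterial — the second ruleset (horizontal play) is symmetric under a central half-turn of the board, and before the button is pushed the first player can simply push the button immediately and then mirror; more cleanly, I would argue that pushing the button at the very first move leaves the opponent to move in $\textsc{Cram}$-horizontal on a board with an even dimension, which by the classical mirroring strategy is a P-position for the player to move, hence an N-position for the original game by Proposition~\ref{prop:push}(i). For item (ii): if $\G_{0.07}(n)=0$, then after pushing the button on a $(2k+1)\times n$ board the resulting disjunctive sum is a sum of $2k+1$ copies of a $\G$-value-$0$ position, so by Proposition~\ref{prop:grundy} its Grundy value is $0$, i.e.\ it is a P-position for $\R_2$; pushing immediately is therefore a winning move, and Proposition~\ref{prop:push}(i) gives the result. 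These two parts are essentially immediate once the right reduction is spotted.

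The substance is in (iii), (iv), (v), where we must show certain positions are genuinely P-positions, i.e.\ \emph{no} move — neither a vertical-domino move nor pushing the button — wins. For (iv), $n\times 3$ with $\G_{0.07}(n)=0$: a vertical domino occupies two cells in a single column of the $3$-wide board; I would set up an explicit pairing/strategy-stealing argument showing the player to move cannot reach a P-position by a vertical move, and that pushing the button lands in a sum of $n$ copies of $0.07$ on three cells, i.e.\ a sum of $n$ positions each of Grundy value $\G_{0.07}(3)$; since $3$ is odd we have $\G_{0.07}(3)\neq 0$ is not the relevant fact — rather one checks $\G_{0.07}(3)=1$, so the sum has value $n\bmod 2$ in the relevant bit, and the hypothesis $\G_{0.07}(n)=0$ forces $n$ odd, making the push land in an N-position. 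For the converse direction of (iv) and for (iii) ($3\times(2k)$ always P) and (v) ($(2k+1)\times 4$ always P), I would verify conditions (i)--(iii) of Corollary~\ref{cor:PposPush} directly: the set in question contains only positions from which (a) pushing the button reaches an N-position of the horizontal sum (a Grundy/periodicity computation), and (b) every vertical move leaves the set; and every position outside has either a winning push or a vertical move back in. For $3\times(2k)$ this uses that a vertical domino on a $3$-row board is forced into one of the three rows and reduces that row's length by $2$, together with the known period-$34$ structure of $0.07$ restricted to the small cases that actually arise.

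The main obstacle I anticipate is item (v), $(2k+1)\times 4$: here the board is neither thin enough to reduce to a single $0.07$ component before the push, nor does it have an even dimension, so neither the mirroring trick nor the immediate-push trick applies. One must describe an actual strategy for the second player that responds to both vertical-domino placements and to the opponent's push, and show it is winning; this likely requires a case analysis on where the opponent's vertical domino sits among the $2k+1$ rows and $4$ columns, maintaining an invariant (perhaps a pairing of rows) that is preserved until the button is pushed, at which point the resulting disjunctive sum of $0.07$-on-$4$ positions — note $\G_{0.07}(4)=0$ — must be checked to have total Grundy value $0$. Verifying that $\G_{0.07}(4)=0$ and that the invariant guarantees an even number of ``live'' rows of relevant types at push-time is where the real work lies; the periodicity of $0.07$ (period $34$, pre-period $52$) is invoked only to pin down the handful of small Grundy values $\G_{0.07}(0),\dots,\G_{0.07}(4)$ actually needed.
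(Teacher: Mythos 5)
Your treatment of items (i) and (ii) is correct and is essentially the paper's argument: push the button immediately, observe that the resulting disjunctive sum of $2k$ identical rows (resp.\ of $2k+1$ rows each of value $0$) has Grundy value $0$ by Proposition~\ref{prop:grundy}, and conclude with Proposition~\ref{prop:push}(i).

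For items (iii)--(v), however, what you have written is a plan rather than a proof, and the plan contains concrete errors that would derail it. First, you assert $\G_{0.07}(4)=0$; in fact $\G_{0.07}(4)=2$ (the three options of a length-$4$ strip have values $1$, $0$ and $1$), so the invariant you propose for (v) --- an even number of ``live'' rows making the push-value $0$ --- rests on a false premise. The correct observation is that the push-value of the starting $(2k+1)\times 4$ position is $2\neq 0$, and the second player's task is to keep the push-value \emph{nonzero} at each of the opponent's turns; the paper does this by pairing the four columns into two consecutive pairs, mirroring within each pair, and checking via $\G_{0.07}(0)=\G_{0.07}(1)$ and $\G_{0.07}(2)=\G_{0.07}(3)$ that the reply preserves the push-value (your guess of ``a pairing of rows'' is the wrong axis). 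Second, your description of (iii) is off: a vertical domino on a $3$-row board occupies one cell in each of two adjacent rows of a single column; it does not sit ``in one of the three rows'' and ``reduce that row's length by $2$'', it cuts two of the three rows into shorter segments, and the paper's argument mirrors across the vertical axis of the $3\times(2k)$ board and shows that after any symmetric position is pushed the row segments cancel in pairs except for a central segment of even length, whose $0.07$-value is nonzero. Third, for the remaining P-position claims you say you would ``verify conditions (i)--(iii) of Corollary~\ref{cor:PposPush} directly'', but you never specify the candidate set of positions, which is the entire content of such a proof; the paper instead gives, for (iv), an explicit second-player strategy (mirror between the first two columns, play the optimal $0.07$ strategy on the third) together with the invariant that the number of rows with $2$ or $3$ empty cells is odd at the opponent's turn, so that pushing is always losing for the first player. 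Your computation that the initial push from $n\times 3$ lands in an \N-position when $\G_{0.07}(n)=0$ (forcing $n$ odd) is correct, but it covers only the first move, not the whole game tree. As it stands, items (iii)--(v) remain unproved.
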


\begin{proof}
	For each of these cases, we describe a winning strategy for one of the players.
\begin{enumerate}[label=$(\roman{enumi})$, topsep=0.2em, itemsep=-0.2em, wide, labelindent=3pt]
\item The first player directly wins by pushing the button. Indeed, once the button is pushed, the game decomposes into a disjunctive sum of $2k$ games, each game corresponding to a row. Since all rows have the same length, the value (obtained by the XOR of the values of each game) is zero.
\item Again, the first player pushes the button and wins since once the button is pushed, the games corresponding to each row have value~$0$.

\item We give a strategy for the second player: play the move symmetrically to the first player relatively to the vertical line cutting the grid in half. Since such a move is always possible, we only have to show that for the first player pushing the button is always a losing move. Each move of the players cuts the horizontal rows. Looking at the connected horizontal pieces of rows, they go in pairs with their symmetric except for three of them in the center which are their own symmetric. Among these three, two of them have the same length, and the remaining one has even length (see Figure \ref{fig:case3lines}). Moreover, $0.07$ played on a row of even length has a value different from zero. As a consequence, for the first player pushing the button on such a symmetric game is losing.

\begin{figure}[!ht]
	\centering
	\ifNB
	\includegraphics[width=.5\textwidth]{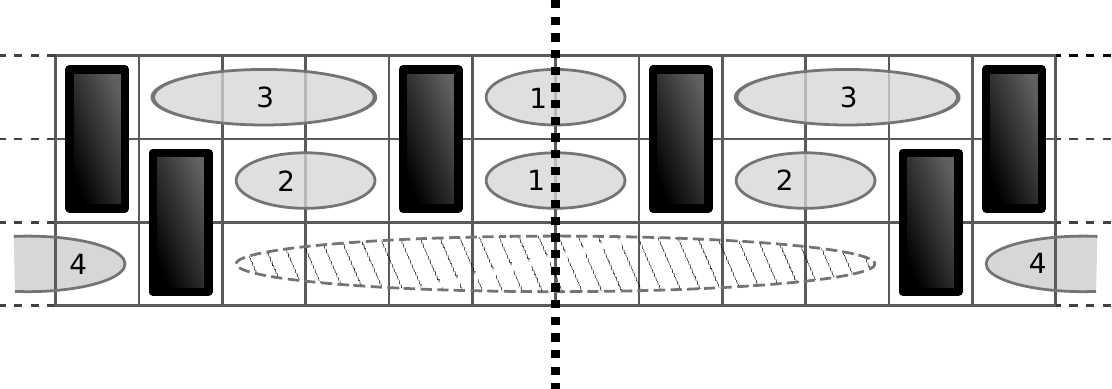}
	\caption{\label{fig:case3lines} Case $(iii)$. The dotted vertical line is the symmetry axis. All the rows go by pair except one (the hatched area) which has even size. Since $0.07$ played on a row of even size has a non-zero value, pushing the button is a losing move on symmetric positions.}
	\else
	\includegraphics[width=.5\textwidth]{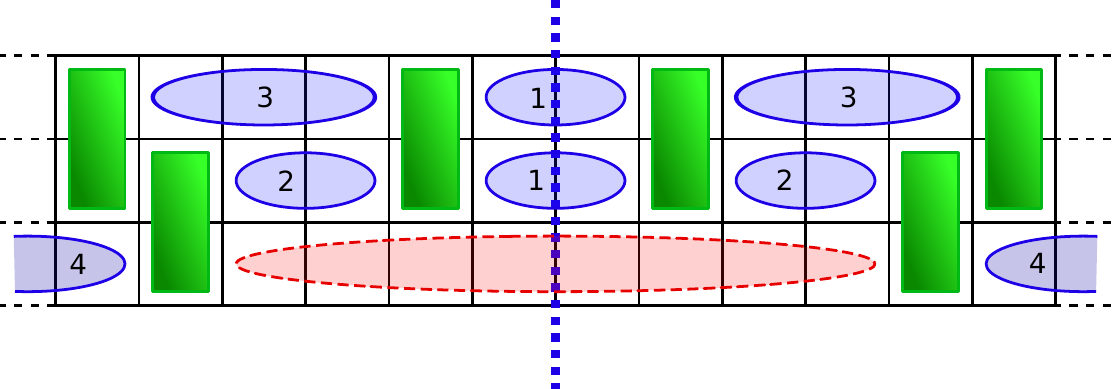}
	\caption{\label{fig:case3lines} Case $(iii)$. The blue dotted vertical line is the symmetry axis All the rows go by pair except one (in red with a dashed border) which has even size. Since $0.07$ played on a row of even size has a non-zero value, pushing the button is a losing move on symmetric positions.}
	\fi
\end{figure}

\item  The second player can play his optimal strategy for $0.07$ on the last column and play symmetrically on the two other ones. At some point, the opponent will run out of moves and will be forced to push the button, giving a winning position to the second player (see Figure~\ref{fig:cases1}). At any point during the game, when it's the first player's turn, there is an odd number of rows with $2$ or $3$ empty cells. Indeed, if the first player plays on one of the two first columns, by playing symmetrically, two rows with either $2$ or $3$ empty cells are transformed into rows with $0$ or $1$ empty cells. Additionally, playing into the last column changes the number of free cells of the two rows from $3$ to $2$ or from $1$ to $0$. After pushing the button, the value of the game is the number modulo $2$ of rows with $2$ or $3$ empty cells. Consequently, for the first player, pushing the button is a losing move.

\begin{figure}[!hb]
	\centering
	\ifNB
	\includegraphics[width=.9\textwidth]{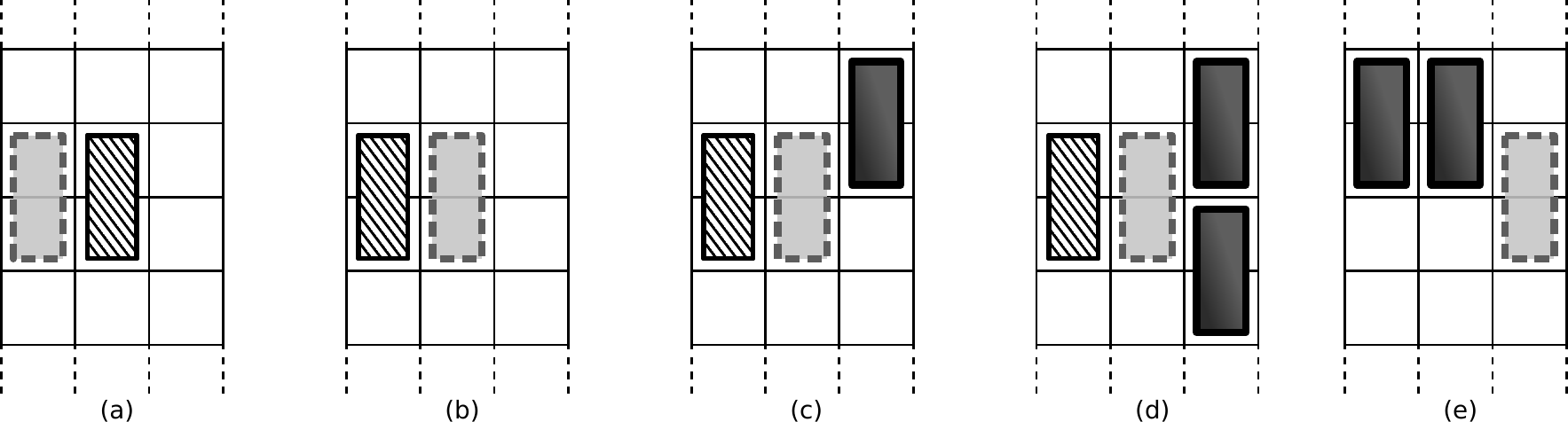}
	\caption{\label{fig:cases1} Subcases $(a) - (d)$ consist in playing symmetric moves on the two first columns. $(e)$ is the case where the player plays on the last column. Black plain tiles correspond to previous moves. The hatched tile is the last move played by the opponent, and the one with dashed borders is the move that the player wants to play by applying the strategy. }
	\else
	\includegraphics[width=.9\textwidth]{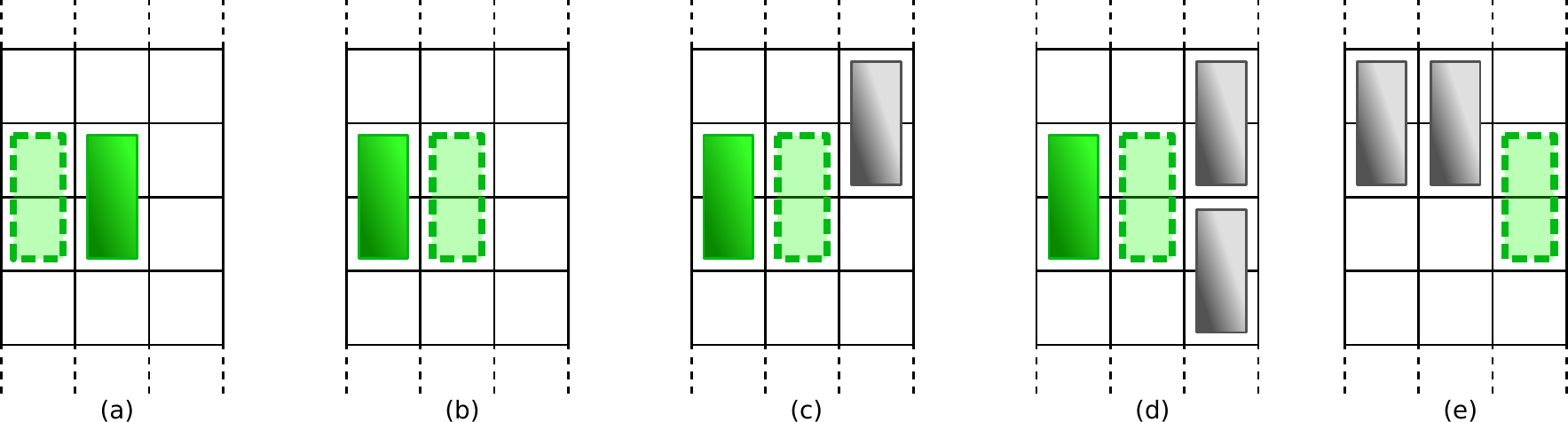}
	\caption{\label{fig:cases1} Subcases  $(a) - (d)$ consist in playing symmetric moves on the two first columns. $(e)$ is the case where the player plays on the last column. Grey tiles correspond to previous moves. The green tile is the last move played by the opponent, and the dotted one is the move that the player wants to play by applying the strategy. }
	\fi
\end{figure}

\item Group the columns in two consecutive pairs. Either the second player can push the button and win, or, by playing the symmetric in the paired column, the value of the game for $0.07$ does not change (see Figure~\ref{fig:cases2}). If he plays on a side column (case (a)), his move removes two rows of size $1$, and does not change the value for $0.07$ on these rows. If he plays on a middle column (case (b)), then he changes rows of length $1$ to $0$, and rows of length $3$ to $2$. Since $\G_{0.07}(1) = \G_{0.07}(0)$ and $\G_{0.07}(3) = \G_{0.07}(2)$, the value for $0.07$ is not changed either by the move.

\begin{figure}[!ht]
\centering
\ifNB
\includegraphics[width=.4\textwidth]{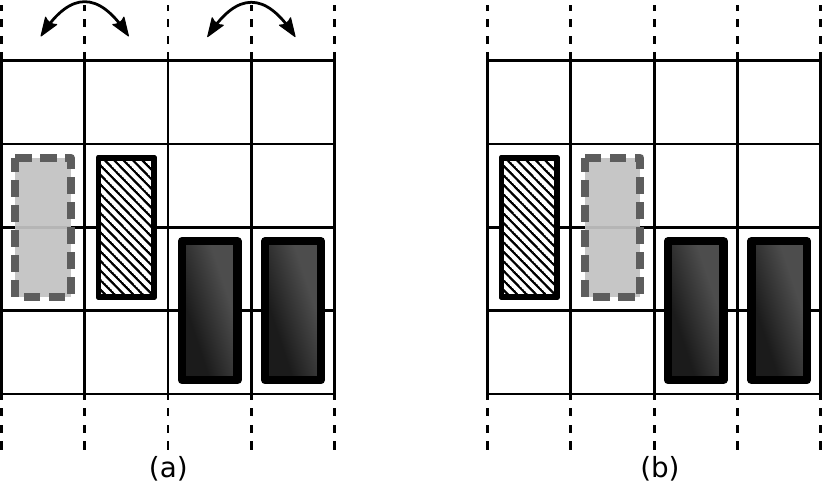}
\caption{\label{fig:cases2} The second player can play symmetrically to the other player on each pair of column.}
\else
\includegraphics[width=.4\textwidth]{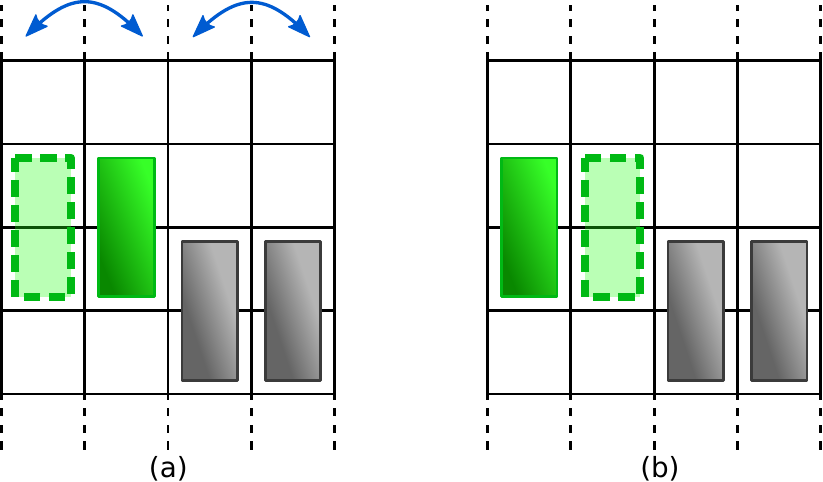}
\caption{\label{fig:cases2} The second player can play symmetrically to the other player on each pair of column.}
\fi
\end{figure}
\end{enumerate}
\end{proof}

A game is a \emph{bluff} game \cite{bluff} if it is a first player win, and any first move is a winning move.
Using numeric computations, we were able to formulate the following conjecture:

\begin{conjecture}
For all $k \geq 0$, the grid of size $3 \times (2k+1)$ is an \N-position for the game \textsc{Push-Cram}. In addition, it is a bluff game.
\end{conjecture}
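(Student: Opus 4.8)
Write $g$ for the empty $3\times(2k+1)$ board; since $g$ always has at least the button-push available, the conjecture is equivalent to the single statement that \emph{every} option of $(g)_{\textsc{Push Cram}}$ is a \P-position. Indeed this yields the bluff property at once, and it forces $(g)_{\textsc{Push Cram}}$ to be an \N-position as a by-product. So the plan is to enumerate the options of $g$ and give, for each, a winning strategy for the player to move. There are two families: (a) pushing the button; (b) the $2(2k+1)$ vertical-domino placements, indexed by a column $j\in\{1,\dots,2k+1\}$ and a choice of rows $\{1,2\}$ or $\{2,3\}$. Since the board has a horizontal reflection ($j\mapsto 2k+2-j$) and a vertical reflection (rows $1\leftrightarrow3$), the placements in (b) reduce to about $k+1$ essentially distinct positions, say a domino on rows $\{1,2\}$ in column $j$ with $1\le j\le k+1$.

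Option (a) is governed by Proposition~\ref{prop:grundy}: after the push, the game is horizontal \textsc{Cram} on the empty $3\times(2k+1)$ board, i.e.\ the disjunctive sum of three copies of the octal game $0.07$ on a strip of $2k+1$ cells, with value $\G_{0.07}(2k+1)\oplus\G_{0.07}(2k+1)\oplus\G_{0.07}(2k+1)=\G_{0.07}(2k+1)$. For the push to be a winning move one needs $\G_{0.07}(2k+1)=0$, and this is precisely the delicate spot: the Grundy sequence of $0.07$ is irregular over its pre-periodic range (e.g.\ $\G_{0.07}(3)=1$), so the compulsory button-press has to be handled with care. I expect this single move to be the main obstacle. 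Two reasonable routes are (i) to regard the push, in line with the correlation between the push operator and pass moves noted in Section~\ref{sec:2}, as not a ``genuine'' first move for the purposes of the bluff property, or (ii) to prove the bluff half only for those $2k+1$ lying in the relevant residue classes modulo the period $34$ of $0.07$ (past its pre-period), where $\G_{0.07}(2k+1)=0$.

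Option (b) should be accessible by a pairing argument about the central column $k+1$, in the same spirit as Proposition~\ref{prop:cram}(iii)--(v). After the opponent's first vertical domino in column $j$, the responder restores a horizontally symmetric position: if $j\neq k+1$ by playing the mirror domino in column $2k+2-j$, and if $j=k+1$ by playing a suitable move in or next to the central column. Thereafter the responder mirrors every vertical move of the opponent in the reflected column. One must check that the mirror move is always legal — automatic by symmetry except when the opponent plays near the midline, where a small sub-case is needed — and that the self-symmetric ``central'' part of the board stays controlled under any sequence of mirrored moves. Finally, when the opponent pushes the button, the horizontal-\textsc{Cram} position splits into strips occurring in equal-length mirror pairs plus a bounded number of self-symmetric central strips; the pairs XOR to $0$, and one closes the argument by checking that the central strips also XOR to $0$, using the small-length identities for $\G_{0.07}$ already exploited in Proposition~\ref{prop:cram} (such as $\G_{0.07}(0)=\G_{0.07}(1)$, $\G_{0.07}(2)=\G_{0.07}(3)$, and the non-vanishing of $\G_{0.07}$ on even-length strips).

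In summary, the vertical-domino options look tractable via the mirror-strategy toolkit of this section, and the crux is the button-push. The cleanest imaginable resolution would be a single invariant — that from $g$ the value of the position reached by pushing at any point during symmetric play is the XOR of a multiset of $0.07$-values that is symmetric, hence $0$ — which would dispatch both halves of the conjecture uniformly; failing that, I would settle the push by a finite residue analysis over the $52+34$ relevant strip lengths of $0.07$ and combine it with the symmetry strategy for the genuine moves.
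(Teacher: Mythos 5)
First, be aware that the paper does not prove this statement: it is stated as a conjecture, supported only by computer verification (up to $3\times 25$ for the outcome and $3\times 13$ for the bluff property), and the authors explicitly list the case of an odd number of rows with $\G_{0.07}(m)\neq 0$ as the remaining unsolved case of \textsc{Push Cram}. So there is no proof in the paper to compare yours against, and your text is likewise a plan rather than a proof: both halves of your argument terminate in explicitly unresolved alternatives ("two reasonable routes", "failing that, I would settle\dots").

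As for the concrete gaps in the plan itself: (1) your reduction is correct that the bluff claim forces \emph{every} option of the empty board, including the button-push, to be a \P-position, and that the push leads to a position of value $\G_{0.07}(2k+1)\oplus\G_{0.07}(2k+1)\oplus\G_{0.07}(2k+1)=\G_{0.07}(2k+1)$. But since $\G_{0.07}(3)=1\neq 0$, pushing the button from the empty $3\times 3$ board is a \emph{losing} first move, so under the paper's stated definition of a bluff game your own framing refutes the literal claim for $k=1$ rather than proving it; your escape routes (excluding the push from the set of "first moves", or restricting $2k+1$ to residues where $\G_{0.07}$ vanishes) each amount to proving a different statement, and you commit to neither. (2) The mirroring strategy for the vertical-domino openings fails exactly where you flag it: the board has an odd number of columns, so the central column is fixed by the horizontal reflection, moves played there have no mirror image, and after the push the self-symmetric central strips have odd lengths whose $0.07$-values need not cancel — unlike the even-width situation exploited in Proposition~\ref{prop:cram}(iii). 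No invariant controlling the central column is supplied, and without one the "pairs XOR to $0$" argument does not close. In short, the proposal neither establishes the conjecture nor recognizes that it is an open problem in the paper.
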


This conjecture was verified up to grids of size $3 \times 25$ for the outcome, and $3\times 13$ for the bluff property. The remaining unsolved case for \textsc{Push Cram} on rectangular grids is when there is an odd number of rows, and a number of columns $m$ such that $\G_{0.07}(m) \neq 0$. The complexity of computing the outcome of a given position is also unknown.

\section{Open problems and conclusion}

Other variants of the switch procedure could be interesting. In this study we took the convention that pushing the button requires a move. It is possible to look at variants for which it does not, and the players can push the button either before playing, after playing, or both. It is also possible to study the case where pushing the button cannot be done if there is no move left for the first ruleset. With this convention, the push operator is a direct extension of the pass move construction: the players are allowed a single pass move which can be played once, at any moment during the game except at the very end when there is no other move available. This construction was studied in \cite{passMove}, and also in \cite{nimpass} for the particular case of \textsc{Nim}, and in \cite{octalpass} for \textsc{Octal} rulesets with a pass. For all these variations of the switch construction, it is possible to get results similar to Proposition~\ref{prop:push} giving sufficient conditions for which the losing positions are not modified. It is also likely that Theorem~\ref{th:subtraction2} on the periodicity of \textsc{Subtraction} games can also be adapted for these variations, perhaps with a different bound on the period.

The construction might also be interesting to look at from a complexity point of view. Given two compatible rulesets $\R_1$ and $\R_2$, are there relations between the complexities of computing the outcome of positions of $\R_1, \R_2$ and $\R_1 \swComp \R_2$?

The game \textsc{Push Cram} is a special case of a more general construction that allows to transform any partizan game into a push ruleset: the player both start by playing only Left-type moves, then when the button is pushed, they continue only using Right's moves. \textsc{Push Cram} is the game obtained by applying this construction to \textsc{Domineering}. It could be interesting to study how other partizan games are modified with this construction.

It seems also relevant to look at games that can be decomposed into a sum of positions. Typically, this happens for \textsc{Cram}, when the board can be split into several connected components. Usually, the outcome of  such games is obtained by computing the values of the components. Yet, for switch rulesets (\textsc{Push Cram} for example), these positions do not correspond to a disjunctive sum. Indeed, pushing the button changes the rules for all the components at the same time. In an upcoming work \cite{futurework}, we look at how to handle sums of positions for push rulesets using tools from misere game theory.

\bibliographystyle{alpha}
\bibliography{bib}

\begin{thebibliography}{BDKK17}

\bibitem[ACM10]{colorfullFamilies}
Lowell Abrams and Dena~S. Cowen-Morton.
\newblock Periodicity and other structure in a colorful family of nim-like
  arrays.
\newblock {\em The Electronic Journal of Combinatorics [electronic only]},
  17(1):Research Paper R103, 21 p., electronic only--Research Paper R103, 21
  p., electronic only, 2010.

\bibitem[ANW07]{Lessons}
Michael~H. Albert, Richard~J. Nowakowski, and David Wolfe.
\newblock {\em Lessons in Play: An Introduction to Combinatorial Game Theory}.
\newblock A.K. Peters, 2007.

\bibitem[BCG04]{winningWays}
Elwyn Berlekamp, John~H. Conway, and Richard~K. Guy.
\newblock {\em Winning Ways for your Mathematical Plays}, volume 1-4.
\newblock A K Peters, Ltd., 2 edition, 2001-2004.

\bibitem[BDKK17]{bluff}
Bo{\v{s}}tjan Bre{\v{s}}ar, Paul Dorbec, Sandi Klavzar, and Ga{\v{s}}per
  Ko{\v{s}}mrlj.
\newblock How long can one bluff in the domination game?
\newblock {\em Discussiones Mathematicae Graph Theory}, 37(2), May 2017.

\bibitem[Bou01]{nim}
Charles~L. Bouton.
\newblock Nim, a game with a complete mathematical theory.
\newblock {\em The Annals of Mathematics}, 3(1/4):35--39, 1901.

\bibitem[Bur]{Kyle}
Kyle Burke.
\newblock Combinatorial game rulesets.
\newblock \url{http://turing.plymouth.edu/~kgb1013/rulesetTable.php}.

\bibitem[CD69]{euclid}
A.~J. Cole and A.~J.~T. Davie.
\newblock A game based on the euclidean algorithm and a winning strategy for
  it.
\newblock {\em The Mathematical Gazette}, 53(386):354--357, 1969.

\bibitem[CL08]{3euclid}
David Collins and Tamás Lengyel.
\newblock The game of 3-euclid.
\newblock {\em Discrete Mathematics}, 308(7):1130 -- 1136, 2008.

\bibitem[CL15]{nimpass}
Wai~Hong Chan and Richard~M. Low.
\newblock An atlas of n-and p-positions in 'nim with a pass'.
\newblock {\em INTEGERS: Electronic Journal of Combinatorial Number Theory},
  15, 2015.

\bibitem[Con76]{onag}
John~H. Conway.
\newblock {\em On numbers and games}, volume~6.
\newblock IMA, 1976.

\bibitem[DDHL15]{building}
Eric Duch\^ene, Mathieu Dufour, Silvia Heubach, and Urban Larsson.
\newblock Building nim.
\newblock {\em International Journal of Game Theory}, 2015.

\bibitem[DH]{futurework}
Eric Duchêne and Marc Heinrich.
\newblock Quotients for push-button games.
\newblock private communication.

\bibitem[DR10]{invariant}
Eric Duch\^ene and Michel Rigo.
\newblock Invariant games.
\newblock {\em Theoretical Computer Science}, 411:3169 -- 3180, 2010.

\bibitem[Fla82]{flanigan82}
Jim Flanigan.
\newblock One-pile time and size dependent take-away games”, {Fibonacci}
  {Quart}.
\newblock {\em Fibonacci Quarterly}, pages 51--59, 1982.

\bibitem[Fra84]{WythoffFraenkel}
Aviezri~S. Fraenkel.
\newblock Wythoff games, continued fractions, cedar trees and fibonacci
  searches.
\newblock {\em Theor. Comput. Sci.}, 29:49--73, 1984.

\bibitem[Fra07]{raleigh}
Aviezri~S. Fraenkel.
\newblock The {R}aleigh game.
\newblock {\em INTEGERS: Electronic Journal of Combinatorial Number Theory},
  7(2):A13, 2007.

\bibitem[Gru39]{grundy}
Patrick~M. Grundy.
\newblock Mathematics and games.
\newblock {\em Eureka}, 2(5):6--8, 1939.

\bibitem[Gur07]{euclidMisere}
Vladimir~A. Gurvich.
\newblock On the misere version of game euclid and miserable games.
\newblock {\em Discrete mathematics}, 307(9):1199--1204, 2007.

\bibitem[HN03]{octalpass}
David G.~C. Horrocks and Richard~J. Nowakowski.
\newblock Regularity in the {$\mathcal G$-sequences} of octal games with a
  pass.
\newblock {\em Integers: Electronic Journal of Combinatorial Number Theory},
  3(G01):2, 2003.

\bibitem[HN17]{RiMe}
Melissa Huggan and Richard~J. Nowakowski.
\newblock Conjoined games: Go-cut and sno-go.
\newblock {\em Games of No Chance 5}, 2017.
\newblock To appear.

\bibitem[Ho12]{variations3Euclid}
Nhan~Bao Ho.
\newblock Variations of the game 3-{E}uclid.
\newblock {\em International Journal of Combinatorics}, 2012.

\bibitem[HRR03]{holshouser03}
Arthur Holshouser, Harold Reiter, and James Rudzinski.
\newblock Dynamic one-pile {Nim}.
\newblock {\em Fibonacci Quarterly}, 41(3):253--262, 2003.

\bibitem[HRR04]{holshouser04}
Arthur Holshouser, Harold Reiter, and James Rudzinski.
\newblock Pilesize {Dynamic} {One}-{Pile} {Nim} and {Beatty}’s {Theorem}.
\newblock {\em INTEGERS: Electronic Journal of Combinatorial Number Theory},
  4(G03):G03, 2004.

\bibitem[LR17]{picaria}
Urban Larsson and Israel Rocha.
\newblock Eternal {Picaria}.
\newblock {\em Recreational Mathematics Magazine}, 4(7):119--133, May 2017.

\bibitem[MFL11]{passMove}
Rebecca~E. Morrison, Eric~J. Friedman, and Adam~S. Landsberg.
\newblock {Combinatorial games with a pass: A dynamical systems approach}.
\newblock {\em Chaos: An Interdisciplinary Journal of Nonlinear Science},
  21(4), December 2011.

\bibitem[Mul16]{selfref}
Todd Mullen.
\newblock The self-referential games minnie and wynnie and some variants.
\newblock Master's thesis, Dalhousie University, 2016.

\bibitem[Sie13]{gameTheory}
Aaron~N. Siegel.
\newblock {\em Combinatorial Game Theory}, volume 146 of {\em Graduate Studies
  in mathematics}.
\newblock American Mathematical Society, 2013.

\bibitem[Sil76]{silber2}
Rober Silber.
\newblock A fibonacci property of wythoff pairs.
\newblock {\em Fibonacci Quart}, 14(4):380--384, 1976.

\bibitem[Spr35]{sprague}
Richard Sprague.
\newblock Uber mathematische kampfspiele.
\newblock {\em T{\^o}hoku Math. J}, 41:438--444, 1935.

\bibitem[SU93]{seqComp}
Walter Stromquist and Daniel Ullman.
\newblock Sequential compounds of combinatorial games.
\newblock {\em Theoretical Computer Science}, 119(2):311 -- 321, 1993.

\bibitem[Wyt07]{wythoff}
Willem~A. Wythoff.
\newblock A modification of the game of {N}im.
\newblock {\em Nieuw Arch. Wisk}, 7:199--202, 1907.

\end{thebibliography}


\newpage
\begin{appendices}
	
\section{\Rthl}

\recCaract*
\begin{proof}
	By definition of $A_n$ and $B_n$, the sets $\{A_n\}_{n\geq 0}$ and $\{B_n\}_{n\geq 0}$ are complementary. Consequently there is no move from $(A_n, B_n)$ to some $(A_m, B_m)$. Moreover, by definition we also have $\frac{B_n}{A_n} > \Phi$, hence pushing the button is a losing move.
	
	Let $(a,b)$ be a position with $a \leq b$. One of the following holds:
	\begin{itemize}
		\item either $\frac{b}{a} < \Phi$ in which case the first player wins by pushing the button (see Proposition~\ref{euclidPos}).
		\item or $\frac{b}{a} > \Phi$. Since $\{A_n\}_{n \geq 0}$ and $\{B_n\}_{n \geq 0}$ are complementary sequences, there is an integer $n$ such that either $a = A_n$ or $a = B_n$. In the first case, since $\frac{b}{a} > \Phi$, we have $b> B_n$ and the move to $(A_n, B_n)$ is a winning move. In the second case, $b \geq a = B_n > A_n$, hence there is a move to $(B_n, A_n)$.
	\end{itemize}
\end{proof}

\fiboCaract*

\begin{proof}
	This result is the consequence of the following two facts:
	\begin{itemize}
		\item the pair $(a,b)$ is a Wythoff pair if and only if the Fibonacci representation $s_a$ of $a$ ends with an even number of $0$, and the representation of $b$ is $s_a0$ (see \cite{WythoffFraenkel} for a proof of this result).
		
		\item The Fibonacci representation of $F_{2n+1} -1$ is $(10)^n1$. Indeed, we have the following:
		\begin{align*}
		\sum_{k=1}^n F_{2k} &= \sum_{k=1}^n (F_{2k} + F_{2k-1}) - \sum_{k=1}^n F_{2k-1} \\
		&= \sum_{k=1}^n F_{2k+1} - \sum_{k=1}^n F_{2k-1} \\
		&= F_{2n+1} - F_1 = F_{2n+1} - 1.
		\end{align*}
		
		A similar argument shows that the Fibonacci representation of ${F_{2n+2}-1}$ is $(10)^n$.
	\end{itemize}
	
	Using these facts, the statement of the Theorem is simply a consequence of Theorem \ref{rthl1}. Given $(A_n, B_n)$ as in the statement of the Theorem, either $(A_n,B_n)$ is a Wythoff pair with $A_n \neq F_{2k+1}-1$, or $A_n = F_{2k+2}-1$ and $B_n = F_{2k+3}-1$. In the first case the Fibonacci representation $s_A$ of $A_n$ ends with an even number of zeros, and $B_n = s_A 0$, with $s_A \neq (01)^k$. In the second case, the Fibonacci representations of $A_n$ and $B_n$ are respectively $(10)^{k}$ and $(10)^k1$.
\end{proof}

\ineqPpos*
\begin{proof}
	The left inequality comes from the fact that if $c < \lceil \Phi b \rceil$, then $\frac c b  < \Phi$, hence moving to $(0,b,c)$ is winning.
	Using the characterization of Theorem~\ref{th:rthlrec}, the right inequality holds when $a=1$. Assume $a > 1$, and suppose by contradiction that there is at least one \P-position $(a,b,c)$ with $a \leq b \leq c$ such that $c \geq \lceil \Phi b \rceil + a$. We look at the smallest such position $(a,b,c)$ for the lexicographic order.
	
	By assumption, there exists $c' < c$ such that: $c \equiv c' \bmod a$, and $ \lceil \Phi b \rceil \leq c' \leq \lceil \Phi b \rceil + a-1$. Since there is a move from $(a,b,c)$ to $(a,b,c')$ and $(a,b,c)$ is a \P-position, $(a,b,c')$ is a \N-position. Because $(a,b,c')$ is a \N-position, there is a winning move from $(a,b,c')$ to some \P-position. From $(a,b,c')$, there are only $3$ possible types of moves:
	\begin{itemize}
		\item The move to $(0,b,c')$ is losing since by assumption on $c'$, $c' \geq \lceil \Phi b\rceil $.
		\item  Any move to $(a,b,c'')$ with $c'' < c'$ is also losing since otherwise we would have a winning move from $(a,b,c)$ to $(a,b,c'')$ which is a contradiction of the fact that $(a,b,c)$ is a \P-position.
		\item A move to a position $(a,b-qa,c')$ for some $q>0$.
	\end{itemize}
	In this third case, there are two possibilities. Either $b - qa \geq a$ in which case we have:
	$$ c' \geq \lceil \Phi b\rceil \geq \lceil \Phi (b-qa) \rceil + \lfloor \Phi q a\rfloor > \lceil \Phi (b-qa) \rceil + a-1.$$
	This is a contradiction of the minimality of $(a,b,c)$. The other possibility is $b - qa < a$. Using again the minimality of $(a,b,c)$, we know that $(b-qa, a, c')$ satisfies the inequalities of the theorem, hence:
	
	\begin{align*}
	c' \leq \lceil \Phi a \rceil + b-qa -1 < \lceil \Phi a \rceil + b - a.
	\end{align*}
	But we already know that $c' \geq \lceil \Phi b \rceil$, which implies:
	\begin{align*}
	\lceil \Phi b \rceil - b \leq c' - b < \lceil \Phi a \rceil - a.
	\end{align*}
	This is a contradiction of the hypothesis $b \geq a$.
\end{proof}

\moduloPpos*
\begin{proof}
	Suppose by contradiction that there exists $m$ such that there is no \P-position $(a,b,c)$ with $c \equiv m \bmod a$. By taking the unique $c$ such that $\lceil \Phi b \rceil \leq  c \leq \lceil \Phi b \rceil + a-1$, the position $(a,b,c)$ is an \N-position. For this position, the move to $(0,b,c)$ is a losing move. Moreover, there is no winning move to a $(a,b,c')$ by hypothesis ($c$ and $c'$ being in the same congruence class modulo $a$), and using the same kind of argument as in the proof of Theorem~\ref{th:ineqPpos}, we can show that there is no possible move to a \P-position of the form $(a,b-qa,c)$.
\end{proof}
	
\end{appendices}

\end{document}